\newtheorem{theorem}{Theorem}[section]
\newtheorem{lemma}[theorem]{Lemma}
\newtheorem{assumption}[theorem]{Assumption}
\newtheorem{remark}[theorem]{Remark}
\newtheorem{corollary}[theorem]{Corollary}
\newtheorem{proposition}[theorem]{Proposition}
\newtheorem{definition}[theorem]{Definition}
\newcommand{\given}{\,|\,}
\newcommand{\prob}[2][]{\text{\bf Pr}\ifthenelse{\not\equal{}{#1}}{_{#1}}{}\!\left[{\def\givenn{\middle|}#2}\right]}
\newcommand{\expect}[2][]{\mathbb{E}\ifthenelse{\not\equal{}{#1}}{_{#1}}{}\!\left[{\def\givenn{\middle|}#2}\right]}
\newcommand{\expecto}[2][]{\mathbb{E}\ifthenelse{\not\equal{}{#1}}{^{#1}}{}\!\left[{\def\givenn{\middle|}#2}\right]}
\newcommand{\E}{\mathbb{E}}
\newcommand{\tparen}{\big}
\newcommand{\tprob}[2][]{\text{\bf Pr}\ifthenelse{\not\equal{}{#1}}{_{#1}}{}\tparen[{\def\given{\tparen|}#2}\tparen]}
\newcommand{\texpect}[2][]{\text{\bf E}\ifthenelse{\not\equal{}{#1}}{_{#1}}{}\tparen[{\def\given{\tparen|}#2}\tparen]}
\newcommand{\sprob}[2][]{\text{\bf Pr}\ifthenelse{\not\equal{}{#1}}{_{#1}}{}[#2]}
\newcommand{\sexpect}[2][]{\text{\bf E}\ifthenelse{\not\equal{}{#1}}{_{#1}}{}[#2]}
\newcommand{\ind}[1]{\mathbbm{1}[#1]}
\newcommand{\mb}[1]{\ensuremath{\boldsymbol{#1}}}
\newcommand{\price}[1][]{p\ifx#1\empty\else^{(#1)}\fi}
\newcommand{\countsameday}[1][]{N\ifx#1\empty\else^{(#1)}\fi}
\newcommand{\patience}{\gamma}
\newcommand{\low}{\epsilon}
\newcommand{\lowprob}{q}
\newcommand{\pz}{p^{(0)}}
\newcommand{\pstar}{p_*}
\newcommand{\pzz}{p^{(0)_2}}
\newcommand{\pzKz}{p^{(K)_2}}
\newcommand{\edd}{\hat{d}}
\newcommand{\eddunbias}{\hat{d}}
\newcommand{\bias}{\mathcal{B}}
\newcommand{\wait}{\ensuremath{\text{wait}}}
\newcommand{\mcE}{\ensuremath{\mathcal E}}
\newcommand{\mcQ}{\ensuremath{\mathcal Q}}
\newcommand{\utafter}{U}
\newcommand{\rev}{\text{Rev}}
\newcommand{\drev}{r}
\newcommand{\diff}{\text{d}}
\newcommand{\ngam}{\mu}
\title{Switchback Price Experiments with Forward-Looking Demand}
\author{Yifan Wu, Ramesh Johari, 	
Vasilis Syrgkanis,
Gabriel Y. Weintraub}
\begin{document}

\maketitle

\begin{abstract}
    We consider a retailer running a switchback experiment for the price of a single product, with infinite supply. In each period, the seller chooses a price $p$ from a set of predefined prices that consist of a reference price and a few discounted price levels. The goal is to estimate the demand gradient at the reference price point, with the goal of adjusting the reference price to improve revenue after the experiment. In our model, in each period, a unit mass of buyers arrives on the market, with values distributed based on a time-varying process. Crucially, buyers are forward looking with a discounted utility and will choose to not purchase now if they expect to face a discounted price in the near future. We show that forward-looking demand introduces bias in naive estimators of the demand gradient, due to intertemporal interference. Furthermore, we prove that there is no estimator that uses data from price experiments with only two price points that can recover the correct demand gradient, even in the limit of an infinitely long experiment with an infinitesimal price discount. Moreover, we characterize the form of the bias of naive estimators. Finally, we show that with a simple three price level experiment, the seller can remove the bias due to strategic forward-looking behavior and construct an estimator for the demand gradient that asymptotically recovers the truth.
\end{abstract}

\section{Introduction}

The proliferation of online e-commerce platforms has afforded sellers an increasingly rich plethora of algorithmic approaches to optimize the pricing of their products; see, e.g., \citep{mckinsey,bain,chen2016empirical,caro2012clearance}; even brick-and-mortar retail has benefited from these advances with the introduction of digital price tags 
\citep{nicas2015now}.  A basic tool in this algorithmic toolbox for the seller is {\em price experimentation}: varying the price of a product in a controlled fashion to learn about the price sensitivity of customers---i.e., the {\em demand gradient}, or derivative of the demand function around the current price.  Ideally, if the seller can learn this demand gradient accurately, they can use the information to move prices in a revenue maximizing direction.  Numerous modern algorithmic dynamic pricing methods are founded on the ability to precisely estimate this demand gradient, ranging from simple gradient ascent to more complex policy gradient methods using deep reinforcement learning \citep{liu2019dynamic}.

\begin{figure}
    \centering
    \includegraphics[width=0.5\linewidth]{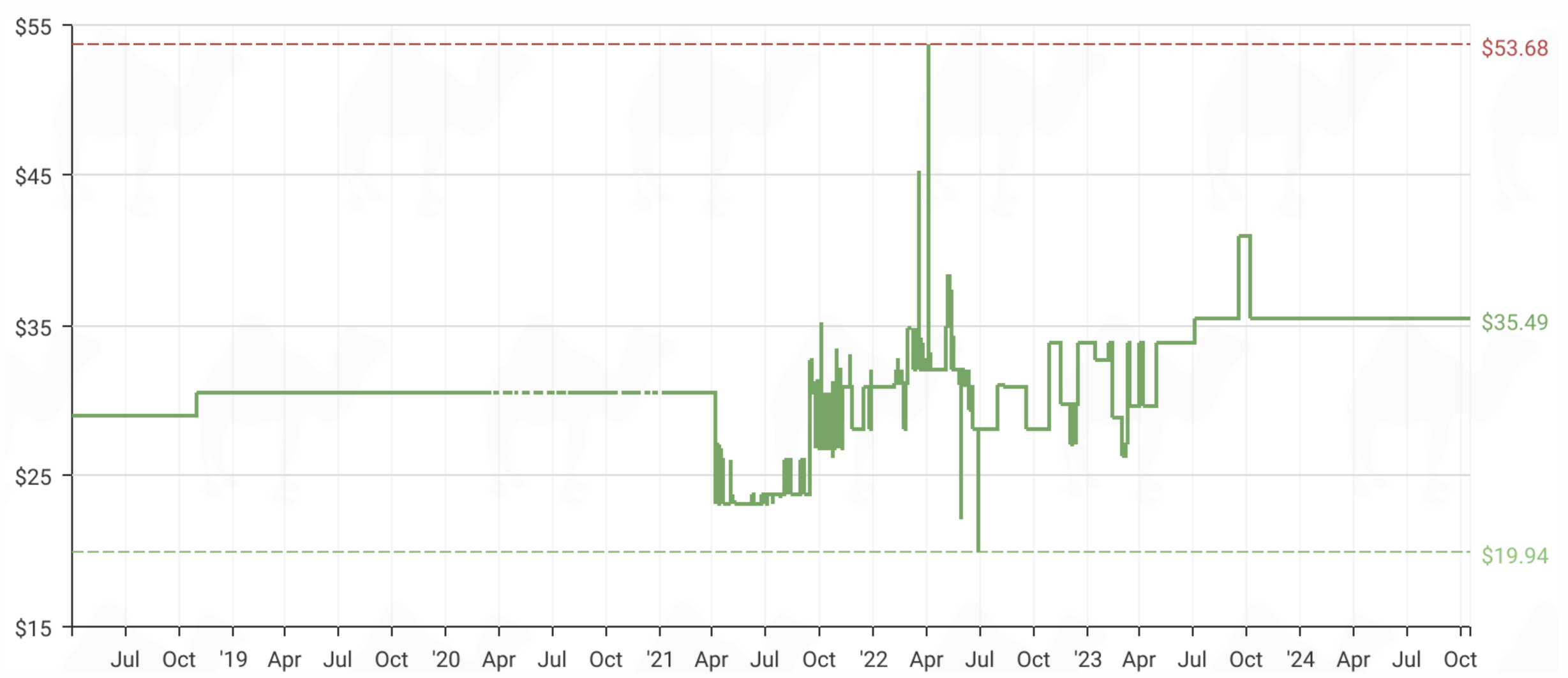}
    \caption{Price tracker for a paper towel product on Amazon. The history around October 2021 shows switching prices.  Source: \url{www.camelcamelcamel.com}.}
    \label{fig:camel}
\end{figure}

In this paper, we consider a seller of a single product who uses the following simple price experimentation approach to learning the demand gradient, common in online retail.  Starting with a current reference price of $p$, during the experiment the seller {\em switches} prices between $p$ and a slightly lower (discounted) price $p - \epsilon$.  In particular, the experiment horizon is divided into regular intervals, and in each interval with probability $q$ (resp., $1-q$), the price is set to $p - \epsilon$ (resp., $p$).  See \Cref{fig:camel} for an example from the price tracking site \url{www.camelcamelcamel.com} that illustrates such a price experiment.    (Experiments with this temporal switching behavior are called {\em switchback} experiments, and commonly used in a wide range of online platforms; see \citet{lyft-switchback, kastelman2018switchback} for details.)   The seller measures observed demand during the ``high'' price periods (when the price is $p$) with observed demand in the ``low'' price periods (when the price is $p-\epsilon$), and computes a discrete derivative to estimate the demand gradient.  In principle, if $\epsilon$ is small, this should be a reasonably accurate estimate of the local demand gradient at $p$.

Although seemingly sound in principle, this approach can be quite problematic if {\em customers} are able to anticipate the possibility that discounts might be offered due to price experimentation.  In that case, customers might {\em intertemporally substitute}, and choose to {\em wait} to purchase if they see (relatively) ``high'' prices, in anticipation of possible future ``low'' prices.  Indeed, the same price trackers like \texttt{camelcamelcamel} that reveal price experiments are the tools that customers can use to strategize in this way, a behavior we refer to as {\em forward-looking}.

It's straightforward to understand the estimation bias that arises as a result.  To be precise, we refer to the strategy where the price is fixed at $p$ (resp., $p-\epsilon$) for all time as {\em global control} (resp., {\em global treatment}).  Observe that in these two regimes with fixed prices throughout the time horizon, forward-looking customers have no reason to delay.  However, in the switchback experiment, customers who arrive to see the price at $p$ can wait to buy at a lower price $p-\epsilon$.  This behavior inflates observed treatment demand, and depresses observed control demand, biasing the resulting estimated demand gradient.

Our paper is about understanding, quantifying, and removing this bias.  We make three main contributions:
\begin{enumerate}
\item {\em We introduce a succinct yet general model in which we} quantify how forward-looking behavior impacts estimation bias.  Formally, we quantify how the bias depends on the {\em discount factor} of arriving customers, i.e., their relative patience level to wait for future lower prices.
\item {\em We show that if restricted to the data from a standard two-price switchback experiment, this bias cannot be removed.}  Formally, we show that the true demand gradient is {\em unidentifiable} if customers are forward-looking and the standard switchback design above is used.
\item {\em We show how a simple modification to the switchback design that uses \textbf{three} prices, and an associated simple estimator, can completely remove this bias.}
\end{enumerate}

To carry out our analysis, we introduce a benchmark model of forward-looking behavior.  In the model we consider, the seller faces an arrival process of forward-looking customers who form {\em rational expectations} with respect to the future price trajectory they expect to see.  In our simplest model, all customers discount future rewards with a discount factor $\gamma < 1$ (we also consider a generalization with heterogeneous discount factors).  Individual customers have unit demand, and a private value drawn from an unknown distribution; their only decision is whether or not to purchase the good in a given period at the current price, or wait.  In global treatment and global control, forward-looking behavior has no effect: if buying is rational, they will buy immediately.  But in the experiment, customers may strategically wait as discussed above.

To enable tractable analysis, we consider experiments with a random horizon length, with geometric length of mean $1/\delta$; we carry out our analysis in the limit where $\delta \to 0$.  In this setting, we are able to prove a key structural result that simplifies customer behavior.  Informally, we show that if a customer with value $v$ arrives in a ``high'' (control) price interval, then they will behave {\em as if} they were not forward-looking, but with value $v - {\cal B}$, where ${\cal B}$ is a ``shading'' term that captures the option value of purchasing in the future at a potentially lower price.  Notably, ${\cal B}$ only depends on the buyer's patience level and $q$, as well as the expected future discount over the high price, in a multiplicatively separable manner; crucially, this shading amount does {\em not} depend on the buyer's value.   This structural characterization is essential to our analysis.

Our {\em first} key result exploits this structural characterization to exactly quantify the estimation bias of the two-price switchback experiment and associated canonical estimator.  In particular, we show that the resulting estimator is a multiple of the true demand gradient with a multiplicative factor that depends on the patience level $\gamma$.  This factor depends on whether the seller can {\em distinguish} which sales are from new customer arrivals (``same-day'' sales), and which are from prior arrivals.  When the seller can distinguish these categories of customers, the factor is  $1 + \frac{\gamma}{1-\gamma}q$; if it cannot, the factor increases to $\frac{1}{q}+\frac{\gamma}{1-\gamma}$. 

This result suggests it should be easy for the seller to avoid the estimation bias: all it needs to do is correct for the multiplicative factor.  But if the seller does not know $\gamma$, then a fundamental problem arises, even if the seller can distinguish same-day sales from delayed sales.  In particular, given a stream of data from the two-price switchback experiment, our {\em second} key result is that the seller cannot identify the true demand gradient in a two-price switchback experiment.  Informally, we can see this issue as follows: suppose the seller observes that in the data, customers are delaying purchases at a relatively high rate.  The seller can't know if this is primarily because customers are highly price sensitive, and not very patient; or primarily because they are not too price sensitive, but they are very patient.  In other words, a {\em range} of combinations of true demand gradient and discount factor $\gamma$ can give rise to the {\em same} distribution on experimental data. 

So what is the seller left to do?  Our {\em third} main contribution is to show that a simple modification of the switchback experiment to use {\em three} prices, and an associated simple estimator, suffice to completely remove the preceding estimation bias.  In particular, suppose the seller randomly selects between the control price $p$ and {\em two} treatment prices $p-\epsilon, p-2\epsilon$.  For simplicity suppose that both treatment prices are chosen with probability $q$ each, and the control price is chosen with probability $1 - 2q$, and also suppose the seller can distinguish between same-day and delayed sales.

Let $\Delta_1$ be the difference in average same-day sales between periods with price $p-\epsilon$ and $p-2\epsilon$; and let $\Delta_2$ be the difference in average same-day sales between periods with price $p$ and $p-\epsilon$.  These are both discrete estimates of the same demand gradient, and so if buyers were not forward looking, both differences should be essentially the ``same'' (within $O(\epsilon^2)$ error).  Any larger error must be due to forward-looking behavior of buyers.  

We show the remarkable fact that the following simple estimator is asymptotically (as $\epsilon \to 0$ and $\delta \to 0$) {\em unbiased} for the true demand gradient:
\begin{equation}\label{eqn:intro-debias}
\hat{d} = \frac{\Delta_1 - (\Delta_2 - \Delta_1)}{\epsilon}. 
\end{equation}
This estimator uses the difference of differences $\Delta_2 - \Delta_1$ to ``debias'' the (naive) difference $\Delta_1$, taking advantage of the fact that if $\Delta_2 - \Delta_1$ is of $O(\epsilon)$, it must be due to a distortion introduced by forward-looking buyers.  This result uses the same structural characterization of buyer behavior in the limit $\delta \to 0$: since the shading factor ${\cal B}$ depends in a multiplicatively separable way on the discount factor $\gamma$ and the expected future price discount, we can use the difference in $\Delta_2 - \Delta_1$ to remove the bias introduced by forward-looking behavior.

The resulting three-price switchback design and estimator are remarkably easy to implement and use; indeed, we encourage practitioners to employ this small change to standard practice, given the substantial robustness gained.  Despite the simplicity of the estimator, the analysis is more involved than the intuition described above: in fact, we show asymptotic unbiasedness in a fairly general setting, where the seller may not be able to distinguish same-day and delayed sales, and where buyers have heterogeneous and unknown patience levels that are arbitrarily correlated with their values. When the seller can only observe total sales and not sales on the same day, the form of the estimator is slightly different, albeit equally simple, than the form in Equation~\eqref{eqn:intro-debias}.

\subsection{Related Work}

\paragraph{Forward-Looking Demand and Intertemporal Substitution} Demand estimation in the presence of forward-looking demand and intertemporal substitution has a long history in economics and in particular in industrial organization and econometrics. Some notable works are those of \citep{diewert1974intertemporal,blundell1989microeconometric,eichenbaum1990estimating,hendel2006measuring,hendel2013intertemporal,nevo2016usage,ching2020identification,gandhi2021empirical,birge2021markdown,akca2015identifying,nevo2001measuring,aguirregabiria2013recent,nevo2002manufacturers,hartmann2006intertemporal,chen2020pricing}. Another line of work in operations management considers consumers' forward-looking behavior, mostly from a modeling perspective (e.g., \citet{swinney2011selling, su2007intertemporal}, etc.), but also from an empirical one \citet{li2014consumers}. 

\citet{su2007intertemporal} and \citet{li2014consumers} are closest to our paper. \citet{su2007intertemporal} designs revenue-maximizing pricing policy with a continuum of buyers. Unlike our work which makes no assumption on buyers' private type space, \citet{su2007intertemporal} considers  a binary setting - binary buyer values and binary buyer patience levels, under which a characterization of the optimal policy is possible.

The idea of using price variation to separately identify demand responses with forward-looking behavior is reminiscent of previous research on forward-looking consumers, which relied on observational rather than experimental data \citep{li2014consumers}. However, due to the observational nature of the data in \citep{li2014consumers}, more complex assumptions are required both on the structural model of buyer behavior as well as on the existence of particular forms of price trajectories over time; rather than solely assuming random price variation. Since our goal is only estimating demand gradient around a particular point, and since we assume random price assignment over time, we can achieve identification with fewer assumptions.

\paragraph{Interference in Pricing Experiments} Several recent papers study interference in market experiments and in particular as it relates to pricing experiments \citep{blake2014marketplace,wager2021experimenting,munro2021treatment,li2023experimenting,johari2022experimental,holtz2020reducing,farias2022markovian,le2023price}. Several types of interference patterns have been considered, such as interference via equilibrium effects in price experiments in two-sided platform \cite{wager2021experimenting,munro2021treatment}, or interference due to congestion effects \cite{li2023experimenting,johari2022experimental,farias2022markovian} or interference due to substitution effects across products \cite{dhaouadi2023price}. However, no prior work analyzed intertemporal interference due to forward-looking behavior and intertemporal substitution; which is the topic of our work. 

From a technical perspective, the closest to our work is that of \citet{dhaouadi2023price}. \citet{dhaouadi2023price} present an analysis of estimation bias due to interference stemming from substitution effects across similar products in a pricing experiment. They consider a \textit{single period} of experimentation, with applications to random treatment on both sides of a two-sided market. The bias in their model can be corrected through a single experiment, supplemented by pre-experimental data.
In contrast, we examine a scenario where interference emerges \textit{across time periods} in a switchback experiment. We offer a novel characterization of bias for the demand gradient, which takes a multiplicative factor on the true gradient.
 Furthermore, the method to correct bias in their paper, %
no longer works when the interference effect happens across different time periods. Instead, we show that this bias can be corrected via an experiment with three prices.

\paragraph{No-regret reserve price optimization with strategic buyers} Our work is also related to the line of work on dynamic pricing and regret minimization with strategic buyers \citep{mohri2014learning,mohri2014optimal,mohri2016learning,mohri2015revenue,tang2017reinforcement,chen2022bayesian,feldman2016online,liu2024contextual,mashiah2023learning, chen2018robust}. These works typically assume a stationary i.i.d. stochastic demand. The key intuition of the solutions underlying most of the work in this line is that a seller should change prices infrequently, so that buyers are not patient enough to anticipate the price discount that will occur when the price gets updated. However, in a constantly changing demand system, this can lead to very infrequent adaptation and inaccurate learning. Moreover, our goal is in the analysis of a widely used practice in price experimentation, namely switchback experiments. Perhaps closest to our work is \citet{chen2018robust}, which considers online revenue maximization with randomly arriving customers. Despite the differences discussed above, \citet{chen2018robust} also considers selling to forward-looking buyers who discount future utility with a patience level.

\paragraph{Dynamic pricing and demand learning} Our work is also related to the line of work in operations research on dynamic pricing with an unknown demand \citep{agrawal2021dynamic,keskin2017chasing,keskin2014dynamic}.  Compared to \citet{agrawal2021dynamic, keskin2014dynamic} which make structural assumptions about the demand and  \citet{keskin2017chasing} which assumes the demand variation is bounded, our work captures more general demand-generating processes only with locally approximable assumptions.

\section{Model and Preliminaries}

\subsection{Market Model}\label{sec:market}

\label{sec:model and estimand}
  We consider an infinite discrete time horizon setting, where at each time period $t\in \{1, 2, \ldots\}$, a unit mass of buyers arrives in a market for a single product with infinite supply of units and a posted per-unit price $p_t$ set by the seller. Each buyer who arrives has a value $v$ for the product and has unit-demand.
 The value of an arriving buyer remains fixed throughout his stay on the market. Moreover, the distribution of values of the arriving unit mass of buyers at each period is denoted by $F_t$. The distribution of value $F_t$ is randomly chosen at each period, based on some arbitrary stochastic process.

\paragraph{Forward-Looking Buyer Behavior}
At each period, each buyer forms a belief about future prices and decides between waiting or buying. The buyer discounts future utility with a discount factor $\gamma\in[0, 1)$, which we refer to as the buyer's patience level. %
Define $\ngam_t(\gamma)$ as the marginal density of patience level $\patience$ at period $t$.  The buyer derives utility if they end up buying at period $t+\tau$ at price $p$, which is quasi-linear and takes the form:
\begin{equation*}
    u_\tau(v, p) = \gamma^{\tau} (v - p)
\end{equation*}
The parameters $(v, \gamma)$  as well as the belief about the future prices can vary across buyers (subject to a consistency constraint which we will introduce in \Cref{sec:prelim-experiment}) and will be referred to as the buyer's type, which in this case is multidimensional. Moreover, the type of the buyers is arbitrarily distributed at each period, subject to the marginal distribution of values being $F_t$. Other than that the parameters that constitute the type of a buyer (e.g. patience level, value) can be arbitrarily correlated.

In a given period $t$, with posted price $p_t$, the buyer has two possible actions:
\begin{itemize}
    \item {\bf Purchase.} Purchase the item in period $t$, in which case he gains utility $v-\price_t$ and exits the market, or
    \item {\bf Wait.} Wait for a future lower price. 
\end{itemize}
The buyer's strategy $s$ is a function of the historical prices $h_t = (p_1,\dots, p_{t-1}, p_t)$ (including current price $p_t$), with $s(h_t) = 1$ representing a purchase, and $s(h_t) = 0$ representing waiting. Moreover, the buyer leaves after he makes a purchase, i.e.\ $s(h_t) = 0$ if $s(h_i) = 1$ for some $h_i$ ($i<t$), a sub-history of $h_t$. 
The buyer calculates expected future utility from waiting with strategy specified by $s$:
    \begin{equation*}
         \expect{\sum_{t'>t}\gamma^{t'-t}(v - \price_{t'})s(h_t')}, 
    \end{equation*}
    where the expecation is taken over the buyer's beliefs about the stochastic future evolution of prices.
The buyer selects a strategy that maximizes expected utility. Note that there may exist buyers who never buy the product. For example, a buyer with value $v$ lower than any of the posted prices will never buy.

Throughout the paper, we assume that the buyer's patience level is bounded away from $1$. If, instead, the buyer has patience level $\gamma = 1$ and derives the same future utility as the current period, the buyer who can afford the lowest price over an infinite time horizon will always wait, regardless of the expected waiting time. 
\begin{assumption}[Bounded Patience]\label{ass:bounded_patientce}
    There exists a constant $\Gamma>0$ such that $\gamma\leq 1-\Gamma$.
\end{assumption}

\subsection{Estimands}

  We define the static demand $D_t(\price)$ as  the mass of arriving buyers who would have purchased immediately at period $t$, in a setting where the price is fixed at $\price$ for all time periods -- meaning there is no price variation.
  Note that $D_t(\price)$ is the survival function of the buyers' valuation distribution for the item, i.e. $D_t(\price)=1 - F_t(\price)$. We make the following benign assumptions about the stochastic process that generates the daily demand functions: 
  
\begin{assumption}[Existence and Differentiability of Limit Aggregate Demand]\label{ass:limit-exists}
  The average static demand converges almost surely and write
\begin{align}
    D(\price) \triangleq~& \lim_{T\to \infty}\frac{1}{T}\sum_{t=1}^T D_t(\price) & \text{a.s.}
\end{align}
The limit $D(p)$ of the average static demand is a twice differentiable function.
\end{assumption}

\paragraph{Demand Gradient} For conciseness, we refer to the limit function $D(p)$ as the \emph{average static demand}. The main estimand that the seller is interested in identifying is the gradient of the average static demand at some reference price $p_*$, i.e.,
\begin{equation}
    d(\price_*) \triangleq D'(\price_*).
\end{equation}
For most of the paper except in \Cref{sec:static vs dynamic}, we refer to the average static demand as the demand, and the average static demand gradient as the demand gradient. We assume that both the demand gradient and the demand hessian is bounded. Moreover, for the construction of a uniformly asymptotically unbiased estimator, we assume that the demand function can be uniformly second-order approximated  at $\pstar$.

\begin{assumption}[Uniform Approximable  Demand]\label{ass:bounded-der}
    There exist constants $\epsilon_0, d_0, \rho$ such that $|d(\pstar)|<d_0$, and for any $p$ in an $\epsilon_0$-neighborhood of $\pstar$: $|d'(p)|<\rho$.
\end{assumption}
For our positive result on an asymptotically unbiased estimator, we need to make further regularity assumptions on the conditional demand gradient function, conditional on the patience level. Denote with $F_{t\mid \gamma}$ to be the CDF of the buyer values, conditional on the patience level $\gamma$ and $D_{t\mid\gamma}(p) = 1 - F_{t\mid \gamma}(p)$ to be the corresponding conditional demand function. Moreover, we let $\mu_t(\gamma)$ to be the distribution of patience levels of arriving customers on day $t$. We make the following regularity assumption on the conditional demands conditioning on the buyer having patience $\gamma$:
\begin{assumption}[Regularity Conditions for Conditional Demand]\label{ass:conditional} The conditional demand gradient at each period $d_{t|\gamma}(\pstar) = D_{t|\gamma}'(\pstar)$, exists at each period and is absolutely bounded by a constant $d_0$ for each $t\in \mathbb{Z}_+$ and $\gamma\in [0, 1-\Gamma]$ almost surely. Moreover, for a constant $\epsilon_0>0$, for any $p$ in an $\epsilon_0$-neighborhood of $\pstar$: $d_{t|\gamma}'(p) < \rho$. The density $\mu_t(\gamma)$ is bounded by a constant $H>0$, for each $t\in \mathbb{Z}_+$ and $\gamma\in [0, 1-\Gamma]$. Moreover, the average conditional static demand, weighted by the mass of arriving customers with patience $\gamma$, converges almost surely and write:
\begin{align}
    D_{\gamma}(\price) \triangleq~& \lim_{T\to \infty}\frac{1}{T}\sum_{t=1}^T D_{t\mid \gamma}(\price) \mu_t(\gamma) & \text{a.s.}
\end{align}
The limit function $D_\gamma(p)$ is twice differentiable, with sub-population demand gradient:
\begin{align}
    d_{\gamma}(p) \triangleq D_\gamma'(p) = \lim_{T\to \infty} \frac{1}{T} \sum_{t=1}^T d_{t\mid \gamma}(p) \mu_t(\gamma).
\end{align}
\end{assumption}

\paragraph{Revenue Gradient} The seller is also interested in infinitesimal adjustments to the price around $p_*$, so as to improve upon the average static price revenue function, defined as:
\begin{align}
    \rev(\pstar) = \pstar D(\pstar)
\end{align}
For this purpose, the seller is also interested in estimating the gradient of the revenue function:
\begin{align}
    \drev(p_*) \triangleq \rev'(\price_*)
\end{align}
Note that the revenue gradient can be derived directly from the demand and the demand gradient. 
\begin{align}
    \drev(\pstar) = \pstar\cdot d(\pstar) + D(\pstar).
\end{align}
As will become evident in later sections, identifying the demand function $D(p_*)$ is easy. Thus, we will mainly focus on estimating the static demand gradient $d(p_*)$.

\subsection{Switchback Experiment Design}
\label{sec:prelim-experiment}

Trivially, the seller can identify, in the limit as $\delta\to 0$, the static demand and the static revenue at a price point $p_*$, by simply posting the price and measuring the average mass of purchases at that price point (since the stochastic demand process converges in the limit almost surely). However, to identify the demand gradient (equiv. revenue gradient), the buyer needs to experiment with prices and set prices around the price point $p_*$. 

\paragraph{Switchback Experiment} A switchback experiment is characterized by a set of $K$ prices $S = \{ \pz, p^{(1)}, \dots, p^{(K)}\}$, with $p^{(i)}\in [0, \bar{p}]$ for some constant $\bar{p}$, together with associated probabilities $\mb{q} = (q_0, q_1, \ldots, q_K)$, such that $\sum_i q_i = 1$. As a convention prices are sorted in decreasing value: $\pz> p^{(1)}>\dots> p^{(K)}$. At each time $t$, the experiment period ends with probability $\delta$. 
If the experiment does not end, the seller chooses a price $p_t$ from the set $S$ according to the probabilities $\mb{q}$. Prices are chosen i.i.d.~each time period. We will refer to the tuple $(S, \mb{q}, \delta)$ as the switchback design. Since we focus on estimating the demand gradient in the limit as $T\to \infty$, we mainly consider the limit where the experiment period is long enough, i.e.\ $\delta\to 0$. 

\paragraph{Local Price Experiments} 
Since we are interested in identifying the demand gradient at a reference price point $p_*$, we focus mainly on {\em local price experiments} where the prices in $S$ are {\em bounded} within an $\bar{\low}$ range of the reference price $\pz=p_*$.  
We will refer to $\bar{\low}$ as the {\em price perturbation bound}.  
\begin{equation}\label{assumption: eps bound}
   \bar{\epsilon} \triangleq \pz - p^{(K)} = p_*-p^{(K)}.
\end{equation} 
\noindent We mainly consider the case where the reference price is $\pz$, i.e., alternative prices are ``discounts'' relative to the reference price $\pz$. This assumption is without loss of our main results,\footnote{Our main negative result on the insufficiency of two-price switchback experiment generalizes to the setting where the reference price is the higher price and our positive results obviously remain true if one is given more freedom in the choice of experimental price levels.} but matches the reality of price experiments, which are typically implemented via random small discounts, primarily driven by the intuition that customers should not be worse off when treated. 
Moreover, for all of our results we will only be analyzing two-price or three-price experiments, i.e., $K = 2$ or $3$, which also matches the small number of distinct discount levels that arise in retail practice \citep{jha2024can} (e.g., 10\% or 20\% discount).

\paragraph{Rational Expectations} After a switchback experiment starts,   
we assume the buyers form rational expectations about the experiment setup, including the prices in $S$ and associated probabilities $\mb{q}$, and the experiment ending probability $\delta$.
\begin{assumption}[Rational Experimental Price Expectations]\label{ass:rational}
    The buyer's beliefs about future prices are consistent with the price distribution during the experimental period, as described by $(S, \mb{q}, \delta)$.
\end{assumption}
\noindent Note that we do not assume buyers form rational expectations about post-experimental prices, since the seller may adjust post-experimental prices according to the experimental results in a manner that is hard for the buyer's to accurately anticipate. Therefore, at period $t$, each buyer is also endowed with a private type $\phi_t$, the belief about the post-experimental prices at period $t$. This belief $\phi_t$ describes an infinite sequence of conditional probabilities: for each $\tau\in \mathbb{N}_{+}$, conditioning on the experiment ends at $t+\tau$, the distribution $\phi_{t+\tau|t}$ is over post-experiment prices.   In the limit where the experiment horizon is long enough (i.e.\ $\delta\to 0$), the buyers' beliefs about post-experimental prices has a vanishing effect on their strategy within the experiment. 

\begin{remark}[Static Demand vs. Demand under Constant Experimentation] 
In practice, the seller might want to use the results at the end of one switchback experiment to update the reference price, while continuing to run switchback experiments to continuously monitor the demand gradient. In such a setting the seller is interested in the demand gradient that is related to an intervention of the form: what would happen if I uniformly change the prices of my current experiment by adding or subtracting a small amount $\alpha$. The demand gradient associated with such an infinitesimal intervention is seemingly different than the static demand gradient. However, we show in Section~\ref{sec:static vs dynamic} that the demand gradient associated with such an intervention is, perhaps surprisingly, the same as the static demand gradient in the asymptotic regime that we consider. Hence, the static demand gradient is the relevant quantity that the seller needs to estimate, when performing infinitesimal adjustments, while constantly experimenting.
\end{remark}

\subsection{Observable Data and Estimators}

\paragraph{Data} At the end of each day, the seller observes the realized demand $C_t$, that is, the mass of sales realized in period $t$. In some of our analysis, we will also assume that the seller observes the {\em same-day} sales $N_t$, i.e., the part of the mass of the realized demand $C_t$ that arises from buyers who arrived at period $t$.

\paragraph{Estimators}  An  {\em estimator} $\hat{d}(\cdot)$ for the demand gradient $d(p_*)$ is a function of the realized sales mass $\mb{C}$, and, when observed, of the same-day sales mass $\mb{N}$. An estimator can also depend on the setup parameters of the local experiment, i.e. $S, \mb{q}, \delta$, but we will suppress this dependence. Moreover, where obvious from context, we will also suppress the functional dependence on $\mb{C}$ and $\mb{N}$, and simply refer to $\hat{d}(p_*)$ as ``the estimator'' at reference price $p_*$. 

Throughout, we focus on estimators that are unbiased in the limit of a vanishingly small price perturbation bound and ending probability of the experiment, defined as follows.

\begin{definition}[Asymptotically Unbiased Estimator]
An estimator $\hat{d}(p_*)$ is asymptotically unbiased if it converges uniformly to the demand elasticity $d(p_*)$. I.e.\ for each $\rho>0$, there exists $\epsilon_0>0$ such that for all price perturbation bound $\epsilon<\epsilon_0$ and for all demand-generating processes, it converges:
    \begin{equation}
        \bigg|\lim_{\delta\to 0}\expect{\hat{d}(p_*)} - d(p_*)\bigg| <\rho.
    \end{equation}
\end{definition}

\noindent Note that the seller does not know the distribution of the values and patience levels or the post-experimental beliefs of the buyers. Hence, the estimator cannot be a function of these quantities.

\section{Characterization of Buyer Behavior}\label{sec:buyer}

In this section, we analyze the waiting behavior of buyers for any switchback price experiment under the discounted quasilinear utility model defined in Section~\ref{sec:market} and the rational experimental price expectations (Assumption~\ref{ass:rational}).

In order to define the strategy of a buyer, it is useful to introduce the following quantities related to the switchback design. We denote with $P$ a random price drawn from the distribution of experimental prices. Then $\mcQ(p) := \Pr(P\leq p)$ is the probability of receiving a smaller price than $p$, and $\mcE(p)=\expect{p - P\mid P\leq p}$ is the expected discount conditional on receiving a smaller price than $p$. Note that $\mcQ(p)\mcE(p) = \expect{(p - P)_+}$.

\begin{lemma}[Buyer Waiting-Bias Behavior]
\label{lem:waiting behavior}
Consider a switchback experiment with parameters $(S, \mb{q}, \delta)$. At period $t$, for a buyer with value $v$ and patience level $\gamma$, according to the buyer's private belief $\phi_{t+\tau|t}$ about post-experimental prices conditioning on the experiment ends at period $t+\tau$, we write $\utafter(v, \gamma, \phi_{t+\tau|t})$  as the expected post-experimental utility conditioning on experiment ends at $t+\tau$. Define the buyer's continuation post-experimental utility as
\begin{align*}
    u_{\text{post}}(p, \patience, \phi_t):=& \sum_{\tau=1}^\infty \patience^{\tau} (1-\delta)^{\tau-1} \delta (1 - \mcQ(p))^{\tau-1}\utafter(v, \patience, \phi_{t+\tau|t}),
\end{align*}

Under Assumption~\ref{ass:rational}, on each day $t$, a buyer that has not purchased yet, accepts the current price $p_t$ if and only if
 \begin{align*}
     v \geq p_t + \frac{\patience(1-\delta)}{1-\patience(1-\delta)} \, \mcQ(p_t)\mcE(p_t) + u_{\text{post}}(p, \patience, \phi_t).
 \end{align*}
\end{lemma}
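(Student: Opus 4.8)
The plan is to characterize the buyer's optimal stopping problem directly and show the stated threshold is exactly the indifference point. Fix a buyer of type $(v,\gamma,\phi_t)$ at period $t$ who has not yet purchased. By the principle of optimality, the buyer compares the payoff from purchasing now, $v-p_t$, against the continuation value $W(p_t)$ of waiting one period and then behaving optimally. The first step is to pin down the structure of the optimal continuation strategy. I would argue that, because the buyer's belief about experimental prices is stationary (Assumption~\ref{ass:rational}: prices are i.i.d.\ from $(S,\mb q)$ each period, experiment ends w.p.\ $\delta$), the optimal \emph{within-experiment} policy is a threshold policy on the posted price: there is a cutoff price $\bar p(v,\gamma,\phi_t)$ such that the buyer accepts price $p$ iff $p\le \bar p$. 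Monotonicity is intuitive (lower prices are weakly more attractive to accept, and delaying is costly by the factor $\gamma(1-\delta)$), and can be established by a standard interchange/coupling argument on the value function.

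Given the threshold structure, the second step is to write a fixed-point (Bellman) equation for the continuation value at the indifference price. Suppose the buyer is indifferent at posted price $p$, i.e.\ $v - p = W(p)$ where $W(p)$ is the value of waiting. Decompose $W(p)$: with probability $\delta$ the experiment ends (contributing the discounted post-experimental term, which after unrolling the geometric recursion over $\tau$ is exactly $u_{\text{post}}(p,\gamma,\phi_t)$ as defined); with probability $1-\delta$ the experiment continues and a fresh price $P\sim(S,\mb q)$ is drawn. If $P\le p$, a buyer whose threshold is $p$ (the indifference price) accepts, getting $v-P$; if $P>p$, the buyer is again at a state weakly worse than the indifference state, and — here is the key simplification — by the threshold property the buyer will wait and remain at value exactly $W(\,\cdot\,)$ evaluated at the \emph{new} price, but since the only states with posted price $>p$ are ones where the buyer still waits, we can fold all of them into ``continue with continuation value $v-p$'' using the indifference relation itself. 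Concretely, the self-consistent equation becomes
\begin{equation*}
    v - p \;=\; \gamma(1-\delta)\Big(\mcQ(p)\,\expect{v-P \mid P\le p} + (1-\mcQ(p))\,(v-p)\Big)\;+\;u_{\text{post}}(p,\gamma,\phi_t).
\end{equation*}
Rearranging, $(v-p)\big(1-\gamma(1-\delta)(1-\mcQ(p))\big) = \gamma(1-\delta)\,\mcQ(p)\big(\expect{v-P\mid P\le p} - (v-p)\big) + u_{\text{post}}$, and since $\expect{v-P\mid P\le p} - (v-p) = \expect{p-P\mid P\le p} = \mcE(p)$, we get $(v-p)\big(1-\gamma(1-\delta)\big) = \gamma(1-\delta)\,\mcQ(p)\mcE(p) + u_{\text{post}}(p,\gamma,\phi_t)$, which is precisely the claimed threshold condition after dividing by $1-\gamma(1-\delta)$ (positive by Assumption~\ref{ass:bounded_patientce}). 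The ``if and only if'' then follows from monotonicity of $v - p - W(p)$ in $v$: accepting is optimal exactly when $v$ exceeds the threshold.

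The main obstacle is making the folding argument in the continuation-value decomposition rigorous, in particular justifying that when the redrawn price $P$ exceeds $p$ the buyer's future payoff can be replaced by $v-p$. The subtlety is that the buyer's \emph{value} function at a higher posted price is not literally $v-p$; rather, one must argue that at the indifference point the buyer is on the boundary of the acceptance region and that the continuation value is a fixed point of the one-step operator restricted to the ``wait'' branch. I would handle this by defining $W(p)$ as the value of the optimal policy starting from a state where the current price is $p$ and the buyer has not purchased, proving $W$ is non-increasing in $p$ and that $v - p - W(p)$ is non-increasing in $p$ (so the acceptance set is an interval $[\![\text{price}\le \bar p]\!]$), and then verifying that $W(\bar p)$ solves the displayed fixed-point equation: on the continue branch with $P>\bar p$ the buyer again faces state $P$ with value $W(P)\le W(\bar p) = v - \bar p$, but optimality of \emph{stopping at $\bar p$-or-below} together with the i.i.d.\ redrawing lets one show the chain of inequalities collapses to equality at the threshold. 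A secondary (routine) technical point is checking convergence of the infinite series defining $u_{\text{post}}$ and the geometric sums, which is immediate from $\gamma(1-\delta)<1$ and $1-\mcQ(p)\le 1$.
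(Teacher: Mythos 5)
Your overall architecture (threshold policy within the experiment, an indifference price, geometric accounting of the waiting value) is the same as the paper's, but two concrete steps do not go through as written. First, your fixed-point equation mixes a one-step decomposition with the fully unrolled post-experiment sum. In a one-step decomposition of the waiting value, the termination branch contributes $\gamma\delta\,\utafter(v,\patience,\phi_{t+1|t})$, not $u_{\text{post}}(p,\patience,\phi_t)$: the latter is already the sum over all future termination dates, each weighted by $(1-\delta)^{\tau-1}(1-\mcQ(p))^{\tau-1}$, i.e.\ it already encodes the ``price stays above $p$ and the experiment keeps running'' histories that your $(1-\mcQ(p))(v-p)$ folding term is also supposed to carry. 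Plugging $u_{\text{post}}$ with coefficient $1$ into the one-step recursion therefore double-counts, and your rearrangement does not ``precisely'' give the stated condition: it yields $u_{\text{post}}/(1-\patience(1-\delta))$ in place of $u_{\text{post}}$ (the exact unrolled accounting gives yet another coefficient, $\bigl(1-\patience(1-\delta)(1-\mcQ(p))\bigr)/(1-\patience(1-\delta))$; these all agree only up to $O(\delta)$). Relatedly, your folding step values the ``redraw $P>p$'' branch by today's indifference relation, but the post-experimental beliefs $\phi_{t+\tau|t}$ are indexed by the termination date, so the waiting problem is not literally stationary across periods; the paper avoids this circularity entirely by computing $u_{\text{wait}}(p)$ for an \emph{arbitrary fixed} threshold-$p$ policy as an explicit geometric sum over the i.i.d.\ price draws, with no appeal to a self-consistent one-step equation.

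Second, and more substantively, your closing step does not deliver the ``if and only if'' at every posted price. The lemma characterizes acceptance at an arbitrary $p_t$ via an expression involving $\mcQ(p_t)\mcE(p_t)$ evaluated \emph{at that price}, whereas your indifference identity holds only at the optimal threshold; at other prices the relevant comparison is $v-p_t$ versus the optimal waiting value, which is a different inequality from the stated one. Monotonicity of $v-p-W(p)$ in $v$ only gives a cutoff in $v$ for a fixed price; it does not transfer the identity at $p_\theta$ to the inequality at $p_t\neq p_\theta$. The missing ingredient, which is exactly what the paper uses in both directions, is monotonicity of $p\mapsto \mcQ(p)\mcE(p)=\expect{(p-P)_+}$ in $p$ (together with $u_{\text{wait}}(p_t)\leq u_{\text{wait}}(p_\theta)$ for the ``only if'' direction): this is what lets one conclude that a buyer who declines $p_t$ satisfies $v< p_t+\frac{\patience(1-\delta)}{1-\patience(1-\delta)}\mcQ(p_t)\mcE(p_t)+u_{\text{post}}$, and symmetrically for acceptance. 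Without this cross-price comparison your argument only establishes the characterization at the indifference price itself.
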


Lemma~\ref{lem:waiting behavior} provides an intuitive, albeit crucial result in our analysis, characterizing  buyer behavior. In the limit as $\delta\to 0$, essentially, a buyer with value $v$, when faced with a price $p_t$, will be as if their value is slightly lower $v - \bias$, where the \emph{option value of waiting} term $\bias$ is a function of the price level and the buyer's patience level. Now, for all patience and price levels, the behavior will always be so that buyers seem to have an additive shaded value given by the option value of waiting when deciding whether to accept or not the current price. We formalize this in the following corollary.

\begin{corollary}
\label{corollary: bias in the limit}
Suppose Assumptions~\ref{ass:bounded_patientce} and \ref{ass:rational} hold. 
Then in the limit when $\delta\to 0$, the buyer accepts a price $p_t$ at period $t$, when $v \geq p_t + \bias$, where $\bias$ is a bias induced by the waiting behavior and satisfies:
    \begin{equation*}
        \bias = \frac{\patience }{1-\patience} \, \mcQ(p_t)\mcE(p_t) + O(\delta).
    \end{equation*}
\end{corollary}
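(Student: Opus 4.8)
The plan is to derive this as a direct specialization of Lemma~\ref{lem:waiting behavior} in the limit $\delta \to 0$. Lemma~\ref{lem:waiting behavior} states that a buyer accepts $p_t$ iff
\[
v \geq p_t + \frac{\patience(1-\delta)}{1-\patience(1-\delta)}\,\mcQ(p_t)\mcE(p_t) + u_{\text{post}}(p, \patience, \phi_t),
\]
so the whole task reduces to analyzing the two $\delta$-dependent terms on the right-hand side. I would handle them one at a time.

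For the first term, I would expand $\frac{\patience(1-\delta)}{1-\patience(1-\delta)}$ around $\delta = 0$. Writing $g(\delta) = \frac{\patience(1-\delta)}{1-\patience(1-\delta)}$, we have $g(0) = \frac{\patience}{1-\patience}$, and since $g$ is smooth in $\delta$ on a neighborhood of $0$ with derivative bounded uniformly in $\patience$ thanks to Assumption~\ref{ass:bounded_patientce} (which keeps $1-\patience \geq \Gamma$ bounded away from $0$, so the denominator never vanishes), a first-order Taylor expansion gives $g(\delta) = \frac{\patience}{1-\patience} + O(\delta)$ with the constant in $O(\delta)$ depending only on $\Gamma$. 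Multiplying by $\mcQ(p_t)\mcE(p_t)$, which lies in $[0,\bar p]$ (bounded), preserves the $O(\delta)$ error, yielding $\frac{\patience}{1-\patience}\mcQ(p_t)\mcE(p_t) + O(\delta)$.

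For the second term, I would show $u_{\text{post}}(p,\patience,\phi_t) = O(\delta)$. By its definition,
\[
u_{\text{post}}(p,\patience,\phi_t) = \sum_{\tau=1}^\infty \patience^{\tau}(1-\delta)^{\tau-1}\delta\,(1-\mcQ(p))^{\tau-1}\,\utafter(v,\patience,\phi_{t+\tau|t}).
\]
The key point is the explicit factor of $\delta$ multiplying every term of the series. The post-experimental utility $\utafter$ is bounded (values and prices are bounded, so $|\utafter| \le \bar v$ for some constant, or one can bound it by $\bar p$ plus a bound on values), and $\patience^\tau(1-\delta)^{\tau-1}(1-\mcQ(p))^{\tau-1} \le \patience^\tau \le (1-\Gamma)^\tau$. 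Hence the series is bounded in absolute value by $\delta \cdot \text{const} \cdot \sum_{\tau \ge 1}(1-\Gamma)^{\tau} = \delta \cdot \text{const} \cdot \frac{1-\Gamma}{\Gamma}$, which is $O(\delta)$ uniformly in $\patience$, $v$, and the buyer's belief $\phi_t$. Combining the two pieces and absorbing both $O(\delta)$ contributions into a single $O(\delta)$ gives the stated threshold $v \geq p_t + \bias$ with $\bias = \frac{\patience}{1-\patience}\mcQ(p_t)\mcE(p_t) + O(\delta)$.

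The only mild subtlety — the ``main obstacle,'' such as it is — is making sure the $O(\delta)$ bounds are genuinely \emph{uniform} over the buyer type $(v,\patience,\phi_t)$, since the corollary (and later the asymptotic-unbiasedness arguments) will integrate this threshold over an arbitrary, possibly adversarial, type distribution. This is exactly where Assumption~\ref{ass:bounded_patientce} does the work: it bounds $g'(\delta)$ and forces geometric decay of the $u_{\text{post}}$ series at a rate $(1-\Gamma)$ independent of the particular buyer, so no type can make either error term blow up. Everything else is a routine Taylor expansion and geometric-series bound.
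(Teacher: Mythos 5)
Your treatment of the first term is fine and matches the paper: expanding $\frac{\patience(1-\delta)}{1-\patience(1-\delta)}$ around $\delta=0$, with Assumption~\ref{ass:bounded_patientce} keeping the denominator away from zero, gives $\frac{\patience}{1-\patience}+O(\delta)$ uniformly, and $\mcQ(p_t)\mcE(p_t)\le \bar p$ preserves the bound. The gap is in the second term. You bound $u_{\text{post}}$ by $\delta$ times a constant by asserting that $\utafter(v,\patience,\phi_{t+\tau|t})$ is uniformly bounded ``since values and prices are bounded.'' But the model never assumes bounded values: $F_t$ is an arbitrary distribution, and only the experimental prices are confined to $[0,\bar p]$. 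Since $\utafter$ can be as large as $v$ (a buyer may believe post-experiment prices are low and collect utility close to $v-p$), your bound $u_{\text{post}}=O(\delta)$ \emph{uniformly over buyer types} fails: for buyers with $v$ of order $1/\delta$ the term $\frac{\patience\delta}{1-\patience(1-\delta)}v$ is not $O(\delta)$. This is precisely the uniformity issue you flag as the ``main obstacle,'' and your resolution of it rests on an assumption the paper does not make.

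The paper avoids this by a different, value-free reduction. It only uses $0\le \utafter(v,\patience,\phi_{t+\tau|t})\le v$, so $u_{\text{post}}\le \frac{\patience\delta}{1-\patience(1-\delta)}\,v$, and then observes that the resulting sufficient condition for purchase has $v$ on \emph{both} sides: $v \ge p_t + \frac{\patience(1-\delta)\mcQ(p_t)}{1-\patience(1-\delta)}\mcE(p_t) + \frac{\patience\delta}{1-\patience(1-\delta)}v$. Solving for $v$ converts the value-dependent correction into a price-dependent one, $v \ge p_t + \frac{\patience(1-\delta)\mcQ(p_t)}{1-\patience}\mcE(p_t) + \frac{\patience\delta}{1-\patience}p_t$, and now the error term is controlled by the \emph{bounded} price $p_t\le\bar p$, not by the (possibly unbounded) value. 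Combined with the lower bound $\bias \ge \frac{\patience(1-\delta)}{1-\patience(1-\delta)}\mcQ(p_t)\mcE(p_t)$ (from $\utafter\ge 0$), this sandwiches $\bias$ between two quantities that are $\frac{\patience}{1-\patience}\mcQ(p_t)\mcE(p_t)+O(\delta)$ uniformly over $(v,\patience,\phi_t)$. To repair your argument you should either adopt this rearrangement, or explicitly restrict to bounded values -- which would weaken the result relative to the paper's setting.
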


 An important aspect of this characterization is that the amount of \emph{value-shading} $\bias$ introduced by the buyer's forward-looking behavior does not depend on the buyer's value $v$.

\section{The Insufficiency of Two-Price Experiments\label{sec: two_price_est}} 

In this section, we analyze two-price switchback designs, with $S=\left\{p^{(0)}, p^{(1)}\right\}=\{\pstar, \pstar-\epsilon\}$ and probabilities $\mb{q} = (q_0, q_1)=(1 - q, q)$. This is a frequently used experimental design in practice and arguably the simplest type of a switchback price experiment one would consider, where on each day the seller will offer a discount $\epsilon$ over the reference price of the product, with some probability $q$. Specialized to this switchback design, the characterization in Corollary~\ref{corollary: bias in the limit} shows that as $\delta\to 0$, a buyer with value $v$ and patience level $\gamma$, will purchase when the reference price $\pstar$ is posted if:
\begin{align}
    v \geq \pstar + \frac{\gamma}{1-\gamma} q \epsilon \ .
\end{align}
The buyer will purchase when the discounted price $\pstar-\epsilon$ is posted if $v\geq \pstar-\epsilon$, because there is no benefit of waiting given that posted price.

Our first main result is an impossibility result that uncovers the limitation of such fixed discount experiments in dealing with forward-looking buyer behavior. We show that there cannot exist any asymptotically unbiased estimator of the demand gradient when faced with data that stem from a two-price switchback design. This impossibility result holds, even under the simple case where the buyers share a homogenous patience level that is unknown to the seller and even when the seller can observe the same-day sales masses $N_t$.

\begin{theorem}[Non-Existence of Asymptotically Unbiased Estimators for Two-Price Experiments]\label{thm:impossible}
    Consider access to data from a two-price switchback experiment, where buyer beliefs satisfy Assumption~\ref{ass:rational} and the stochastic process that generates demand satisfies Assumptions~\ref{ass:bounded_patientce},~\ref{ass:bounded-der},~\ref{ass:conditional}.It is impossible to construct an estimator that is asymptotically unbiased. This holds even if all buyers have the same patience level, the value distribution is the same in all periods, and the seller observes both the same day sales $N_t$ and the total sales $C_t$ in each period.
\end{theorem}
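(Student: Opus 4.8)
The plan is to exhibit two distinct demand-generating processes --- identical in every observable feature of the two-price switchback experiment --- that nonetheless have different demand gradients $d(p_*)$; since no estimator, being a function only of $(\mb{C},\mb{N})$, can distinguish them, no estimator can be asymptotically unbiased for both. To make the construction maximally clean, I would take both processes to be stationary, with a single value distribution $F$ (so $F_t \equiv F$) and a single common patience level $\gamma$ (so $\mu_t$ is a point mass). By Corollary~\ref{corollary: bias in the limit} and the specialization computed just before the theorem, in the $\delta\to 0$ limit the observable masses are pinned down by exactly two numbers: the same-day sales under the reference price, $N^{(0)} = D(p_* + \tfrac{\gamma}{1-\gamma} q \epsilon)$, and the same-day sales under the discount, $N^{(1)} = D(p_* - \epsilon)$. (The total sales $C_t$ add only the delayed purchases, i.e. buyers who saw $p_*$, waited, and bought at $p_*-\epsilon$ later; in the stationary $\delta\to0$ limit the delayed mass is a deterministic function of these same two quantities and of $q$, so it adds no identifying information.) Hence the entire observable distribution is a function of the pair $\bigl(D(p_* + \tfrac{\gamma}{1-\gamma}q\epsilon),\, D(p_*-\epsilon)\bigr)$ together with $D(p_*)$ itself (identifiable trivially), and nothing else.

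The key step is then a \emph{matching} argument at the level of these observable summaries. Fix a target baseline demand function $D$ with gradient $d(p_*) = D'(p_*)$ and patience $\gamma$. I want a second pair $(\tilde D, \tilde\gamma)$ with $\tilde D(p_*) = D(p_*)$, $\tilde D'(p_*) \neq D'(p_*)$, but producing the same observables. Linearizing (legitimate in the asymptotic regime, using Assumptions~\ref{ass:bounded-der} and \ref{ass:conditional}, up to $O(\epsilon^2)$ corrections that vanish in the definition of asymptotic unbiasedness), the reference-price same-day mass is approximately $D(p_*) + \tfrac{\gamma}{1-\gamma} q\epsilon\, d(p_*)$ and the discount same-day mass is approximately $D(p_*) - \epsilon\, d(p_*)$. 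So the observables depend on the gradient only through the two products $\tfrac{\gamma}{1-\gamma} q\, d(p_*)$ and $d(p_*)$ --- wait, that over-determines things, so I need to be more careful: the second-order terms matter. The right move is to keep the exact expressions $D(p_*-\epsilon)$ and $D(p_* + \tfrac{\gamma}{1-\gamma}q\epsilon)$ and choose $\tilde D$ to be a genuinely different smooth function (different slope at $p_*$) while solving two interpolation constraints: $\tilde D(p_* - \epsilon) = D(p_*-\epsilon)$ and $\tilde D\bigl(p_* + \tfrac{\tilde\gamma}{1-\tilde\gamma}q\epsilon\bigr) = D\bigl(p_* + \tfrac{\gamma}{1-\gamma}q\epsilon\bigr)$. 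With $\tilde\gamma$ as a free dial, the location of the second interpolation node moves, and I get one extra degree of freedom; a short computation shows I can then pick $\tilde\gamma \neq \gamma$ and a smooth monotone $\tilde D$ (e.g. a quadratic, or an affine function on the relevant interval, extended smoothly) hitting both nodes with $\tilde D'(p_*)$ differing from $D'(p_*)$ by an $\Omega(1)$ amount --- crucially, an amount that does \emph{not} shrink with $\epsilon$. Because the construction works uniformly for all small $\epsilon$ with a fixed gap in the gradients, any candidate estimator has $\bigl|\lim_{\delta\to0}\E[\hat d(p_*)] - d(p_*)\bigr|$ bounded below by half that gap for one of the two processes, contradicting asymptotic unbiasedness.

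I would organize the write-up as: (i) reduce the observable data in the $\delta\to0$ stationary limit to the two same-day masses plus the delayed mass, and argue the delayed mass is redundant; (ii) state the pair of interpolation constraints and solve them, treating $\tilde\gamma$ as the free parameter and producing an explicit $\tilde D$ in a neighborhood of $p_*$ (then extended to a valid survival function by Assumption~\ref{ass:limit-exists}-compatible smoothing outside the $\bar\epsilon$-window, which is invisible to the experiment); (iii) verify both processes satisfy Assumptions~\ref{ass:bounded_patientce}, \ref{ass:bounded-der}, \ref{ass:conditional} --- boundedness of gradient and Hessian is immediate for the explicit choices; (iv) conclude by the indistinguishability-plus-fixed-gap argument. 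The main obstacle I anticipate is step (i): carefully checking that the \emph{delayed} sales mass $C_t - N_t$ --- which in a non-stationary or finite-$\delta$ world carries genuine temporal information about when high prices were followed by low prices --- really does collapse, in the $\delta\to0$ i.i.d.-price stationary limit, to a deterministic function of $q$, $\gamma$, and the same two demand evaluations, so that it provides no independent leverage for an estimator. Getting that reduction airtight (and confirming it still holds when we only require the weaker ``average over time'' convergence of Assumption~\ref{ass:limit-exists} rather than exact stationarity) is where the real care is needed; the interpolation/degrees-of-freedom count in steps (ii)--(iv) is then essentially elementary.
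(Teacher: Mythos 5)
Your proposal is correct in its essentials and reaches the impossibility by a route that is genuinely a bit different from the paper's. The paper also reduces the observable data to the two summary quantities $A_\epsilon=D(p_*+\tfrac{\gamma}{1-\gamma}q\epsilon)$ and $B_\epsilon=D(p_*-\epsilon)$, but its counterexample pair (two uniform value distributions, one with $\gamma_1=0$ and one with $\gamma_2>0$ and suitably widened support) only matches the \emph{difference} $A_\epsilon-B_\epsilon$, not the levels; the levels differ across the two processes, so the paper needs an extra lemma (its Lemma on translation invariance, proved by moving a small mass from far below $p_*-\epsilon$ to far above $p_*$, which shifts $A$ and $B$ jointly without touching the local gradient) to argue that the limit of any asymptotically unbiased estimator can depend on $(A,B)$ only through $A-B$. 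You instead use the patience level as a free dial to match \emph{both} observable levels exactly for each fixed $\epsilon$ (e.g.\ two locally affine demands with slopes $-s$ and $-s/(1+\tfrac{\tilde\gamma}{1-\tilde\gamma}q)$ and an $O(\epsilon)$ level shift), which makes the two processes literally observationally identical, removes the need for the translation lemma, and still leaves a gradient gap bounded away from zero; the price you pay is that your pair depends on $\epsilon$, which is harmless given that asymptotic unbiasedness is quantified uniformly over processes for every small $\epsilon$. Your reduction of the delayed sales $C_t-N_t$ to a function of the price history and $B-A$ is the same reduction the paper makes, at the same level of rigor.

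Two points to fix in the write-up. First, $D(p_*)$ is \emph{not} ``identifiable trivially'' from two-price switchback data: the same-day sales at the reference price reveal $D(p_*+\tfrac{\gamma}{1-\gamma}q\epsilon)$, not $D(p_*)$ (the paper's remark about identifying $D(p_*)$ refers to posting a constant price, not to the experiment). This matters for your argument in both directions: if $D(p_*)$ were observable your two interpolation constraints would not suffice for indistinguishability, and adding $\tilde D(p_*)=D(p_*)$ as a third matching constraint would kill the construction, since matching $D$ at both $p_*$ and $p_*-\epsilon$ under a uniform curvature bound forces $|\tilde D'(p_*)-D'(p_*)|=O(\epsilon)$. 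So you should state explicitly that $D(p_*)$ is not part of the observables and drop it from the matching. Second, and relatedly, avoid the quadratic variant of $\tilde D$: hitting the two nodes with a fixed slope gap and an unchanged value at $p_*$ requires curvature of order $1/\epsilon$, which exits the class of demand processes with uniformly bounded second derivative that the uniform-unbiasedness definition (and the paper's positive result) implicitly fixes; the affine construction, extended to a valid survival function away from the $\bar\epsilon$-window, stays in the class and is all you need.
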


To prove this result, we show that any estimator necessarily depends only on the following two summary statistics of the data generating process, $D(\pstar+\frac{\gamma}{1-\gamma}q\epsilon)$ and $D(\pstar-\epsilon)$, since the observed data can be uniquely described by these two quantities of the underlying process. Moreover,  we further show that the limit of the expected value of any asymptotically unbiased estimator must necessarily essentially only depend on the difference of these two summary statistics, i.e. only on $\Delta = D(\pstar+\frac{\gamma}{1-\gamma}q\epsilon)-D(\pstar-\epsilon)$. Finally, we show that one can construct  two distinct demand-generating processes with different buyer patience levels, such that the demand gradients $d_1(\pstar),d_2(\pstar)$ are different but the observed demand differences $\Delta_1, \Delta_2$ are the same. Intuitively, these demonstrate that it is not possible to separately identify demand gradients with patience levels; hence, there cannot be an unbiased estimator in the limit. With a two-price switchback, the seller is not able to distinguish the following two buyer behaviors: 1) the buyers who could afford to buy at the higher price but select to wait for a discount, and 2) the buyers who could not afford at the higher price and select to wait.

Next, we take this impossibility result one step further and characterize the form of the bias for the natural naive estimator that one would consider (if forward-looking behavior was ignored), under the assumption that all buyers have the same patience level. We show that the bias of this estimator admits a multiplicative form, essentially yielding a multiple of the true demand gradient, by a constant that depends on the patience level of the buyers and the probability of the discounted price. 

We first prove this result in the case where the seller only observes purchase masses $\{C_t\}_{t\in[T]}$ on each day. When faced with data from such a two-price experiment, a natural estimator is the difference in the time-averaged sales mass between the days where a price $\pstar$ was posted vs. the days where $\pstar-\epsilon$ was posted. We define the average number of purchases, within each price level as:
\begin{align}\label{eqn:total-day-agg}
C^{(i)}=~& \frac{1}{q_i\cdot T}\sum_{t=1}^{T} C_t\cdot \ind{\price_t=p^{(i)}}, &
i\in~& \{0, 1\}
\end{align}
and consider the natural estimator $\edd(\pstar)$ of demand gradient $d(\pstar)$ as:
\begin{align}\label{eqn:naive-estimator}
    \edd(\pstar) &= \frac{C^{(0)}- C^{(1)}}{\low}.
\end{align}
As the experiment ending probability $\delta\to 0$, \Cref{thm: multiplicative bias} characterizes the bias of the naive estimator in two-price experiments.

\begin{theorem}[Bias of Two-Price Experiments]
\label{thm: multiplicative bias}
 Suppose that all buyers have the same patience level $\patience$ and Assumptions~\ref{ass:bounded_patientce},\ref{ass:limit-exists},~\ref{ass:bounded-der},~\ref{ass:rational} are satisfied. 
 The estimator of the demand gradient $\edd(\pstar)$ defined in Equation~\eqref{eqn:naive-estimator} is multiplicatively biased, with:
    \begin{equation*}
    \lim_{\low\to 0}\lim_{\delta\to 0}\expect{\edd(\pstar)}=\left(\frac{1}{q}+\frac{\patience}{1-\patience}\right)\cdot d(\pstar).
\end{equation*}
\end{theorem}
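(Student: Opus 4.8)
The plan is to work in the limit $\delta \to 0$ first, where Corollary~\ref{corollary: bias in the limit} applies, and then take $\epsilon \to 0$. The key structural fact is that in the two-price design with $S = \{\pstar, \pstar - \epsilon\}$ and discount probability $q$, buyers who see the high price $\pstar$ behave as if their value is $v - \tfrac{\gamma}{1-\gamma} q \epsilon$ (this is the specialization already spelled out before the theorem, since $\mcQ(\pstar) = q$ and $\mcE(\pstar) = \epsilon$), while buyers who see the low price $\pstar - \epsilon$ have no incentive to wait and buy iff $v \geq \pstar - \epsilon$. Since the seller only observes total sales $C_t$, the subtlety is that $C_t$ includes both same-day purchases and delayed purchases carried over from earlier high-price days.

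First I would compute the steady-state (time-averaged) composition of sales on low-price days and on high-price days. On a high-price day, the only sales are same-day arrivals with $v \geq \pstar + \tfrac{\gamma}{1-\gamma} q \epsilon$, contributing mass $D(\pstar + \tfrac{\gamma}{1-\gamma} q \epsilon)$ in the $T \to \infty$ average (using Assumption~\ref{ass:limit-exists}). On a low-price day, there are (i) the same-day arrivals with $v \geq \pstar - \epsilon$, contributing $D(\pstar - \epsilon)$, plus (ii) all the ``backlog'': buyers who arrived on earlier days, were shown $\pstar$ (and possibly $\pstar - \epsilon$ — no, those would have bought), declined, and are still waiting because they have $v \in [\pstar - \epsilon, \pstar + \tfrac{\gamma}{1-\gamma} q \epsilon)$. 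The crucial bookkeeping step is a conservation argument: in the long-run average, every buyer who ever purchases is counted exactly once across all days, so the time-averaged total sales per day must equal the time-averaged mass of arriving buyers who ever buy, which is $D(\pstar - \epsilon)$ (anyone with $v \geq \pstar - \epsilon$ eventually sees the low price almost surely as $\delta \to 0$, and anyone with $v < \pstar - \epsilon$ never buys). Averaging $C_t$ over all days gives $D(\pstar - \epsilon)$; averaging only over high-price days gives $C^{(0)} = D(\pstar + \tfrac{\gamma}{1-\gamma} q \epsilon)$; hence, since high-price days are a $(1-q)$ fraction and low-price days a $q$ fraction, $(1-q) C^{(0)} + q C^{(1)} = D(\pstar - \epsilon)$, which solves to
\begin{equation*}
C^{(1)} = \frac{1}{q}\left( D(\pstar - \epsilon) - (1-q) D\!\left(\pstar + \tfrac{\gamma}{1-\gamma} q \epsilon\right)\right).
\end{equation*}
Then $\edd(\pstar) = \dfrac{C^{(0)} - C^{(1)}}{\epsilon} = \dfrac{1}{\epsilon}\left[ \dfrac{1}{q} D\!\left(\pstar + \tfrac{\gamma}{1-\gamma} q\epsilon\right) - \dfrac{1}{q} D(\pstar - \epsilon)\right]$, and a first-order Taylor expansion of $D$ around $\pstar$ (valid by Assumptions~\ref{ass:limit-exists} and~\ref{ass:bounded-der}, with the Hessian controlling the $O(\epsilon)$ remainder) yields $\edd(\pstar) \to \tfrac{1}{q}\big(\tfrac{\gamma}{1-\gamma} q + 1\big) d(\pstar) = \big(\tfrac{1}{q} + \tfrac{\gamma}{1-\gamma}\big) d(\pstar)$ as $\epsilon \to 0$, which is exactly the claim.

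The step I expect to be the main obstacle is making the conservation/backlog argument fully rigorous at finite $T$ before taking limits: one must show that the contribution of buyers ``in transit'' at the two endpoints of the horizon (arrivals near $t=T$ who have not yet bought, and the fact that the $O(\delta)$ correction in Corollary~\ref{corollary: bias in the limit} and the random horizon interact) is negligible after dividing by $T$ and then sending $\delta \to 0$. Concretely, I would argue that the expected per-day backlog mass is bounded (it is at most the arrival mass with $v$ in the shading window, times the expected waiting time, which is $O(1/q)$ and independent of $\delta$), so its total contribution over the horizon of expected length $1/\delta$ is $O(1/(q\delta))$, and after the $T$-average and the almost-sure convergence in Assumption~\ref{ass:limit-exists} the boundary terms wash out; the order of limits ($\delta \to 0$ first inside the expectation, then $\epsilon \to 0$) matters and must be tracked carefully so that the $O(\delta)$ term in Corollary~\ref{corollary: bias in the limit} vanishes before the $1/\epsilon$ scaling is applied.
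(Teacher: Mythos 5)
Your proposal is correct and follows essentially the same route as the paper's own proof: specialize Corollary~\ref{corollary: bias in the limit} to get $C^{(0)}\to D\bigl(\pstar+\tfrac{\patience}{1-\patience}q\low\bigr)$, use the conservation argument that time-averaged total sales converge to $D(\pstar-\low)$ (with the backlog near the horizon's end being a negligible $O(1/q)$ stretch of periods as $\delta\to 0$) to solve $(1-q)C^{(0)}+qC^{(1)}=D(\pstar-\low)$ for $C^{(1)}$, and finish with a first-order Taylor expansion under Assumption~\ref{ass:bounded-der}. Your algebra and the resulting factor $\bigl(\tfrac{1}{q}+\tfrac{\patience}{1-\patience}\bigr)$ match the paper exactly.
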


Hence, the naive estimator multiplicatively overestimates the demand gradient due to the forward-looking behavior of customers. This result implies that removing this bias can be achieved under a very stringent set of assumptions where all buyers have the same patience and this patience level is known to the seller. When the patience level is not known, the overestimation bias can be reduced, but not totally eliminated with a two-price switchback experiment, as we argue below. The overestimation bias can be decomposed into two sources:
\begin{itemize}
    \item {\bf Bias 1} Underestimation of $D(\pstar)$: when the price is $\pstar$, some buyers with $v\geq \pstar$ will wait for a future discount, resulting in an underestimation of $D(\pstar)$;
    \item {\bf Bias 2} Overestimation of $D(\pstar-\low)$: when the price is $\pstar-\low$, the mass of purchases includes buyers who arrive from a previous period, resulting in an overestimation of $D(\pstar-\low)$.
\end{itemize}

The second source of bias can be removed if the seller has the ability to monitor the buyers' arrival times. In that case, the seller can obtain an unbiased estimator of $D(\pstar-\low)$ by removing the mass of buyers who wait and buy and only considering the same-day sales mass $N_t$, instead of $C_t$. More specifically, suppose that the seller observes the mass of purchases $\{\countsameday_t\}_{t\in[T]}$ from buyers who arrive on the same day. The purchases are aggregated into same-day purchase masses within periods that had the same price level
\begin{align}\label{eqn:same-day-agg}
\countsameday[i]=\frac{1}{T\lowprob_i}\sum_{t=1}^T \countsameday_t \cdot \ind{p_t=\price[i]}
\end{align}
and we can consider the natural analogue of the naive estimator $\edd(\pstar)$ of demand gradient $d(\pstar)$, where we ignore buyer's that arrived earlier and waited for the discount:
\begin{align}\label{eqn:naive-estimator-2}
    \edd(\pstar) &= \frac{N^{(0)}- N^{(1)}}{\low}.
\end{align}
In this case, the expected estimation of demand gradient can be shown to be  $\left(1+\frac{\patience}{1-\patience}q\right)\cdot d(\pstar)$.
\begin{theorem}[Bias of Two-Price Experiments with Arrival Monitoring]
\label{thm: multiplicative bias-2}
 Suppose that all buyers have the same patience level $\patience$, Assumptions~\ref{ass:bounded_patientce},~\ref{ass:limit-exists},~\ref{ass:bounded-der},~\ref{ass:rational} are satisfied and that the seller observes the same-day sales $\{N_t\}_{t=1}^T$. 
 The estimator of the demand gradient $\edd(\pstar)$ in Equation~\eqref{eqn:naive-estimator-2} is multiplicatively biased, with:
    \begin{equation*}
    \lim_{\low\to 0}\lim_{\delta\to 0}\expect{\edd(\pstar)}=\left(1 + \frac{\patience}{1-\patience} q\right)\cdot d(\pstar).
\end{equation*}
\end{theorem}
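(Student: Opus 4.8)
The plan is to collapse the estimator onto two static‑demand evaluations in the $\delta\to 0$ limit using the buyer‑behavior characterization, and then Taylor‑expand in $\low$. Specialize \Cref{corollary: bias in the limit} to the design $S=\{\pstar,\pstar-\low\}$ with $\mb{q}=(1-q,q)$. The only object that enters the option value is $\mcQ(p)\mcE(p)=\expect{(p-P)_+}$, which equals $q\low$ when $p=\pstar$ and equals $0$ when $p=\pstar-\low$ (at the discounted price there is no lower price to wait for). Hence, for a buyer with value $v$ and patience $\patience$ who arrives in period $t$, purchasing \emph{that same day} happens iff $v\ge \pstar+\tfrac{\patience}{1-\patience}q\low+O(\delta)$ on a control day ($p_t=\pstar$), and iff $v\ge \pstar-\low+O(\delta)$ on a treatment day ($p_t=\pstar-\low$). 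The $O(\delta)$ must be controlled uniformly in $t$ and in the demand‑generating process: by \Cref{ass:bounded_patientce} the coefficient $\tfrac{\patience(1-\delta)}{1-\patience(1-\delta)}$ is bounded and differs from $\tfrac{\patience}{1-\patience}$ by $O(\delta)$, and the continuation post‑experiment utility $u_{\text{post}}$ from \Cref{lem:waiting behavior} is $O(\delta)$ because values and prices are bounded (so $U$ is bounded) and the geometric weighting carries an explicit factor $\delta$. Since the same‑day mass $N_t$ is exactly the arriving mass that clears the relevant threshold and $D_t$ has bounded derivative near $\pstar$ (\Cref{ass:bounded-der}), this gives the per‑period identity $N_t = D_t\big(\pstar+\tfrac{\patience}{1-\patience}q\low\big)\ind{p_t=\pstar}+D_t(\pstar-\low)\ind{p_t=\pstar-\low}+O(\delta)$, with the error uniform.

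Next I would average over time and pass $\delta\to 0$. With $N^{(0)}=\tfrac{1}{(1-q)T}\sum_{t\le T}N_t\ind{p_t=\pstar}$ and using that the prices $\{p_t\}$ are i.i.d.\ and independent of the demand process $\{F_t\}$, I split $\tfrac1T\sum_t D_t(x)\ind{p_t=\pstar}=(1-q)\tfrac1T\sum_t D_t(x)+\tfrac1T\sum_t D_t(x)\big(\ind{p_t=\pstar}-(1-q)\big)$: the first term converges a.s.\ to $(1-q)D(x)$ by \Cref{ass:limit-exists}, and the second is a bounded martingale‑difference average that vanishes a.s.\ by the martingale strong law. Since $\delta\to 0$ forces the geometric horizon $T\to\infty$ a.s.\ and simultaneously kills the uniform $O(\delta)$, we obtain $N^{(0)}\to D\big(\pstar+\tfrac{\patience}{1-\patience}q\low\big)$ and, symmetrically, $N^{(1)}\to D(\pstar-\low)$ almost surely. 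Because $N_t\in[0,1]$, for fixed $\low>0$ the estimator $\edd(\pstar)$ from \eqref{eqn:naive-estimator-2} is uniformly bounded, so bounded convergence yields
\[
\lim_{\delta\to 0}\expect{\edd(\pstar)}=\frac{D\big(\pstar+\tfrac{\patience}{1-\patience}q\low\big)-D(\pstar-\low)}{\low}.
\]

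Finally, I would Taylor‑expand about $\pstar$ using twice‑differentiability of $D$ (\Cref{ass:limit-exists}) and the bounded Hessian (\Cref{ass:bounded-der}), noting that $\tfrac{\patience}{1-\patience}q$ is bounded by \Cref{ass:bounded_patientce} so both perturbations are $O(\low)$ with uniform second‑order remainders: $D\big(\pstar+\tfrac{\patience}{1-\patience}q\low\big)=D(\pstar)+\tfrac{\patience}{1-\patience}q\low\, d(\pstar)+O(\low^2)$ and $D(\pstar-\low)=D(\pstar)-\low\, d(\pstar)+O(\low^2)$. The numerator is therefore $\big(1+\tfrac{\patience}{1-\patience}q\big)\low\, d(\pstar)+O(\low^2)$; dividing by $\low$ and letting $\low\to 0$ gives the claimed $\big(1+\tfrac{\patience}{1-\patience}q\big)d(\pstar)$.

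The main obstacle is not the algebra but making the two limit interchanges rigorous: (a) showing the per‑period $O(\delta)$ error is genuinely uniform over all periods, all demand processes, and all admissible post‑experiment beliefs $\phi_t$, so that it survives time‑averaging and then vanishes as $\delta\to 0$; and (b) the joint law of large numbers that combines the i.i.d.\ price assignment with the merely almost‑surely‑convergent (not i.i.d.) demand sequence, together with the uniform boundedness needed to move from a.s.\ convergence of $\edd(\pstar)$ to convergence of $\expect{\edd(\pstar)}$. Beyond these points the argument is the same bookkeeping as in \Cref{thm: multiplicative bias}, but without the accounting for delayed purchases that $N_t$ (unlike $C_t$) already excludes — which is exactly why the multiplicative factor drops from $\tfrac1q+\tfrac{\patience}{1-\patience}$ to $1+\tfrac{\patience}{1-\patience}q$.
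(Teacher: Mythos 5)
Your proposal is correct and is essentially the argument the paper intends: the paper omits this proof, stating it is identical to that of \Cref{thm: multiplicative bias}, and your route (apply \Cref{corollary: bias in the limit} to get $N_t = D_t\bigl(\pstar+\tfrac{\patience}{1-\patience}q\low\bigr)$ on control days and $N_t = D_t(\pstar-\low)$ on treatment days, average, pass $\delta\to 0$, then Taylor-expand) is exactly that adaptation, simplified by the fact that same-day sales make the waiting-buyer accounting of $C^{(1)}$ unnecessary. Your extra care about the uniform $O(\delta)$ error and the $\delta\to 0$ limit interchange matches (indeed slightly exceeds) the paper's own level of detail, so there is nothing missing.
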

\noindent The proof follows an identical argument as in the proof of \Cref{thm: multiplicative bias} and is omitted. Even though this estimator is still multiplicatively biased, the factor in the bias is a smaller constant, especially for small values of $q$.

\section{Debiasing Switchback Experiments with Three Prices}

\label{sec:debias}

In this section, we show that the seller can debias the estimation of the demand gradient by conducting a switchback experiment with three prices. We consider in particular an experiment with equal discounts, i.e., $S=\{p^{(0)}, p^{(1)}, p^{(2)}\}=\{p_*, p_*-\epsilon, p_*-2\epsilon\}$ and associated probabilities $\mb{q} = (q_0, q_1, q_2) = (1 - q_1 - q_2, q_1, q_2)$.

For any such switchback design, we will construct an asymptotically unbiased estimator, even in the setting where the seller only observes the total sales mass every day $C_t$, but cannot track the buyer's arrival time and hence cannot observe which part of these sales were same-day sales $N_t$, or waiting-buyer sales $C_t-N_t$. Moreover, \emph{we will allow for the buyer patience levels to be arbitrarilty heterogeneous and correlated with their values} (unlike the negative result in \Cref{thm: multiplicative bias}, which assumed a homogeneous patience level). 

We will accomplish this result in two steps. First, in \Cref{sec:known arrival time} we will construct such an estimator for the simpler setting where the seller observes both $N_t$ and $C_t$ as a stepping stone. Then,
in \Cref{sec:unknown arrival time}, we will extend the result to the case when the seller has less information and does not monitor the buyer's arrival time. We construct the unbiased estimator by reducing the problem to the former setting in \Cref{sec:known arrival time}.

\subsection{Asymptotically Unbiased Estimator with Known Arrival Times}\label{sec:known arrival time}

Suppose that the seller observes the mass of purchases $\{\countsameday_t\}_{t\in[T]}$ from buyers who arrive on the same day. The purchases are aggregated into same-day purchase masses within periods that had the same price level $\countsameday[i]$, as defined in Equation~\eqref{eqn:same-day-agg}.
The asymptotically unbiased estimator of the demand gradient is constructed by a weighted average of same-day purchase masses:
    \begin{equation}\label{eq:estimator-demand-elasticity}
    \eddunbias(\pstar)
    = \frac{\countsameday[1]-\countsameday[2] -\frac{\lowprob_2}{\lowprob_1}(\countsameday[0]-\countsameday[1] - (\countsameday[1]-\countsameday[2]))}{\epsilon}
\end{equation}

\begin{theorem}\label{thm:unbiased}
Suppose Assumptions~\ref{ass:bounded_patientce},~\ref{ass:limit-exists},~\ref{ass:bounded-der},~\ref{ass:conditional} and \ref{ass:rational} are satisfied.
For an arbitrary sequence of distributions over the buyers' private type $(v, \delta)$ and post-experiment price beliefs, the estimator in \Cref{eq:estimator-demand-elasticity} is asymptotically unbiased.
\end{theorem}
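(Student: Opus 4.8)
The plan is to use the structural characterization of buyer behavior (Corollary~\ref{corollary: bias in the limit}) to compute the limiting expectation of each same-day aggregate $\countsameday[i]$, and then verify by direct algebra that the particular linear combination in Equation~\eqref{eq:estimator-demand-elasticity} cancels the bias terms and leaves $d(\pstar)$ up to $O(\epsilon)$. The first step is to fix a patience level $\gamma$ and apply Corollary~\ref{corollary: bias in the limit} to the three-price design: when price $p^{(i)}$ is posted, a same-day buyer with patience $\gamma$ and value $v$ purchases iff $v \ge p^{(i)} + \bias_i(\gamma) + O(\delta)$, where $\bias_i(\gamma) = \frac{\gamma}{1-\gamma}\,\mcQ(p^{(i)})\mcE(p^{(i)})$. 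Crucially, for the three equally-spaced discounts $\{\pstar, \pstar-\epsilon, \pstar-2\epsilon\}$ with probabilities $(1-q_1-q_2, q_1, q_2)$ one computes $\mcQ(p^{(i)})\mcE(p^{(i)}) = \expect{(p^{(i)}-P)_+}$ explicitly: it is $0$ for $i=2$, equals $q_2\epsilon$ for $i=1$, and equals $q_1\epsilon + 2q_2\epsilon$ for $i=0$. So, writing $c(\gamma) = \frac{\gamma}{1-\gamma}$, the effective threshold when $p^{(i)}$ is posted is $\pstar$, $\pstar - \epsilon + c(\gamma)q_2\epsilon$, and $\pstar - 2\epsilon + c(\gamma)(q_1+2q_2)\epsilon$ for $i=0,1,2$ respectively.

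Next I would take the limit $\delta \to 0$ followed by the almost-sure convergence of time averages (Assumption~\ref{ass:conditional}), so that $\lim_{\delta\to0}\expect{\countsameday[i]} = \int D_\gamma\big(\text{threshold}_i(\gamma)\big)\,d\gamma$ where the integral is over the limiting patience distribution implicitly encoded in $D_\gamma$. Then I expand each $D_\gamma$ to second order around $\pstar$ using Assumption~\ref{ass:bounded-der} (and its conditional analogue in Assumption~\ref{ass:conditional}): $D_\gamma(\pstar + a\epsilon) = D_\gamma(\pstar) + a\epsilon\, d_\gamma(\pstar) + O(\epsilon^2)$, where the $O(\epsilon^2)$ is uniform over $\gamma$ by the uniform bound $\rho$ on $d_\gamma'$. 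The key algebraic identity to check is that the differences behave as follows: $\countsameday[1]-\countsameday[2]$ has limiting expectation $\epsilon\int d_\gamma(\pstar)(1 - c(\gamma)q_1 - c(\gamma)q_2 + c(\gamma)q_2)\,d\gamma + O(\epsilon^2)$ — I need to be careful tracking the exact coefficient — while $\countsameday[0]-\countsameday[1]$ has a different $\gamma$-dependent coefficient, and the ``difference of differences'' $\countsameday[0] - 2\countsameday[1] + \countsameday[2]$ isolates exactly the $c(\gamma)$-weighted term. The estimator subtracts $\frac{q_2}{q_1}$ times this difference-of-differences from $\countsameday[1]-\countsameday[2]$ precisely so that the $\int c(\gamma) d_\gamma(\pstar)\,d\gamma$ pieces cancel, leaving $\epsilon \int d_\gamma(\pstar)\,d\gamma + O(\epsilon^2) = \epsilon\, d(\pstar) + O(\epsilon^2)$ after dividing by $\epsilon$.

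The main obstacle I anticipate is twofold. First, getting the combinatorics of $\mcQ(p^{(i)})\mcE(p^{(i)})$ and the resulting coefficients exactly right so that the multiplicatively-separable structure of $\bias$ — linear in $c(\gamma)$ and linear in $\epsilon$ with known combinatorial factors — produces a coefficient pattern on which a single fixed linear combination (independent of the unknown patience distribution) annihilates the bias; this is the heart of why three prices suffice where two do not, and it must work uniformly over \emph{arbitrary} distributions of $\gamma$ correlated with $v$, which is why I carry the whole computation at the level of the conditional demands $D_\gamma$ and only integrate at the end. Second, the error control: I need the $O(\epsilon^2)$ remainders in the Taylor expansions to be uniform across all demand-generating processes (so the ``for all processes'' quantifier in the definition of asymptotic unbiasedness is met), which relies on the uniform second-order bounds in Assumptions~\ref{ass:bounded-der} and~\ref{ass:conditional}, and I must also justify interchanging the $\delta\to0$ limit with the expectation and the $T\to\infty$ time-average — the boundedness of same-day sales (each $\countsameday_t \in [0,1]$) plus dominated convergence handles this, and the $O(\delta)$ slack in Corollary~\ref{corollary: bias in the limit} vanishes first. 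Once these uniformity points are nailed down, the final bound $|\lim_{\delta\to0}\expect{\eddunbias(\pstar)} - d(\pstar)| = O(\epsilon) < \rho$ for $\epsilon$ small follows, which is exactly the claimed asymptotic unbiasedness.
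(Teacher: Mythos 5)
Your proposal follows essentially the same route as the paper's proof: use Corollary~\ref{corollary: bias in the limit} to get the per-$\gamma$ purchase thresholds, Taylor-expand the conditional demands $D_\gamma$ around $\pstar$ with remainders made uniform by Assumptions~\ref{ass:bounded-der} and~\ref{ass:conditional}, check that subtracting $\tfrac{q_2}{q_1}$ times the difference of differences cancels the waiting bias of $\countsameday[1]-\countsameday[2]$ \emph{pointwise in $\gamma$}, and only then integrate over the arbitrary patience distribution via Fubini and dominated convergence; the paper packages the identical computation as a homogeneous-patience lemma (with a uniform $\rho\epsilon$ bias bound) followed by a sub-population decomposition of the estimator, so there is no substantive difference in method.

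One slip to fix before executing the computation: you attach the waiting bias $\tfrac{\gamma}{1-\gamma}(q_1+2q_2)\epsilon$ to the lowest price $\pstar-2\epsilon$ and none to $\pstar$, which reverses your own (correct) evaluation of $\mcQ(p^{(i)})\mcE(p^{(i)})$ in the preceding sentence. The thresholds are $\pstar+\tfrac{\gamma}{1-\gamma}(q_1+2q_2)\epsilon$ at $p^{(0)}$, $\pstar-\epsilon+\tfrac{\gamma}{1-\gamma}q_2\epsilon$ at $p^{(1)}$, and $\pstar-2\epsilon$ at $p^{(2)}$ (no one waits at the lowest price). With the correct assignment the limiting coefficients are $\bigl(1+\tfrac{\gamma}{1-\gamma}q_2\bigr)d_\gamma(\pstar)\epsilon$ for $\countsameday[1]-\countsameday[2]$ (not the $1-\tfrac{\gamma}{1-\gamma}q_1$ you wrote) and $\bigl(1+\tfrac{\gamma}{1-\gamma}(q_1+q_2)\bigr)d_\gamma(\pstar)\epsilon$ for $\countsameday[0]-\countsameday[1]$, so the difference of differences is $\tfrac{\gamma}{1-\gamma}q_1\, d_\gamma(\pstar)\epsilon+O(\epsilon^2)$ and the $\tfrac{q_2}{q_1}$-scaled correction removes the bias for every $\gamma$, exactly as you intend; with your mis-assigned thresholds the algebra would not cancel, so the correction is needed, but it is a transcription issue rather than a gap in the argument.
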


The construction of our unbiased estimator relies mainly on exploiting different waiting biases at different prices. With three price levels we note that there are many ways that we can attempt to identify the demand gradient via pairwise comparisons of demand at different pairs of price levels. Even though all these ways of identify the demand gradient are asymptotically biased, they have different limit biases. However, the difference in these pairwise differences contains helpful information about that bias. In particular, we can show that this difference of  differences is a factor times the bias of each individual estimate, where the factor only depends in the experiments randomization probabilities and not on the patience level of the buyers. 

To gain more intuition, consider the case when the experiment offers both discount levels with equal probability, i.e. $\lowprob_1=\lowprob_2$. In this case, the unbiased estimator simplifies to the following intuitive formula:
\begin{align*}
     \eddunbias(\pstar) =~& 
     \frac{\left(\countsameday[1] -  \countsameday[2]\right) - \left(\left(N^{(0)} - N^{(1)}\right) - \left(\countsameday[1] -  \countsameday[2]\right)\right)}{\epsilon}
\end{align*}
In words, we estimate the gradient by using the difference in observed same-day demands between prices $p_*-\epsilon$ and $p_*-2\epsilon$ and then correct it by subtracting the difference of differences, when measuring the gradient by using the difference between demands from $p_*$ and $p_*-\epsilon$ and correspondingly when measuring it by using the difference between demands from $p_*-\epsilon$ and $p_*-2\epsilon$. 

Intuitively, with three price experiments, we have two possible ways of measuring the demand gradient, that, ideally, in the absence of forward looking behavior, should converge to the same thing. The first is $\Delta_1 = (N^{(1)}-N^{(2)})/\epsilon$ and the second is $\Delta_2=(N^{(0)} - N^{(1)})/\epsilon$. Both of these two quantities are asymptotically biased. For instance, consider the case when all buyers have the same patience level $\patience$. Then an argument similar to the one we used in Theorem~\ref{thm: multiplicative bias-2} shows that both of these quantities are asymptotically biased, albeit with the following multiplicative form of asymptotic bias:
\begin{align*}
    \lim_{\epsilon\to0}\lim_{\delta\to 0} \expect{\Delta_1} =~& \left(1 + \frac{\gamma}{1-\gamma} q\right) \cdot d(\pstar) &
    \lim_{\epsilon\to 0}\lim_{\delta\to 0} \expect{\Delta_2} =~& \left(1 + \frac{\gamma}{1-\gamma} 2q\right) \cdot d(\pstar)
\end{align*}
Importantly, the bias of the first estimate is smaller than the bias of the second estimate and in a manner that scales only with the probability of seeing a smaller price. In particular, buyers who face the top price $\pstar$ are more likely to wait, than buyers that face $\pstar-\epsilon$, as there are higher chances of a discount in the future. From the above argument, we see that the difference of  differences, reveals exactly the bias of $\Delta_1$!
\begin{align}
    \lim_{\epsilon\to0}\lim_{\delta\to 0} \expect{\Delta_2 - \Delta_1} = \frac{\gamma}{1-\gamma} q \cdot d(\pstar)
\end{align}
Thus by subtracting this difference from the estimate that is solely based on $\Delta_1$, we are completely removing its asymptotic first order bias. See also Figure~\ref{fig:pictorial}, for a pictorial representation of this argument. The two differences in demands capture forward-looking behavior with different intensities, and by taking the difference of such differences one can isolate the bias term. 
Our proof formalizes this argument and also shows that it can be extended to the case where buyers have heterogeneous patience levels that are arbitrarily correlated with their values.

\begin{figure}[htpb]
\centering
\begin{tikzpicture}[scale=0.8]

\draw[->] (0,0) -- (7,0) node[right] {Price \(p\)};
\draw[->] (0,0) -- (0,7) node[above] {Demand \(D\)};

\draw[dashed] (1,0) -- (1,5);
\draw[dashed] (3,0) -- (3,4);
\draw[dashed] (5,0) -- (5,3);

\node[below] at (1,0) {\(p_* - 2\epsilon\)};
\node[below] at (3,0) {\(p_* - \epsilon\)};
\node[below] at (5,0) {\(p_*\)};

\draw[thick] (0.5,5.25) -- (6,2.5);

\filldraw[black] (1,5) circle (4pt);
\filldraw[black] (5,0.5) circle (4pt);
\filldraw[black] (3,3.16) circle (4pt);

\filldraw[black] (3,4) circle (1pt);
\filldraw[black] (5,3) circle (1pt);
\filldraw[black] (1,5.83) circle (1pt);
\filldraw[black] (1,4.16) circle (1pt);

\draw[dashed] (1,5) -- (-2.5,5);
\draw[dashed] (3,3.16) -- (-2.5,3.16);
\draw[dashed] (1,5.83) -- (-0.5,5.83);
\draw[dashed] (5,0.5) -- (-0.5,0.5);
\draw[dashed] (1,4.16) -- (-2.5,4.16);

\draw[line width=1mm] (3,3.16) -- (1,4.16);

\draw[dotted, line width=.4mm] (5,0.5) -- (3,3.16);
\draw[dotted, line width=.2mm] (3,3.16) -- (1,5.83);
\draw[dashed, line width=.4mm, dash pattern=on 3pt off 1pt] (3,3.16) -- (1,5);

\draw[<->] (-0.5,0.5) -- (-0.5,3.16) node[midway,left] {\(\Delta_2\)};
\draw[<->] (-0.5,3.16) -- (-0.5,5.0) node[midway,left] {\(\Delta_1\)};
\draw[<->] (-0.5,5.0) -- (-0.5,5.83) node[midway,left] {\(\Delta_2 - \Delta_1\)};
\draw[<->] (-2.5,4.16) -- (-2.5,5) node[midway,left] {\(\Delta_2 - \Delta_1\)};
\draw[<->] (-2.5,3.16) -- (-2.5,4.16) node[midway,left] {\(\hat{d}(\pstar)\)};

\end{tikzpicture}
\caption{Visual representation of main debiasing argument. For simplicity of the figure, $\Delta_i$ quantities on the figure represent the absolute value of the $\Delta_i$ quantities in the text. The solid line depicts that true demand curve, which for simplicity is linear and the small dots on the line are the true static demands for these price points. The thick solid line represents the final estimate of the demand gradient that we uncover. The larger dots represent observed demands which are distorted by the forward looking behavior, albeit at different intensities. The two dotted lines represent the biased estimates of the demand gradient that each of $\Delta_1$ or $\Delta_2$ would uncover. We see that the thick solid line is parallel to the true demand curve, thereby uncovering the correct gradient.}\label{fig:pictorial}
\end{figure}
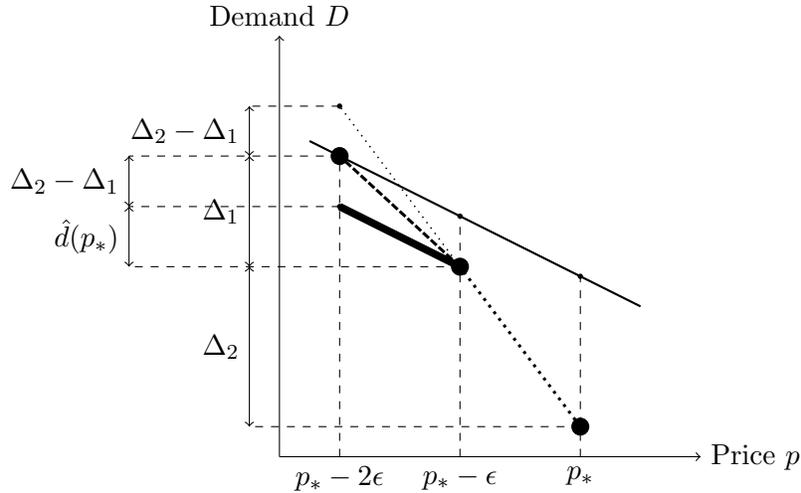

\subsection{Asymptotically Unbiased Estimator with Unknown Arrival Times}
\label{sec:unknown arrival time}

When the seller does not monitor buyer arrival time, we showed that the seller can estimate  the demand gradient with unbiased estimators of same-day purchase masses $\countsameday[0], \countsameday[1], \countsameday[2]$. 
The seller only observes the total sales mass $C_t$ in each period $t$. Using the total sales in each period, we show that the seller can construct unbiased estimators $\widehat{\countsameday[i]}$ of the aggregate same day sales $\countsameday[i]$ for $i=0, 1, 2$ as follows:
\begin{equation}\label{eqn:unbiased-same-day}
\begin{aligned}
\widehat{\countsameday[0]} \triangleq~& \frac{1}{T\lowprob_0}\sum_t C_t\ind{p_t=\pz},\\
\widehat{\countsameday[1]} \triangleq~& \frac{1}{T}\sum_t C_t\ind{p_t=\pz} + \frac{\lowprob_1+\lowprob_2}{T\lowprob_1}\sum_t C_t\ind{p_t=\price[1]}.\\
    \widehat{\countsameday[2]} \triangleq~& \frac{1}{T}\sum_t C_t.
\end{aligned}
\end{equation}

\begin{proposition}\label{prop:unbiased-no-arrival-time}
    The estimators in Equation~\eqref{eqn:unbiased-same-day} are unbiased for the expected purchase masses $\expect{\countsameday[i]}$ of buyers who arrive on the same day, i.e., $\expect{\widehat{\countsameday[i]}}=\expect{\countsameday[i]}, i\in \{0, 1, 2\}$.
\end{proposition}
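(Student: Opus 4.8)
The plan is to express, for each posted price level, the observed total sales $C_t$ in terms of the unobserved same-day sales, using the threshold characterization of buyer behavior from \Cref{corollary: bias in the limit}, and then take expectations using the fact that the i.i.d.\ price sequence is independent of the demand-generating process. First I would record the behavioral structure: in the limit $\delta\to 0$, a buyer of type $(v,\gamma)$ who has not purchased yet buys in a period with posted price $p^{(i)}$ if and only if $v\ge \theta^{(i)}_\gamma := p^{(i)}+\tfrac{\gamma}{1-\gamma}\,\mcQ(p^{(i)})\mcE(p^{(i)})$, and this threshold does not depend on how long the buyer has already waited. Here $\theta^{(2)}_\gamma=p^{(2)}$ (no option value at the lowest price) and $\theta^{(0)}_\gamma>\theta^{(1)}_\gamma>\theta^{(2)}_\gamma$ for $\epsilon$ small. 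Two consequences do the work: (i) a buyer with $v\ge\theta^{(0)}_\gamma$ purchases in their arrival period at whatever price is posted, so there is never a ``backlog'' of such buyers; in particular every sale made at $p^{(0)}$ is a same-day sale, so $\widehat{N^{(0)}}=N^{(0)}$ holds pathwise. (ii) A buyer observed purchasing in a period with price $p^{(1)}$ must have $v\ge\theta^{(1)}_\gamma$; and if they did not arrive that very period they must have $v<\theta^{(0)}_\gamma$ and must have faced only the price $p^{(0)}$ in every intervening period, since a posted $p^{(1)}$ or $p^{(2)}$ would have triggered their purchase earlier.

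Using (ii), I would write $C_t\,\ind{p_t=p^{(1)}}=\ind{p_t=p^{(1)}}\bigl(\Phi_t^{(1)}+\sum_{s<t}\Psi_s\prod_{r=s}^{t-1}\ind{p_r=p^{(0)}}\bigr)$, where $\Phi_t^{(i)}:=\int D_{t\mid\gamma}(\theta^{(i)}_\gamma)\,\mu_t(\gamma)\dd\gamma$ is the arriving mass at $t$ with value at least the buyer's own threshold $\theta^{(i)}_\gamma$, and $\Psi_s:=\Phi_s^{(1)}-\Phi_s^{(0)}$ is the arriving mass at $s$ with $\theta^{(1)}_\gamma\le v<\theta^{(0)}_\gamma$. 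Since the $\theta^{(i)}_\gamma$ (hence the $\Phi,\Psi$'s) depend only on the design, not on the realized price path, taking expectations gives $\expect{C_t\,\ind{p_t=p^{(1)}}}=q_1\expect{\Phi_t^{(1)}}+q_1\sum_{s<t}q_0^{\,t-s}\expect{\Psi_s}$. Summing over $t\in[T]$, exchanging the order of summation, and evaluating the geometric series using $1-q_0=q_1+q_2$ yields $\tfrac{q_1+q_2}{Tq_1}\sum_t\expect{C_t\,\ind{p_t=p^{(1)}}}=\tfrac1T\sum_t\expect{\Phi_t^{(1)}}-\tfrac{q_0}{T}\sum_t\expect{\Phi_t^{(0)}}$ up to an $O(1/T)$ boundary term. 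By (i) the first summand of $\widehat{N^{(1)}}$ equals $\tfrac1T\sum_t\expect{C_t\,\ind{p_t=p^{(0)}}}=\tfrac{q_0}{T}\sum_t\expect{\Phi_t^{(0)}}$, so adding the two summands recovers $\tfrac1T\sum_t\expect{\Phi_t^{(1)}}=\expect{N^{(1)}}$. For $\widehat{N^{(2)}}=\tfrac1T\sum_tC_t$ I would instead invoke flow conservation: every buyer with $v\ge p^{(2)}$ purchases exactly once over a long horizon (at $p^{(2)}$ at the latest), so $\tfrac1T\sum_tC_t$ equals the average arriving mass with $v\ge p^{(2)}$ up to $O(1/T)$, i.e.\ $\tfrac1T\sum_t\expect{\Phi_t^{(2)}}=\expect{N^{(2)}}$.

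The main obstacle is getting step (ii) exactly right — pinning down precisely which previously-arrived buyers can populate the backlog at price $p^{(1)}$ — and carrying out the accounting when patience is heterogeneous and arbitrarily correlated with value. The simplification I would stress is that the ``only $p^{(0)}$ since arrival'' event, and hence the geometric waiting-time distribution, does not depend on $\gamma$; only the value thresholds $\theta^{(i)}_\gamma$ depend on $\gamma$, and those are already integrated out inside $\Phi_t^{(i)}$, so no assumption on the joint law of $(v,\gamma)$ is needed beyond the regularity in \Cref{ass:conditional}. The remaining technical point is bounding the boundary effects — buyers in the market before period $1$, buyers who arrive within the horizon but purchase after it, and the $O(\delta)$ slack in \Cref{corollary: bias in the limit} — all of which are negligible because the mass of buyers ``in flight'' at any instant is $O(1)$, so these contributions are $O(1/T)$ and vanish in the $\delta\to 0$ (equivalently $T\to\infty$) regime in which the downstream estimator of \Cref{thm:unbiased} is analyzed.
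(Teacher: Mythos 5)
Your proposal is correct and follows essentially the same route as the paper: the same threshold characterization from \Cref{corollary: bias in the limit}, the same observation that sales at $p^{(0)}$ are purely same-day and that every buyer with $v\ge p^{(2)}$ eventually purchases (so $\tfrac1T\sum_t C_t$ recovers the lowest-price same-day demand), and the same accounting that total sales at $p^{(1)}$ equal same-day sales plus a backlog of waiters from $p^{(0)}$ whose share is pinned down by the design probabilities alone (your explicit geometric-series summation is just the pathwise version of the paper's ``a $\tfrac{q_1}{q_1+q_2}$ fraction of the waiting mass buys at $p^{(1)}$'' argument). The only organizational difference is that you treat heterogeneous patience directly, noting that the geometric waiting-time structure is $\gamma$-free and integrating $\gamma$ out inside $\Phi_t^{(i)}$, whereas the paper first proves the homogeneous-patience case (\Cref{lem:unbiased-no-arrival-time-homogenous-patience}) and then integrates over $\gamma$ via Fubini; the two are equivalent.
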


The proof of \Cref{prop:unbiased-no-arrival-time} is based on the following high-level arguments. First, it is easy to see that only same-day buyers will purchase at a price of $\price[0]$ and therefore using total sales vs. same-day sales does not change the calculation for the high price level. The second argument is the observation that as the length of the experiment goes to infinity, every buyer who has a value above $\price[2]$, will eventually buy some day. Thus, the aggregate total sales, regardless of price level, converge to the average limit demand at the price point $\price[2]$. As we showed earlier, this is also the limit of the aggregate same-day sales demand at this lowest price point, since everyone that arrives at this price point will immediately buy. The harder part of the argument is identifying an un-biased estimate for the aggregate same-day sales at the medium price level $N^{(1)}$. The total sales at this medium price point is a mixture of same-day sales and sales from prior periods from buyers that rejected price $\price[0]$, but are willing to accept $\price[1]$. Based on this reasoning, it can be shown that the agggregate total sales from a price point of $\price[1]$ converges to the aggregate same-day sales at that price point, plus a term that converges to a scaled difference in same day sales between the high and medium price points $\frac{q_0}{q_1+q_2}(\expect{\countsameday[1]}-\expect{\countsameday[0]})$. Thus total aggregate sales at the medium price point is a linear combination of aggregate same-day sales at the medium price point and aggregate same-day sales at the highest price point. From this observation and using the fact that we already have an un-biased estimate of aggregate same-day sales at the highest price point, we can derive the un-biased estimate for the medium price point.

Having access to these unbiased estimates $\{\hat{\countsameday[i]}\}_{i=0, 1, 2}$, the seller can construct the analogue of the unbiased estimator of $d(\pstar)$ as in \Cref{eq:estimator-demand-elasticity} in a plug-in manner:
\begin{align}\label{eqn:unbiased-no-arrival}
    \hat{d}(\pstar) = \frac{\widehat{\countsameday[1]}-\widehat{\countsameday[2]} -\frac{\lowprob_2}{\lowprob_1}(\widehat{\countsameday[0]}-\widehat{\countsameday[1]} - (\widehat{\countsameday[1]}-\widehat{\countsameday[2]}))}{\epsilon}
\end{align}
The following theorem is an immediate corollary of \Cref{thm:unbiased} and \Cref{prop:unbiased-no-arrival-time}, hence we omit its proof.
\begin{theorem}
Suppose Assumptions~\ref{ass:bounded_patientce},~\ref{ass:limit-exists},~\ref{ass:bounded-der},~\ref{ass:conditional} and \ref{ass:rational} are satisfied.
For an arbitrary sequence of distributions over the buyers' private type $(v, \delta)$ and post-experiment price beliefs, the estimator in \Cref{eqn:unbiased-no-arrival} is asymptotically unbiased.
\end{theorem}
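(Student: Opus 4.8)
The plan is to obtain this statement as an immediate corollary of \Cref{thm:unbiased} and \Cref{prop:unbiased-no-arrival-time}. The starting observation is that the estimator in \Cref{eqn:unbiased-no-arrival} is built from the estimator in \Cref{eq:estimator-demand-elasticity} purely by the plug-in substitution $\countsameday[i]\mapsto\widehat{\countsameday[i]}$, and --- crucially --- that the map sending the triple of same-day aggregates to the estimate is \emph{linear}. Concretely, writing $\Lambda$ for the fixed linear functional $\Lambda(a,b,c)=\tfrac{1}{\epsilon}\bigl(b-c-\tfrac{\lowprob_2}{\lowprob_1}(a-2b+c)\bigr)$, we have $\eddunbias(\pstar)=\Lambda\bigl(\countsameday[0],\countsameday[1],\countsameday[2]\bigr)$ for the known-arrival-time estimator, and $\hat{d}(\pstar)=\Lambda\bigl(\widehat{\countsameday[0]},\widehat{\countsameday[1]},\widehat{\countsameday[2]}\bigr)$ for the plug-in estimator in \Cref{eqn:unbiased-no-arrival}, with the $\widehat{\countsameday[i]}$ defined in \Cref{eqn:unbiased-same-day}. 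So the first thing I would do is verify, by direct inspection, that the coefficients $\tfrac{\lowprob_2}{\lowprob_1}$ and the $1/\epsilon$ scaling are literally identical in the two display equations, so that it really is one and the same $\Lambda$ evaluated at two different inputs.

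Next I would apply linearity of expectation together with \Cref{prop:unbiased-no-arrival-time}. The proposition gives the exact, finite-horizon identities $\expect{\widehat{\countsameday[i]}}=\expect{\countsameday[i]}$ for $i\in\{0,1,2\}$; combined with the linearity of $\Lambda$ this yields $\expect{\hat{d}(\pstar)}=\Lambda\bigl(\expect{\widehat{\countsameday[0]}},\expect{\widehat{\countsameday[1]}},\expect{\widehat{\countsameday[2]}}\bigr)=\Lambda\bigl(\expect{\countsameday[0]},\expect{\countsameday[1]},\expect{\countsameday[2]}\bigr)=\expect{\eddunbias(\pstar)}$ for every switchback design $(S,\mb{q},\delta)$ and every admissible demand-generating process. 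Since this is an equality of expectations that holds for each value of $\delta$, I can take $\delta\to 0$ on both sides to get $\lim_{\delta\to 0}\expect{\hat{d}(\pstar)}=\lim_{\delta\to 0}\expect{\eddunbias(\pstar)}$, and then invoke \Cref{thm:unbiased}: its conclusion $\bigl|\lim_{\delta\to 0}\expect{\eddunbias(\pstar)}-d(\pstar)\bigr|<\rho$ for all $\epsilon<\epsilon_0$ and all processes transfers verbatim to $\hat{d}(\pstar)$, which is exactly the definition of asymptotic unbiasedness.

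There is no real analytic obstacle: the entire content is ``a linear plug-in preserves expectations, so a plug-in estimator that matches an unbiased estimator's expectation inherits that estimator's asymptotic guarantee.'' The only points worth a moment's care --- and the reason I would want to double-check rather than merely assert --- are: (i) that \Cref{prop:unbiased-no-arrival-time} is an \emph{exact} identity (each $\widehat{\countsameday[i]}$ being a deterministic linear reweighting of the per-period totals $C_t$ whose expectation is computed exactly), so that composing it with the $\delta\to 0$ limit raises no interchange-of-limits or uniform-integrability issue; and (ii) that the plug-in substitution really leaves the $\Lambda$ coefficients untouched. Both are verified by inspection, which is why the statement can legitimately be left as an immediate corollary.
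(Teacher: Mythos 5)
Your proposal is correct and is exactly the route the paper takes: the paper states this theorem as an immediate corollary of \Cref{thm:unbiased} and \Cref{prop:unbiased-no-arrival-time} (omitting the proof), and your linearity-of-expectation plug-in argument via the common functional $\Lambda$ is precisely that omitted step. One minor caveat: the paper's proof of \Cref{prop:unbiased-no-arrival-time} actually establishes $\expect{\widehat{\countsameday[i]}}=\expect{\countsameday[i]}$ only in the $\delta\to 0$ limit (end-of-experiment effects prevent an exact finite-$\delta$ identity, contrary to your point (i)), but this does not harm your argument, since asymptotic unbiasedness concerns only $\lim_{\delta\to 0}$ of expectations, which still pass through the fixed linear map $\Lambda$.
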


The estimates in Equation~\eqref{eqn:unbiased-same-day} can be expressed in terms of aggregate total sales mass quantities $C^{(i)}$ as defined in Equation~\eqref{eqn:total-day-agg}.
\begin{equation}
\begin{aligned}
\widehat{\countsameday[0]} =~& C^{(0)},\\
\widehat{\countsameday[1]} \triangleq~& q_0 C^{(0)} + (q_1 + q_2) C^{(1)} = C^{(1)} - q_0 (C^{(1)}-C^{(0)})\\
    \widehat{\countsameday[2]} \triangleq~& q_0 C^{(0)} + q_1 C^{(1)} + q_2 C^{(2)}.
\end{aligned}
\end{equation}
Using this equivalence, in essence, \Cref{prop:unbiased-no-arrival-time} shows that the difference $N^{(1)}-N^{(2)}$ can be replaced by the un-biased estimate $q_2 (C^{(1)} - C^{(2)})$ and the difference $N^{(0)}-N^{(1)}$ can be replaced by the un-biased estimate $(q_1 + q_2) (C^{(0)} - C^{(1)})$. In other words, the aggregate difference in total sales between two price points can be shown to be in expectation a scaled multiple of the aggregate difference of same-day sales of these two price points, where the factor only depends on the discount probabilities. Using the aggregate total sales expressions, the estimate in Equation~\eqref{eqn:unbiased-no-arrival} can be simplified as follows (see Appendix~\ref{app:simple-derivation} for the derivation):
\begin{align*}
    \hat{d}(\pstar)
    =~& q_2 \left(1 + \frac{q_2}{q_1}\right)\frac{(C^{(1)} - C^{(2)}) - (C^{(0)} - C^{(1)})}{\epsilon}
    =~ q_2 \left(1 + \frac{q_2}{q_1}\right) \frac{2 C^{(1)} - C^{(0)} - C^{(2)}}{\epsilon}
\end{align*}
Thus, we derived that the unbiased estimator is simply the difference of differences in total sales, appropriately scaled.
When both discounts are given with equal probability $q$, then the scaling factor is $2q$, i.e. the total probability of seeing a discounted price.

 \section{Why Static Demand When Constantly Experimenting?}
 \label{sec:static vs dynamic}

 If the seller is constantly experimenting so as to adjust prices, then one wonders if static demand and its gradient are relevant quantities that the seller should pay attention to when adjusting prices. Under constant experimentation, maybe the relevant quantity is the gradient of the forward-looking demand that accounts for the price randomization. The latter could potentially be first-order different from the gradient of the static demand. In other words, if prices are never constant, then why would we care about the hypothetical demand under constant pricing? 
 
 We show that even in a platform that constantly runs local pricing experiments around some reference price, the static demand gradient is the quantity that one needs to estimate to decide how to adjust the reference price $\pstar$ after one local price experiment finishes and another begins!

\paragraph{Dynamic Demand Gradient} Consider two switchback experiments with different reference prices $\pstar, \pstar'$ and the same price perturbation bounds $\epsilon = \pstar - \price[K] = \pstar' - \pzKz$. The reference price adjustment $\alpha = \pstar'-\pstar$ is small. 
Instead of the static demand gradient fixing the price constant, the seller is interested in the demand gradient under experiments. 
Formally, for the first experiment with reference price $\pstar$, define the \textit{dynamic demand} under experiment:
\begin{equation}
    D_{s_1} = \lim_{\delta_1\to 0}\E_{s_1}\left[\frac{1}{T}\sum_{t=1}^{T} C_t\right],
\end{equation}
where the expectation is taken over $T$,  the day on which  the experiment ends and realized purchase mass $C_t$'s.
For the second experiment with price $\pstar'$, define the dynamic demand under experiment the same as $D_{s_1}$:
\begin{equation*}
    D_{s_2} = \lim_{\delta_2\to 0}\E_{s_2}\left[\frac{1}{T}\sum_{t=1}^{T} C_t\right].
\end{equation*}
The seller is interested in the following \textit{dynamic demand gradient} in the limit 
\begin{equation*}
    d_s(\pstar) = \lim_{\epsilon\to 0}\lim_{\alpha\to 0} \frac{D_{s_2} - D_{s_1}}{\alpha},
\end{equation*}
where $\lim_{\alpha\to 0} \frac{D_{s_2} - D_{s_1}}{\alpha}$ is the dynamic gradient under experiments. This gradient is the change in total demand if we first run the switchback experiment $s_1$ and then run switchback experiment $s_2$, by slightly adjusting the reference price of $s_1$ by a small $\alpha$. Identifying the demand gradient allows the seller to decide whether she should increase or decrease the reference price.

\Cref{prop: demand static vs dynamic} shows that the demand gradient under experiment is consistent with the static demand gradient. 
 \begin{proposition}
 \label{prop: demand static vs dynamic}
Consider two switchback experiments, specified by two sets $S_1, S_2$ of prices, such that
     \begin{itemize}
         \item the two experiments have different reference prices $\pstar$ and $\pstar'$ with $\alpha = \pstar'-\pstar$;
         \item the price perturbation bounds are the same, i.e.\ $\epsilon = \pz - \price[K] =\pzz - \pzKz$. 
     \end{itemize}
 The demand gradient under experiments equals the static demand gradient:
\begin{equation*}
    d_s(\pstar) = d(\pstar).
\end{equation*}
 \end{proposition}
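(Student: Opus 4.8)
The plan is to reduce the claim to a single clean fact: for \emph{any} switchback experiment $s$ whose price set has lowest price $p^{(K)}$ and discount probabilities $\mb{q}=(q_0,\dots,q_K)$, the dynamic demand under experiment coincides with the static demand at that lowest price,
\begin{equation*}
  D_s=\lim_{\delta\to 0}\E_s\!\left[\frac{1}{T}\sum_{t=1}^{T}C_t\right]=D\bigl(p^{(K)}\bigr).
\end{equation*}
Granting this, the proposition follows immediately. Both experiments have lowest price equal to their reference price minus the common perturbation bound $\epsilon$, so $D_{s_1}=D(\pstar-\epsilon)$ and $D_{s_2}=D(\pstar'-\epsilon)=D(\pstar+\alpha-\epsilon)$, hence by twice-differentiability of $D$ (\Cref{ass:limit-exists})
\begin{equation*}
  \lim_{\alpha\to 0}\frac{D_{s_2}-D_{s_1}}{\alpha}
  =\lim_{\alpha\to 0}\frac{D(\pstar-\epsilon+\alpha)-D(\pstar-\epsilon)}{\alpha}
  =D'(\pstar-\epsilon)=d(\pstar-\epsilon),
\end{equation*}
and then letting $\epsilon\to 0$ and using continuity of $d$ near $\pstar$ (\Cref{ass:bounded-der}) gives $d_s(\pstar)=d(\pstar)$.

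The substance is in proving $D_s=D(p^{(K)})$. The first step is to identify who ever purchases. Since $p^{(K)}$ is the minimum of the support of the experimental price $P$, we have $\mcQ(p^{(K)})\mcE(p^{(K)})=\expect{(p^{(K)}-P)_+}=0$, so by \Cref{lem:waiting behavior} the acceptance threshold at $p^{(K)}$ is $p^{(K)}+O(\delta)$ (the correction carrying an explicit factor $\delta$ from the post-experimental continuation term), while at every higher price $p\in S$ the threshold $p+\bias$ exceeds $p^{(K)}$ because $\bias\ge 0$. Consequently a buyer with $v<p^{(K)}$ never buys at any price in $S$, whereas a buyer with $v$ above $p^{(K)}+O(\delta)$ will, over the infinite horizon, eventually be offered $p^{(K)}$ with probability one and buy then if not earlier; thus the mass of period-$t$ arrivals that ever purchases lies in the interval $[D_t(p^{(K)}+c\delta),\,D_t(p^{(K)})]$ for a uniform constant $c$. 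This is precisely the ``everyone eventually buys at the lowest price'' observation already used in the discussion preceding \Cref{prop:unbiased-no-arrival-time}.

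The second step converts this into a finite-horizon sandwich. Conditioning on the realized horizon $T$ (independent of the i.i.d.\ price draws), an ever-buyer who arrived at period $t$ and has not yet purchased by period $T$ must not have been offered $p^{(K)}$ anywhere in $[t,T]$, an event of probability at most $(1-q_K)^{T-t+1}$; summing bounds the defect by $\sum_{t\le T}(1-q_K)^{T-t+1}\le 1/q_K$. Combining with the first step,
\begin{equation*}
  \sum_{t=1}^{T}D_t\bigl(p^{(K)}+c\delta\bigr)-\frac{1}{q_K}\ \le\ \sum_{t=1}^{T}C_t\ \le\ \sum_{t=1}^{T}D_t\bigl(p^{(K)}\bigr).
\end{equation*}
Dividing by $T$, using \Cref{ass:limit-exists} for the a.s.\ convergence of the Ces\`aro averages, $\E[1/T]\to0$ and $T\to\infty$ as $\delta\to0$, the uniform bound $\frac1T\sum_{t\le T}C_t\in[0,1]$, and the derivative bound of the limit demand near $\pstar$ from \Cref{ass:bounded-der} (so that $D(p^{(K)}+c\delta)-D(p^{(K)})=O(\delta)$), bounded convergence yields $D_s=D(p^{(K)})$, completing the reduction.

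I expect the main obstacle to be discharging these interchanges of limits rigorously. The delicate point is controlling the $O(\delta)$-type corrections — coming both from the additive shading term in \Cref{lem:waiting behavior} and from buyers' private post-experimental beliefs $\phi_t$ — \emph{uniformly} over demand-generating processes and over the horizon, so that they survive dividing by $T$ and sending $\delta\to0$; a convenient device is that, since $D_t(p^{(K)}+c\delta)$ is monotone in $\delta$, the lower squeeze bound can be handled by comparing against a fixed $\delta_0$ and only then letting $\delta_0\to0$, which avoids a genuine diagonal limit. It also matters that $D_s$ is defined with $\delta\to0$ taken first, before $\alpha\to0$ and $\epsilon\to0$. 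The edge effect — buyers arriving in the last few periods who may not have time to be offered $p^{(K)}$ — is harmless because $\sum_{t\le T}(1-q_K)^{T-t+1}\le 1/q_K$ is $o(T)$. Once these are handled, the chain of limits in the first paragraph is routine.
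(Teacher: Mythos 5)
Your proposal is correct and follows essentially the same route as the paper's proof: identify $D_{s_i}$ with the static demand at the lowest experimental price (``everyone with value above $p^{(K)}$ eventually buys''), so the dynamic gradient is $d(\pstar-\epsilon)$, and then let $\epsilon\to 0$. The paper simply asserts $D_{s_1}=D(p^{(K)})$ and $D_{s_2}=D(p^{(K)}+\alpha)$ without spelling out the sandwich bound, the $O(\delta)$ shading correction, or the end-of-horizon edge effect, so your write-up is a more careful rendition of the same argument rather than a different one.
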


We can similarly define the revenue gradient under experiments as:
\begin{equation}
    \rev_{s_i} = \lim_{\delta_i\to 0}\E_{s_i}\left[\frac{1}{T}\sum_{t}C_t\cdot\price_t\right].
\end{equation}
for $i\in \{1, 2\}$. The dynamic revenue gradient is 
\begin{equation}
    \drev_s(\pstar) = \lim_{\epsilon\to 0}\lim_{\alpha\to 0}\frac{\rev_{s_2} - \rev_{s_1}}{\alpha}.
\end{equation}

\paragraph{Dynamic Revenue Gradient} \Cref{prop:rev static vs dynamic} shows that the dynamic revenue gradient is consistent with the static revenue gradient, if the platform only adjusts the reference price of the switchback experiment. We write $\rev_{t|\patience}$ as the static conditional daily revenue, from the sub-population of buyers with patience level $\patience$. We also define as $D_{t|\patience}(\price)$ the daily demand that comes from the sub-population of buyers with patience level $\patience$, that is, it is the total mass of buyers in period $t$ who have patience level $\patience$ and will purchase if the price is fixed to $\price$ over the horizon.
\begin{proposition}
\label{prop:rev static vs dynamic}
Consider two switchback experiments, specified by two sets $S_1, S_2$ of prices and $\mb{q}_1,\mb{q}_2$ experimentation probabilities, such that
     \begin{itemize}
         \item the two experiments have different reference prices $\pstar$ and $\pstar'$ with $\alpha = \pstar'-\pstar$;
         \item the local price perturbations are the same, $|S_1| = |S_2|$ and $\pstar - \price[i]=\pstar'-p^{(i)_2}$ and the experimentation probabilities are the same $\mb{q}_1=\mb{q}_2=\mb{q}$.
     \end{itemize}
Suppose that $D_{t|\patience}$ is twice differentiable for each $t$ and $\patience$ with a uniformly bounded second-order derivative across $t,\patience$. 
The revenue gradient under experiments equals the static revenue gradient:
\begin{equation*}
    \drev_s(\pstar) = \drev(\pstar).
\end{equation*}
\end{proposition}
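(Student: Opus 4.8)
The plan is to reduce this to \Cref{prop: demand static vs dynamic} together with the observation that the discounts contribute only a vanishing amount to the revenue gradient. Write $\pstar^{(i)}$ for the reference price of experiment $s_i$ (so $\pstar^{(1)}=\pstar$ and $\pstar^{(2)}=\pstar'=\pstar+\alpha$). Using $p_t = \pstar^{(i)} - (\pstar^{(i)} - p_t)$ and linearity, for $i\in\{1,2\}$,
\begin{equation*}
\rev_{s_i} = \pstar^{(i)}\, D_{s_i} - G_{s_i}, \qquad G_{s_i} \triangleq \lim_{\delta_i\to0}\E_{s_i}\!\left[\frac1T\sum_t C_t\,(\pstar^{(i)}-p_t)\right],
\end{equation*}
where $D_{s_i}$ is exactly the dynamic demand of \Cref{prop: demand static vs dynamic} and $G_{s_i}$ is a ``discount-weighted demand.'' Subtracting and dividing by $\alpha$,
\begin{equation*}
\frac{\rev_{s_2}-\rev_{s_1}}{\alpha} = \pstar\,\frac{D_{s_2}-D_{s_1}}{\alpha} + D_{s_2} - \frac{G_{s_2}-G_{s_1}}{\alpha}.
\end{equation*}
I would then take $\alpha\to0$ and afterward $\epsilon\to0$, handling the three terms separately; existence of the nested limit follows once the third term is controlled, since the first two have limits by the arguments below.

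The first term is immediate: $\lim_{\epsilon\to0}\lim_{\alpha\to0}\pstar\frac{D_{s_2}-D_{s_1}}{\alpha} = \pstar\, d_s(\pstar) = \pstar\, d(\pstar)$ by \Cref{prop: demand static vs dynamic}. For the second term, as $\alpha\to0$ the experiment $s_2$ degenerates to $s_1$, so $D_{s_2}\to D_{s_1}$; and $\lim_{\epsilon\to0}D_{s_1}=D(\pstar)$, because by \Cref{corollary: bias in the limit} the option-value correction $\frac{\gamma}{1-\gamma}\mcQ(p)\mcE(p)$ vanishes at the lowest experimental price $p^{(K)}=\pstar-\bar{\epsilon}$, so in the $\delta\to0$ limit every buyer with value at least $\pstar-\bar{\epsilon}$ eventually purchases exactly once and every buyer with lower value never does; hence $D_{s_1}=D(\pstar-\bar{\epsilon})\to D(\pstar)$ by continuity of $D$ (this is precisely the computation behind \Cref{prop: demand static vs dynamic}). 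These two contributions already sum to $\pstar\, d(\pstar)+D(\pstar) = \rev'(\pstar) = \drev(\pstar)$, so everything reduces to showing that the third term vanishes: $\lim_{\epsilon\to0}\lim_{\alpha\to0}\frac{G_{s_2}-G_{s_1}}{\alpha}=0$.

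This is where the hypothesis that $s_1$ and $s_2$ use the same per-level discounts and the same probabilities $\mb{q}$ is essential: the sequence of discounts $\pstar^{(i)}-p_t$ has the same distribution under both experiments. Writing $\eta^{(j)}=\pstar^{(i)}-p^{(j)}$ for the (common) discount at level $j$ and $C^{(j)}_{s_i}$ for the $\delta\to0$ limit of the long-run average sales mass in level-$j$ periods of $s_i$, the same steady-state/ergodic bookkeeping used in \Cref{sec: two_price_est} and \Cref{sec:debias} gives $G_{s_i}=\sum_j q_j\,\eta^{(j)}\,C^{(j)}_{s_i}$, so
\begin{equation*}
\frac{G_{s_2}-G_{s_1}}{\alpha} = \sum_j q_j\,\eta^{(j)}\,\frac{C^{(j)}_{s_2}-C^{(j)}_{s_1}}{\alpha}, \qquad |\eta^{(j)}|\le\bar{\epsilon}.
\end{equation*}
It therefore suffices to show that, for fixed $\epsilon$, each $C^{(j)}_s$ is a Lipschitz function of the reference price with a constant independent of $\epsilon$: then the difference quotients are uniformly bounded and $\big|\frac{G_{s_2}-G_{s_1}}{\alpha}\big|\le\bar{\epsilon}\cdot O(1)\to0$ after $\alpha\to0$ and then $\epsilon\to0$. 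Using \Cref{corollary: bias in the limit}, $C^{(j)}_s$ is an explicit time-average of the per-period conditional demands $D_{t\mid\gamma}$ evaluated at the acceptance thresholds $p^{(j')}+\frac{\gamma}{1-\gamma}\mcQ(p^{(j')})\mcE(p^{(j')})$, combined with the first-success probabilities $q_{j'}/\sum_{l\ge j'}q_l$ that determine where a waiting buyer transacts; shifting the reference price by $\alpha$ shifts every threshold by exactly $\alpha$, and the uniform bounds of \Cref{ass:conditional} together with the assumed uniformly bounded second derivative of $D_{t\mid\gamma}$ make this map Lipschitz (indeed $C^1$) in the reference price with an $\epsilon$-independent constant.

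The main obstacle is this last step: turning the informal ``ergodic bookkeeping'' into the closed form $G_{s_i}=\sum_j q_j\eta^{(j)}C^{(j)}_{s_i}$ and establishing the uniform Lipschitz dependence of $C^{(j)}_s$ on the reference price, all while controlling the $\delta\to0$ edge effects (buyers not yet transacted when the experiment ends). This is the same technical core as in the proofs of \Cref{thm: multiplicative bias} and \Cref{thm:unbiased}, which I would reuse; the genuinely new ingredients are the elementary decomposition $\rev_{s_i}=\pstar^{(i)}D_{s_i}-G_{s_i}$ and the observation that, because the discount levels are common to both experiments and are $O(\epsilon)$, they multiply the (already $O(\alpha)$-smooth) demand response and hence drop out of the gradient.
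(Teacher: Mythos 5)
Your proposal is correct in substance but takes a genuinely different route from the paper. The paper's proof never decomposes revenue: it works directly with the per-period conditional revenue, using \Cref{corollary: bias in the limit} to write the sales of the patience-$\patience$ subpopulation as $D_{t|\patience}\bigl(p^{(i)}+\Lambda^{(i)}_{\patience}\bigr)$ with $\Lambda^{(i)}_{\patience}=\frac{\patience}{1-\patience}\mcQ^{(i)}\mcE^{(i)}$, observing that $\mcQ^{(i)},\mcE^{(i)}$ (hence $\Lambda^{(i)}_{\patience}$) are \emph{identical} across the two experiments because the discounts and probabilities match, so the shift by $\alpha$ acts only through the evaluation point and the posted price; it then differentiates the product $D_{t|\patience}(p^{(i)}+\alpha+\Lambda^{(i)}_{\patience})(p^{(i)}+\alpha)$ in $\alpha$ (using uniformly bounded first derivatives to pass $\alpha\to 0$ inside the time average) and sends $\epsilon\to 0$ using the bounded second derivative, landing directly on $\pstar d(\pstar)+D(\pstar)$, with no appeal to \Cref{prop: demand static vs dynamic}. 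Your route instead writes $\rev_{s_i}=\pstar^{(i)}D_{s_i}-G_{s_i}$, reduces the dominant part to \Cref{prop: demand static vs dynamic} plus $D_{s_i}\to D(\pstar)$, and kills the discount term by showing $\bigl|\tfrac{G_{s_2}-G_{s_1}}{\alpha}\bigr|\le \bar{\low}\cdot O(1)$ via an $\epsilon$-uniform Lipschitz dependence of the per-level average sales $C^{(j)}_s$ on the reference price. This is sound: $G_{s_i}=\sum_j q_j\eta^{(j)}C^{(j)}_{s_i}$ is just linearity plus the definition of $C^{(j)}$, and the Lipschitz claim follows because shifting the reference price shifts every acceptance threshold by exactly $\alpha$ while the $\mb{q}$-dependent routing of waiting buyers is unchanged (the same invariance the paper exploits), with constants controlled by the bounds in \Cref{ass:conditional}. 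What your approach buys is that it works with the total sales $C_t\cdot p_t$ exactly as $\rev_{s_i}$ is defined, so waiting-buyer revenue is accounted for explicitly (the paper's proof tacitly replaces $C_{t|\patience}$ by the same-day mass, which is harmless only because the neglected waiting-buyer term is $O(\low)$ after differencing, a point your decomposition makes transparent); what it costs is the extra bookkeeping you flag yourself, namely the closed-form or at least Lipschitz characterization of $C^{(j)}_s$ for a general $K$-price design with heterogeneous patience, which parallels the computations in \Cref{thm: multiplicative bias} and \Cref{prop:unbiased-no-arrival-time} and would need to be written out to make the argument complete, whereas the paper's direct differentiation avoids it entirely.
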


\bibliographystyle{unsrtnat}
\bibliography{ref}

\appendix
\section{Proof of Results in Section~\ref{sec:buyer}}

\subsection{Proof of \Cref{lem:waiting behavior}}

\begin{proof}[Proof of \Cref{lem:waiting behavior}]
Since the buyer is facing a stationary Markov Decision Process, with a discounted utility, we can write the Bellman equation for the buyer's continuation utility $V(p_t, e)$, conditional on observing price $p_t$ at the current period and conditional on the indicator $e\in \{0,1\}$ of whether the experiment is still going on:
\begin{equation*}
   V(p_t, e_t)=\max\bigg(v - p_t, \patience\cdot \expect{V(p_{t+1}, e_{t+1})\mid p_t, e_t}\bigg)
\end{equation*}
Note that the distribution of $p_{t+1}, e_{t+1}$ is not affected by the current period price or action, conditional on $e_t$. In particular, if $e_t=0$, then $e_{t+1}=0$ and $p_{t+1}$ is drawn from the buyer's belief about the post-experiment price and if $e_t=1$, then with probability $\delta$, the experiment ends in the next period and $e_{t+1}=0$ and $p_{t+1}$ is drawn from the buyer's belief about the post-experiment price (which has mean $\pstar$) and with probability $1-\delta$ the experiment does not end and $e_{t+1}=1$ and the price is drawn from the distribution of experimental prices. Let $(P, E)$ denote the random state, which consists of the random price and the random indicator of experiment termination, drawn from the aforementioned distribution. Then we have:
\begin{equation*}
   V(p_t, e_t)=\max\bigg(v - p_t, \patience\cdot \expect{V(P, E)\mid e_t}\bigg)
\end{equation*}
From this we observe that during the experimental phase, the optimal choice at each period will be a threshold price, i.e. the buyer will choose to buy if 
\begin{align*}
    v-p_t \geq \expect{V(P, E)\mid e_t=1} \Leftrightarrow p_t \leq v - \expect{V(P, E)\mid e_t=1}.
\end{align*}
Notice that since the experiment prices take values from a finite set $S$, there exists $p_i, p_{i+1}\in S$, with  $p_i>p_{i+1}$, such that the buyer will buy immediately at $p_{i+1}$, but not at $p_i$. We let $S_* = [p_{i+1}, p_i)$ be the set of optimal threshold prices.

Let $u_{\text{wait}}(p)$ be the expected utility of the buyer if he decides to set threshold $p$ during the experimental phase. Although the experiment posts prices from a finite set $S$, we do not restrict $p$ to a finite set $p\in S$. We now argue that, for every buyer with value $v$,  there exists an optimal threshold $p_\theta\in S_*$ such that the buyer is indifferent between purchasing directly if offered price $p_{\theta}$ or wait.  
\begin{equation}
\label{eq:purchase indifference}
    v - p_{\theta} = u_{\text{wait}}(p_{\theta}).
\end{equation}
First, by definition of $u_{\text{wait}}(p)$, $u_{\text{wait}}(p)$ is a constant for all $p\in S_*$. It holds that
\begin{align*}
    v - p_{i+1} \geq  u_{\text{wait}}(p), \forall p\in S_*,\\
    v - p_{i} < u_{\text{wait}}(p), \forall p\in S_*.
\end{align*}
which proves the existence of $p_{\theta}$ satisfying purchase indifference in \Cref{eq:purchase indifference}.

We are now going to prove the following statement: the buyer purchases at $p_t$ if and only if
\begin{equation}
\label{eq:waiting_bias_uwait}
    v-p_t\geq u_{\text{wait}}(p_t).
\end{equation}

We prove this by first characterizing the form of the function $u_{\text{wait}}(p)$. By the form of the utilities, $u_{\text{wait}}(p)$ takes the form:
\begin{align*}
    u_{\text{wait}}(p) =~& \expect{\sum_{\tau=1}^{\infty} \patience^\tau (v - p_\tau) \ind{p_\tau \leq  p, \forall \tau'<\tau: p_{\tau'} > p, e_\tau=1}}\\
    ~& + \expect{\sum_{\tau=1}^{\infty}\patience^\tau U(v, \patience, \phi_{t+\tau|t}) \ind{e_\tau=0, \forall \tau'<\tau: p_{\tau'} > p, e_{\tau'}=1}}
\end{align*}

Using the independence of the prices within the experimental period we can write the first part of $u_\wait(p)$ as:
\begin{align*}
    u_1(p) = \sum_{\tau=1}^\infty \patience^{\tau} (1 - \delta)^{\tau} \expect{v-p_\tau\mid p_\tau \leq p, e_\tau=1} \Pr(p_\tau \leq p\mid e_\tau=1)
\end{align*}

Note that the 
$u_1(p)$ depends only on 1) $\mcE(p) := \expect{p - p_t | p_t \leq p, e_t=1}$, the expected price reduction $\low(p)$ over $p$ during the experimental phase; 2) $\mcQ(p):=\Pr(p_t\leq p\mid e_t=1)$, the probability that there is a lower price than $p$ during the experimental phase. Both are quantities that are independent of the period $t$. Given these quantities we can write:
    \begin{align*}\label{eq:u wait infinite horizon}
        u_{1}(p) =~&\sum_{\tau=1}^{\infty}\patience^\tau (1-\delta)^{\tau} (1-\mcQ(p))^{\tau-1}\mcQ(p)\, (v-p+\mcE(p))\nonumber\\
        =~& \frac{\patience(1-\delta)\mcQ(p)}{1-\patience(1-\delta)(1-\mcQ(p))}(v-p+\mcE(p)).
    \end{align*}

The second part of $u_\wait(p)$ takes the form:
\begin{align*}
    u_{\text{post}}(p):=& \sum_{\tau=1}^\infty \patience^{\tau} (1-\delta)^{\tau-1} \delta (1 - \mcQ(p))^{\tau-1}\utafter(v, \patience, \phi_{t+\tau|t})%
\end{align*}

Thus, %
\begin{align*}
    u_\wait(p) = \frac{\patience(1-\delta)\mcQ(p_t)}{1-\patience(1-\delta)(1-\mcQ(p_t))}(v-p+\mcE(p)) +  u_{\text{post}}(p)
\end{align*}
Thus, \Cref{eq:waiting_bias_uwait} takes the following form:
\begin{align*}
    v - p_t \geq~& u_\wait(p_t) \\
    =~& \frac{\patience(1-\delta)\mcQ(p_t)}{1-\patience(1-\delta)(1-\mcQ(p_t))}(v-p_t+\mcE(p_t)) +  u_{\text{post}}(p)
\end{align*}
 Re-arranging and simplifying the above expression, \Cref{eq:waiting_bias_uwait} is equivalent to:
 \begin{equation}
     v \geq p_t + \frac{\patience(1-\delta) \mcQ(p_t)}{1-\patience(1-\delta)} \, \mcE(p_t) +  u_{\text{post}}(p)
 \label{eq:waiting bias explicit}
\end{equation}

 First, by the definition of $p_{\theta}$, note that if 
 \begin{equation*}
     v-p_t<u_{\text{wait}}(p_t)\leq u_{\text{wait}}(p_{\theta}),
 \end{equation*}
 which is equivalent to 
 \begin{equation*}
      v < p_t + \frac{\patience(1-\delta) \mcQ(p_t)}{1-\patience(1-\delta)} \, \mcE(p_t) +  u_{\text{post}}(p)
 \end{equation*}
then the buyer will not purchase at $p_t$.

 Conversely, if a buyer does not purchase at $p_t$, it must hold that $p_t\geq p_{\theta}$. By \Cref{eq:purchase indifference},
 \begin{equation*}
     v = p_{\theta} + \frac{\patience(1-\delta) \mcQ(p_{\theta})}{1-\patience(1-\delta)} \, \mcE(p_{\theta}) +  u_{\text{post}}(p)
 \end{equation*}
Notice that $\mcQ(p)\mcE(p) = \expect{\ind{P<p}\cdot (p - P)}$, which is increasing in $p$.  We know 
\begin{equation*}
     v < p_t + \frac{\patience(1-\delta) \mcQ(p_t)}{1-\patience(1-\delta)} \, \mcE(p_t) +  u_{\text{post}}(p)
\end{equation*}
from the increasingness of $\mcQ(p)\mcE(p)$ in $p$.

 \end{proof}

\subsection{Proof of \Cref{corollary: bias in the limit}}
\begin{proof}[Proof of \Cref{corollary: bias in the limit}]
    The 
    utility $\utafter(v, \patience, \phi_{t+\tau|t})$ is the utility that the buyer obtains from a future price and trivially satisfies $\utafter(v, \patience, \phi_{t+\tau|t})\leq v$. Thus, 
    \begin{equation*}
      u_{\text{post}}(v, \patience, \phi) =  \sum_{\tau=1}^\infty \patience^{\tau} (1-\delta)^{\tau-1} \delta (1 - \mcQ(p))^{\tau-1}\utafter(v, \patience, \phi_{t+\tau|t})\leq \frac{\patience\delta}{1-\patience(1-\delta)}v .
    \end{equation*}

    Thus, if  the following holds true, the buyer purchases immediately:
  \begin{align}
  \label{eq:lem waiting bias}
     v \geq p_t +  \frac{\patience(1-\delta) \mcQ(p_t)}{1-\patience(1-\delta)} \, \mcE(p_t) + \frac{\patience \delta}{1-\patience(1-\delta)}v.
  \end{align}
  By rearranging the terms, \Cref{eq:lem waiting bias} is equivalent to 
  \begin{equation*}
      v\geq p_t + \frac{\patience(1-\delta) \mcQ(p_t)}{1-\patience} \, \mcE(p_t) + \frac{\patience \delta}{1-\patience}p_t.
  \end{equation*}
We  know $\bias\leq \frac{\patience(1-\delta) \mcQ(p_t)}{1-\patience} \, \mcE(p_t) + \frac{\patience \delta}{1-\patience}p_t$, where the experimental price $p_t$ is bounded by a constant. On the other hand, $\bias\geq \frac{\patience(1-\delta) \mcQ(p_t)}{1-\patience} \, \mcE(p_t)$ since $\utafter(v, \patience, \phi_{t+\tau|t})$ is non-negative. By taking $\delta\to 0$, we have 
\begin{equation*}
     \lim_{\delta\to 0}\bias = \frac{\patience }{1-\patience} \, \mcQ(p_t)\mcE(p_t).
\end{equation*}
\end{proof}

\section{Proofs of Results in Section~\ref{sec: two_price_est}}

\subsection{Proof of \Cref{thm:impossible}}
\begin{proof}
We restrict our attention to demand-generating processes with a fixed linear daily demand $D$ and a homogenous buyer patience level $\gamma$. First, by \Cref{corollary: bias in the limit}, for any fixed two-price design, uniquely determined by the pair $(\epsilon, q)$, as $\delta\to 0$, a buyer with value $v$ and patience level $\gamma$, will purchase when the reference price $\pstar$ is posted if $v \geq \pstar + \frac{\gamma}{1-\gamma} q \epsilon$ 
and will purchase when the discounted price $\pstar-\epsilon$ is posted if $v\geq \pstar-\epsilon$. Thus the observable data $\{C_t, N_t\}_{t=1}^\infty$ from each demand-generating process can be uniquely characterized by $D(\pstar+\frac{\gamma}{1-\gamma}q\epsilon)$ and $D(\pstar-\epsilon)$: 
\begin{itemize}
    \item When the price $p_t=\pstar$, the observed same-day purchase mass $N_t$ is $D\left(\pstar+\frac{\gamma}{1-\gamma}q\epsilon\right)$;
    \item When a price $p_t = \pstar-\epsilon$ is posted after a higher price $p_{t-1} = \pstar$, the observed purchase consists of two sources:
    \begin{itemize}
        \item the mass of buyers with $v\in [\pstar-\epsilon, \pstar+\frac{\gamma}{1-\gamma}q\epsilon]$ from previous periods, i.e., 
        $$D\left(\pstar+\frac{\gamma}{1-\gamma}q\epsilon\right)-D(\pstar-\epsilon)$$ multiplied by the number of consecutive previous days with price $\pstar$;
        \item the mass of buyers $N_t$ who arrive at current period $t$ who buy, i.e., $D(\pstar-\epsilon)$.
    \end{itemize}
    \item When a price $p_t = \pstar-\epsilon$ is posted after a low price $p_{t-1} = \pstar-\epsilon$, the observed purchase mass is $D(\pstar-\epsilon)$.
\end{itemize}
Thus any valid estimator is only a function of the properties $A_\epsilon \triangleq D(\pstar+\frac{\gamma}{1-\gamma}q\epsilon)$ and $B_\epsilon\triangleq D(\pstar-\epsilon)$ of the demand generating process (as well as $\epsilon,q,\delta$). Thus, it must also be the case that the limit of any estimator is only a function of $A, B$. For any estimator we let:
\begin{align}
f_{\edd,\epsilon}(A_\epsilon, B_\epsilon) = \lim_{\delta\to 0} \expect{\edd(\pstar)}
\end{align}
where the expectation is taken over the data-generating process that is uniquely characterized by $D, \gamma, \epsilon, q, \delta$. We omit the dependence of the limit function on $q$ for simplicity as it remains fixed for all the arguments we invoke below.

Now we turn to a characterization of the limit function of an unbiased estimator $\edd(\pstar)$ in the limit, assuming that such an estimator exists. 
\begin{lemma}\label{lem:char-difference}
Consider two demand generating functions with with corresponding daily demands $D_1, D_2$ and patience levels $\gamma_1, \gamma_2$, respectively. Let $A_1, B_1$ and $A_2, B_2$ be the correponding properties of the generating processes that any valid estimator is allowed to depend on. If $\hat{d}$ is an asymptotically un-biased estimator, and if there exists $\epsilon_0$ such that for all $\epsilon<\epsilon_0$, $B_2 < B_1 < 1$ and $A_1-B_1=A_2 - B_2$, we can construct another unbiased estimator $\hat{d}_1$ such that $f_{\hat{d}_2}(A_1, B_1) = f_{\hat{d}_2}(A_2, B_2)$.%
\end{lemma}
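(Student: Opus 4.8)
The plan is to exploit the fact that translating the aggregate demand curve $D$ by an additive constant does not change its gradient, whereas it shifts the two summary statistics $A_\epsilon = D(\pstar + \tfrac{\gamma}{1-\gamma}q\epsilon)$ and $B_\epsilon = D(\pstar-\epsilon)$ by that same constant, leaving $A_\epsilon - B_\epsilon$ invariant. Recall from the proof of \Cref{thm:impossible} that, in the limit $\delta\to 0$, the law of the observable data $\{C_t, N_t\}$ under a two-price design on a process with parameters $(D,\gamma)$ depends only on $A_\epsilon$ and $B_\epsilon$: on a reference-price day both the same-day and total sales equal $A_\epsilon$, while on a discount-price day the same-day sales equal $B_\epsilon$ and the total sales equal $B_\epsilon$ plus the length of the current run of preceding reference-price days times $A_\epsilon - B_\epsilon$. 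In particular, each per-day mass is an affine function of $B_\epsilon$ and $A_\epsilon - B_\epsilon$, and $B_\epsilon$ is itself identifiable from the data, being (the limiting average of) the same-day sales on discount-price days, which uses the seller's access to $N_t$ as assumed in \Cref{thm:impossible}.

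The construction would be as follows. Given observed data, let $\hat{B}$ be the limiting average of same-day sales on discount-price days, and form the re-centered data by subtracting the constant $\hat{B} - B_{1,\epsilon}$ from every observed mass $C_t$ and $N_t$. By the characterization above, if the original data was generated from parameters $(A_\epsilon, B_\epsilon)$, then the re-centered data has exactly the law of the data generated from parameters $\bigl(A_\epsilon - B_\epsilon + B_{1,\epsilon},\, B_{1,\epsilon}\bigr)$, since the translation never touches the quantity $A_\epsilon - B_\epsilon$ that governs the waiting-buyer contribution. Let $\hat{d}_2$ be $\hat{d}$ applied to the re-centered data. On process $1$ the re-centering subtracts $0$, so $f_{\hat{d}_2}(A_1,B_1) = f_{\hat{d}}(A_1,B_1)$; on process $2$ it subtracts $B_{2,\epsilon}-B_{1,\epsilon}$, which sends the summary statistics to $\bigl(A_2 - B_2 + B_{1,\epsilon},\, B_{1,\epsilon}\bigr) = (A_1,B_1)$ by the hypothesis $A_1-B_1 = A_2-B_2$, so $f_{\hat{d}_2}(A_2,B_2) = f_{\hat{d}}(A_1,B_1)$ as well. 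This yields the claimed identity $f_{\hat{d}_2}(A_1,B_1) = f_{\hat{d}_2}(A_2,B_2)$, provided $\hat{d}_2$ is itself asymptotically unbiased.

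It remains to verify that $\hat{d}_2$ is asymptotically unbiased, which is the heart of the argument. Fix any process $(D,\gamma')$ in the linear-demand class to which \Cref{thm:impossible} has restricted, with gradient $d(\pstar) = D'(\pstar)$; its re-centered data has summary statistics $(\check{A}_\epsilon, \check{B}_\epsilon) = \bigl(A_\epsilon - B_\epsilon + B_{1,\epsilon},\, B_{1,\epsilon}\bigr)$, so $f_{\hat{d}_2,\epsilon}(A_\epsilon, B_\epsilon) = f_{\hat{d},\epsilon}(\check{A}_\epsilon, \check{B}_\epsilon)$. Since $f_{\hat{d}}$ is only guaranteed to return the gradient at summary statistics coming from genuine demand processes, we must exhibit a valid process realizing $(\check{A}_\epsilon, \check{B}_\epsilon)$ whose gradient is still $d(\pstar)$. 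Take $\check{D}_\epsilon$ affine near $\pstar$ with slope $D'(\pstar)$ and $\check{D}_\epsilon(\pstar-\epsilon) = B_{1,\epsilon}$, extended to a bona fide survival function outside an $O(\epsilon)$-window around $\pstar$; this is possible for $\epsilon$ small precisely because $0 < B_{1,\epsilon} < 1$ (a consequence of $0 \le B_2 < B_1 < 1$), so the shifted local curve stays in $[0,1]$. Pairing $\check{D}_\epsilon$ with patience $\gamma'$ gives $\check{D}_\epsilon(\pstar-\epsilon) = B_{1,\epsilon} = \check{B}_\epsilon$ and, since $A_\epsilon - B_\epsilon = D'(\pstar)\,\epsilon\,(1 + \tfrac{\gamma'}{1-\gamma'}q)$ for affine $D$, also $\check{D}_\epsilon(\pstar+\tfrac{\gamma'}{1-\gamma'}q\epsilon) - \check{D}_\epsilon(\pstar-\epsilon) = A_\epsilon - B_\epsilon = \check{A}_\epsilon - \check{B}_\epsilon$, matching $\check{A}_\epsilon$. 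Hence $f_{\hat{d}_2,\epsilon}(A_\epsilon,B_\epsilon) = f_{\hat{d},\epsilon}(\check{A}_\epsilon,\check{B}_\epsilon) \to \check{D}_\epsilon'(\pstar) = d(\pstar)$ uniformly, by asymptotic unbiasedness of $\hat{d}$.

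The main obstacle is exactly this last step. Because asymptotic unbiasedness of $\hat{d}$ is a statement only about data arising from legitimate demand processes, one cannot merely plug the re-centered summary statistics into $f_{\hat{d}}$; one must produce an explicit valid witness demand realizing them with the correct gradient. The additive shift is what makes this possible, since gradients are shift-invariant, but the very same shift threatens to push the demand out of $[0,1]$, and ruling that out is precisely the role played by the hypotheses $B_2 < B_1 < 1$.
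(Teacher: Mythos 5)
Your argument is correct and delivers what the lemma literally asks for, but it gets there by a somewhat different construction than the paper. The paper never builds a new estimator: it perturbs the \emph{demand process}, moving a mass $\delta(\epsilon)$ from strictly below $p_*-\epsilon$ to strictly above $p_*$, which realizes the diagonally shifted statistics $(A_\epsilon+\delta(\epsilon),\,B_\epsilon+\delta(\epsilon))$ with the gradient at $\pstar$ unchanged, and then invokes the uniformity in the definition of asymptotic unbiasedness to conclude that the limit of the \emph{original} estimator $\hat d$ is invariant under such diagonal shifts, hence equal at $(A_1,B_1)$ and $(A_2,B_2)$. You instead perturb the \emph{data}: you re-center every observed mass so that the level statistic is pinned to $B_{1,\epsilon}$, define $\hat d_2$ as $\hat d$ on the re-centered data, and certify its unbiasedness by exhibiting an explicit affine witness demand realizing the shifted pair $(A-B+B_1,\,B_1)$ with the same slope. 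Both routes rest on the same two pillars -- the observable law is a function of $(A_\epsilon,B_\epsilon)$ alone, and diagonal shifts of $(A_\epsilon,B_\epsilon)$ can be realized by legitimate processes without moving the gradient -- so the mechanism is shared, but your version matches the lemma's literal phrasing (construct another estimator) while the paper's proof establishes shift-invariance of the limit of $\hat d$ itself; either form plugs into the contradiction in Theorem~\ref{thm:impossible}. Two small caveats, both at the level of rigor the paper itself adopts: your unbiasedness check for $\hat d_2$ is carried out only over the affine-demand class (sufficient for how the lemma is used, since the two counterexample processes are uniform, and extendable to general $D$ at the cost of an $O(\epsilon)$ mismatch absorbed by the witness slope), and the re-centering constant $\hat B$ is undefined on the event that no discount day occurs, which has vanishing probability as $\delta\to 0$ and should be handled by a default value so the limit expectation is unaffected.
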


\begin{proof}

Since the estimator is asymptotically unbiased, the limit function when trained on data from $D$ and patience level $\gamma$ must be $d(\pstar)$ as $\epsilon\to 0$, i.e.:
\begin{align}
    \lim_{\epsilon\to 0} f_{\hat{d},\epsilon}(A_\epsilon, B_\epsilon) =~& d(\pstar)
\end{align}
Take any demand function with corresponding CDF $F$ that has bounded support and places non-zero mass below the price $p_*-\epsilon$ (equivalently $B_{\epsilon} < 1 - \delta$). 
Suppose that we alter the demand by taking a $\delta(\epsilon)\leq \delta$ mass from a region strictly below $p_* - \epsilon$ and placing it sufficiently above $p_*$, leading to a new demand generating process with constants $\tilde{A}_\epsilon=A_\epsilon+\delta(\epsilon)$ and $\tilde{B}_\epsilon=B_\epsilon+\delta(\epsilon)$. Moreover, the demand gradient and the uniform approximability at $p_*$ do not change when performing these transformations, since no mass in a sufficiently small neighborhood of $p_*$ was altered. Thus from uniform convergence with any demand-generating process: for any $\rho$, there exists $\epsilon_0>0$ such that for all $\epsilon<\epsilon_0$:
\begin{align*}
   \bigg| f_{\hat{d},\epsilon}(A_\epsilon + \delta(\epsilon), B_\epsilon + \delta(\epsilon))-\lim_{\epsilon\to 0} f_{\hat{d},\epsilon}(A_\epsilon, B_\epsilon) \bigg| < \rho,
\end{align*}
which implies the convergence:
\begin{align}
    \lim_{\epsilon\to 0} f_{\hat{d},\epsilon}(A_\epsilon + \delta(\epsilon), B_\epsilon + \delta(\epsilon)) = \lim_{\epsilon \to 0} f_{\hat{d},\epsilon}(A_\epsilon, B_\epsilon), \forall \delta(\epsilon) \leq \delta.
\end{align}

Thus, we have that for any $\delta > 0$ as long as long as $B < 1-\delta$:
\begin{align*}
    \lim_{\epsilon\to 0} f_{\hat{d},\epsilon}(A_\epsilon, B_\epsilon) = \lim_{\epsilon\to 0} f_{\hat{d},\epsilon}(A_\epsilon+\delta(\epsilon), B_\epsilon+\delta(\epsilon))
\end{align*}

Consider two demand-generating processes with the corresponding daily demands $D_1$ and $D_2$, and patience levels $\gamma_1$ and $\gamma_2$, respectively. 
Let $A_{1,\epsilon}, B_{1,\epsilon}$ and $A_{2,\epsilon}, B_{2,\epsilon}$ be the corresponding properties of the demand generating processes that any estimator is allowed to depend on. 
Take any demands $D_1, D_2$, and patience levels $\gamma_1, \gamma_2$, such that for all $\epsilon>0$ small enough: 
\begin{align*}
    A_{1,\epsilon} - B_{1,\epsilon} = A_{2,\epsilon} - B_{2,\epsilon}
\end{align*}
and such that $B_{2,\epsilon} < B_{1,\epsilon} \leq 1 - \delta'$ for some $\delta'>0$. Then we have:
\begin{align*}
    f_{\hat{d},\epsilon}(A_{1,\epsilon}, B_{1,\epsilon}) = f_{\hat{d},\epsilon}(A_{2,\epsilon} + B_{1,\epsilon} - B_{2,\epsilon}, B_{1,\epsilon}) = f_{\hat{d},\epsilon}(A_{2,\epsilon} + B_{1,\epsilon} - B_{2,\epsilon}, B_{2,\epsilon} + B_{1,\epsilon} - B_{2,\epsilon})
\end{align*}
Letting $\delta(\epsilon)=B_{1,\epsilon}-B_{2,\epsilon}$ and since $B_{2,\epsilon} = B_{1,\epsilon} - \delta(\epsilon) < 1 - \delta' - \delta(\epsilon)$, we can take $\delta=\sup_{\epsilon > 0} \delta' + \delta(\epsilon)$, and apply the aforementioned conclusion that:
\begin{align*}
    \lim_{\epsilon\to 0} f_{\hat{d},\epsilon}(A_{1,\epsilon}, B_{1,\epsilon}) = \lim_{\epsilon\to 0} f_{\hat{d},\epsilon}(A_{2,\epsilon} + \delta(\epsilon), B_{2,\epsilon} + \delta(\epsilon)) = \lim_{\epsilon\to 0} f_{\hat{d},\epsilon}(A_{2,\epsilon}, B_{2,\epsilon})
\end{align*}
Thus necessarily it has to be that any asymptotically unbiased estimator $\hat{d}$ converges to the same limit number irrespective of the demand-generating process that generated the data.
\end{proof}

Now we are going to construct two demand-generating processes $D_1$ and $D_2$, with patience levels $\gamma_1$ and $\gamma_2$, respectively, that satisfy: 1) $A_{1,\epsilon}-B_{1,\epsilon}=A_{2,\epsilon}-B_{2,\epsilon}$ and $B_{2,\epsilon} < B_{1,\epsilon} < 1 - \delta$ holds for all $\epsilon$ small enough; 2) the demand gradient $d_1(\pstar)$ and $d_2(\pstar)$ are different for the two processes. This will lead to a contradiction, since according to Lemma~\ref{lem:char-difference}, $\lim_{\epsilon\to 0} f_{\hat{d},\epsilon}(A_{1,\epsilon}, B_{1,\epsilon}) = f_{\hat{d},\epsilon}(A_{2,\epsilon}, B_{2,\epsilon})$ and according to unbiasedness $\lim_{\epsilon\to 0} f_{\hat{d},\epsilon}(A_{1,\epsilon}, B_{1,\epsilon})=d_1(p_*)$ and $\lim_{\epsilon\to 0} f_{\hat{d},\epsilon}(A_{1,\epsilon}, B_{1,\epsilon})=d_2(p_*)$. 

Take any $h>0$ such that $\epsilon < h$ and consider the following two demand generating processes:
\begin{itemize}
    \item For $D_1$, the value of each buyer is drawn uniformly from the distribution over $[\pstar-h, \pstar+h]$.  $D_1$ takes the form
    \begin{equation*}
        D_{1}(\price) = 1-\frac{\price - (\pstar-h)}{2h} \text{, for }\price\in[\pstar-h, \pstar+h],
    \end{equation*}
    with demand gradient $d_1(\pstar)=-\frac{1}{2h}$. Buyers have a homogenous patience level of $\gamma_1 = 0$.
    For $\epsilon<h$, 
    \begin{equation*}
        D_1\left(\pstar + \frac{\gamma_1}{1-\gamma_1}q\epsilon\right) - D_1(\pstar - \epsilon) = -\frac{\epsilon}{2h}
    \end{equation*}
    
    \item For $D_2$, let $H = h\left(1+\frac{\gamma_2}{1-\gamma_2}q\right)$. The buyer's value is drawn from the uniform distribution over $[\pstar-H, \pstar+H]$, i.e., 
\begin{equation*}
D_{2}(\price) = 1-\frac{\price - (\pstar-H)}{2H} \text{, for }\price\in[\pstar-H, \pstar+H].
\end{equation*}
The demand gradient is $d_2(\pstar)=-\frac{1}{2H} = -\frac{1}{2h\left(1+\frac{\gamma_2}{1-\gamma_2}q\right)} \neq d_1(\pstar)$. Buyers have a homogenous patience level of $\gamma_2>0$. For $\epsilon<h$, 
\begin{equation*}
    D_2\left(\pstar + \frac{\gamma_2}{1-\gamma_2}q\epsilon\right) - D_2(\pstar - \epsilon) = -\left(\frac{\gamma_2}{1-\gamma_2}q+1\right)\epsilon\cdot \frac{1}{2H} = -\frac{\epsilon}{2h}
\end{equation*}
\end{itemize}
\end{proof}

\subsection{Proof of \Cref{thm: multiplicative bias}}

\begin{proof}
   By \Cref{corollary: bias in the limit}, when $p_t=\pstar$ and when $\delta\to 0$, the buyer buys immediately only if  
   \begin{equation*}
       v\geq p_*+\frac{\patience}{1-\patience}\lowprob\low
   \end{equation*}
Thus, at each period $t$ with $p_t=p_*$, 
    \begin{align}
        C_t = D_t\left(\pstar+\frac{\patience}{1-\patience}\lowprob\low\right).
    \end{align}  
Hence, $C^{(0)}$ is a biased estimator of the demand at $\pstar$. 
\begin{align*}
\lim_{\delta\to 0} \expect{C^{(0)}}
=~& \lim_{\delta\to 0} \expect{\frac{1}{T}\sum_{t=1}^T D_t\left(\pstar+\frac{\patience}{1-\patience}\lowprob\low\right)\frac{\ind{\price_t=\pstar}}{1-\lowprob}}\\
=~& \lim_{\delta\to \infty} \expect{\frac{1}{T}\sum_{t=1}^T
D_t\left(\pstar+\frac{\patience}{1-\patience}\lowprob\low\right)}\\
=~& \lim_{T\to \infty} \expect{\frac{1}{T}\sum_{t=1}^T
D_t\left(\pstar+\frac{\patience}{1-\patience}\lowprob\low\right)}\\
=~& \expect{\lim_{T\to \infty} \frac{1}{T}\sum_{t=1}^T
D_t\left(\pstar+\frac{\patience}{1-\patience}\lowprob\low\right)} \tag{Dominated Convergence Theorem}\\
=~& D\left(\pstar+\frac{\patience}{1-\patience}\lowprob\low\right)
\end{align*}
The second to last equality holds since $T$ is the only random element in the expectation and it is a random element that follows a geometric distribution with success probability $\delta\to 0$.

When $p_t=\pstar-\low$, $C^{(1)}$ is the mass of two types of buyers:
\begin{itemize}
    \item buyers that arrive at period $t$, with $v\geq p_t$;
    \item buyers who arrived at $t'<t$ but wait until the low price $p_t$, with $v\in[\pstar-\low, \pstar+\frac{\patience}{1-\patience}\lowprob\low]$.
\end{itemize}
Thus, we know that all buyers that arrived before or at the period of the last discounted price and whose value is at least $p_*-\epsilon$, will purchase the product. A fraction of the buyers who also arrive after the last discounted price will also purchase. However, the expected number of periods between the last discounted price and the end of the experiment is $1/q$, which is a negligible fraction of $T$ as $\delta\to 0$.  Thus, we have that:
\begin{align*}
 D(\pstar-\low) = \lim_{\delta\to 0} \expect{\frac{1}{T} \sum_{t=1}^T C_t } = \lim_{\delta\to 0}\expect{q_0 \,C^{(0)}+q_1\,C^{(1)}}= \lim_{\delta\to 0}\expect{(1-q) \,C^{(0)}+q\,C^{(1)}}
\end{align*}
It then follows
\begin{align}
    \lim_{\delta\to 0} \expect{C^{(1)}} =~&  \lim_{\delta\to 0} \left(\frac{1}{q} \expect{(1-q) \,C^{(0)}+q\,C^{(1)}} - \frac{1-q}{q} \expect{C^{(0)}}\right)\\
    =~& \frac{D(\pstar-\low)}{q}-\frac{1-q}{q}D\left(\pstar+\frac{\patience}{1-\patience}\lowprob \epsilon\right).
\end{align}

\noindent Thus we conclude that the limit of the estimate of the  demand gradient satisfies, 
\begin{align*}
       \lim_{\low\to 0}\lim_{\delta\to 0}\expect{\edd(\pstar)}&=\lim_{\low\to 0}\lim_{\delta\to 0}\expect{\frac{C^{(0)}-C^{(1)}}{\low}}\\
        &=\lim_{\low\to 0} \frac{1}{\low} \left(D\left(\pstar+\frac{\patience}{1-\patience}\lowprob\low\right)-\frac{D(\pstar-\low)}{\lowprob}+\frac{1-\lowprob}{\lowprob}D\left(\pstar+\frac{\patience}{1-\patience}\lowprob\low\right)\right)\\
        &=\lim_{\low\to 0} \frac{1}{\low} \frac{1}{q} \left(D\left(\pstar+\frac{\patience}{1-\patience}\lowprob\low\right)-D(\pstar-\low)\right)\\
        &=\lim_{\low\to 0} \frac{1}{\low} \frac{1}{q} \left(D'(\pstar) \left(\frac{\patience}{1-\patience}\lowprob\low + \epsilon\right) + O(\epsilon^2)\right) \tag{first-order Taylor expansion + Assumption~\ref{ass:bounded-der}}\\
        &=\left(\frac{1}{\lowprob}+\frac{\patience}{1-\patience}\right)D'(\pstar).
\end{align*}
\end{proof}

\section{Proofs of Results in \Cref{sec:debias}}

\subsection{Proof of \Cref{thm:unbiased}}

We prove \Cref{thm:unbiased} by reducing to the case of a homogeneous patience level $\patience$. Subsequently, we will reduce the heterogeneous patience case to the homogeneous patience case, to complete the proof of the theorem.

\begin{lemma}\label{lem:homogenous-patience-unbias}
    Suppose the buyers have homogenous patience level $\patience$.  The estimator in \Cref{eq:estimator-demand-elasticity} is an asymptotically unbiased estimator of $d(\pstar)$.

    Moreover, for some $\rho>0$, there exists $\epsilon_0>0$ such that for each $\epsilon<\epsilon_0$ and each $D$:
    \begin{equation*}
       \bigg| \lim_{\delta\to 0}\eddunbias(\pstar)-\expect{d(\pstar)}\bigg|<\rho\epsilon.
    \end{equation*}
\end{lemma}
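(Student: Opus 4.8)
The plan is to reduce the statement to bookkeeping built on \Cref{corollary: bias in the limit}: specialize the limiting value-shading characterization to the three-price design $S=\{\pstar,\pstar-\epsilon,\pstar-2\epsilon\}$, compute $\lim_{\delta\to 0}\expect{\countsameday[i]}$ for $i=0,1,2$, Taylor-expand around $\pstar$, and check that the patience-dependent first-order terms cancel in \Cref{eq:estimator-demand-elasticity}. \emph{Step 1 (shading amounts).} Writing $P$ for a random experimental price, $\expect{(\pstar-P)_+}=(q_1+2q_2)\epsilon$, $\expect{(\pstar-\epsilon-P)_+}=q_2\epsilon$, and $\expect{(\pstar-2\epsilon-P)_+}=0$. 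Hence by \Cref{corollary: bias in the limit}, with $\kappa:=\gamma/(1-\gamma)$, a buyer arriving on a day with posted price $p^{(i)}$ buys on that same day iff $v\ge p^{(i)}+b_i+O(\delta)$, where $b_0=\kappa(q_1+2q_2)\epsilon$, $b_1=\kappa q_2\epsilon$, $b_2=0$; crucially the shading is value-independent, $\kappa$ is bounded by a constant by \Cref{ass:bounded_patientce}, and the $O(\delta)$ term is uniformly bounded (it is $O(\delta\bar p)$).

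\emph{Step 2 (limiting same-day aggregates).} On any day $t$ with $p_t=p^{(i)}$, the same-day sales mass equals $\countsameday_t=D_t(p^{(i)}+b_i+O(\delta))$ — it depends only on the current price, not on the history, which is exactly why same-day sales are the tractable object. Now mirror the argument in the proof of \Cref{thm: multiplicative bias}: the i.i.d.\ price assignment makes $\ind{p_t=p^{(i)}}$ independent of $D_t$; the horizon $T$ is geometric with mean $1/\delta$, so by dominated convergence one may exchange $\lim_{\delta\to 0}$ with the expectation over $T$; and by \Cref{ass:limit-exists} the empirical average $\frac{1}{T}\sum_{t\le T}D_t(\cdot)$ converges a.s.\ to $D(\cdot)$. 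Using \Cref{ass:bounded-der} to absorb the $O(\delta)$ perturbation of the argument (it moves $D_t$ by $O(\delta)$ since $|d_t|$ is bounded near $\pstar$), we get $\lim_{\delta\to 0}\expect{\countsameday[i]}=\expect{D(p^{(i)}+b_i)}$, where the outer expectation is over the demand-generating process, since $D$ (hence $d(\pstar)$) may be random.

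\emph{Step 3 (Taylor expansion and cancellation).} By \Cref{ass:bounded-der}, for $\epsilon<\epsilon_0$ each argument $p^{(i)}+b_i$ lies in the $\epsilon_0$-neighborhood of $\pstar$, so $D(p^{(i)}+b_i)=D(\pstar)+d(\pstar)\,m_i+R_i$ with $m_i:=p^{(i)}-\pstar+b_i$ and $|R_i|\le\tfrac{\rho}{2}m_i^2=O(\epsilon^2)$ uniformly over the process. Since $m_0=\kappa(q_1+2q_2)\epsilon$, $m_1=(\kappa q_2-1)\epsilon$, $m_2=-2\epsilon$, the identities $m_1-m_2=(1+\kappa q_2)\epsilon$ and $m_0-2m_1+m_2=\kappa q_1\epsilon$ give
\begin{align*}
\lim_{\delta\to 0}\expect{\countsameday[1]-\countsameday[2]}&=\expect{d(\pstar)}(1+\kappa q_2)\epsilon+O(\epsilon^2),\\
\lim_{\delta\to 0}\expect{(\countsameday[0]-\countsameday[1])-(\countsameday[1]-\countsameday[2])}&=\expect{d(\pstar)}\,\kappa q_1\,\epsilon+O(\epsilon^2).
\end{align*}
Substituting into \Cref{eq:estimator-demand-elasticity} and dividing by $\epsilon$, the $\kappa q_2$ terms cancel: $\lim_{\delta\to 0}\expect{\eddunbias(\pstar)}=(1+\kappa q_2)\expect{d(\pstar)}-\tfrac{q_2}{q_1}\kappa q_1\expect{d(\pstar)}+O(\epsilon)=\expect{d(\pstar)}+O(\epsilon)$, with the $O(\epsilon)$ remainder bounded by $C\epsilon$ for a constant $C$ depending only on $\rho,\Gamma,q_1,q_2$ — in particular not on $\gamma$ nor on the demand process. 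This yields both asymptotic unbiasedness and the stated quantitative bound.

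\emph{Main obstacle.} The only substantive work is Step 2 — justifying $\lim_{\delta\to 0}\expect{\countsameday[i]}=\expect{D(p^{(i)}+b_i)}$, i.e.\ interchanging the $\delta\to 0$ limit with the expectation over the geometric horizon, invoking the a.s.\ convergence of the average static demand, and showing the $O(\delta)$ correction to the shading term washes out uniformly. This parallels the proof of \Cref{thm: multiplicative bias}, but must now be carried out with three price levels and with a possibly random limit demand. Everything afterward is Taylor bookkeeping and the algebraic cancellation that is the entire design rationale of the estimator.
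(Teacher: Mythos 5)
Your proposal is correct and follows essentially the same route as the paper's proof: it invokes \Cref{corollary: bias in the limit} to get the value-independent shading $\frac{\gamma}{1-\gamma}\mcQ(p^{(i)})\mcE(p^{(i)})$ at each of the three price levels, passes to the $\delta\to 0$ limit of the aggregated same-day sales exactly as in the proof of \Cref{thm: multiplicative bias}, and then uses the second-order Taylor bound from \Cref{ass:bounded-der} to show the $\frac{\gamma}{1-\gamma}q_2$ terms cancel, leaving an $O(\epsilon)$ error uniform over the demand process. The shading amounts, the difference and difference-of-differences identities ($m_1-m_2=(1+\kappa q_2)\epsilon$, $m_0-2m_1+m_2=\kappa q_1\epsilon$), and the uniform remainder bound all match the paper's computation.
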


\begin{proof}

We first look at the aggregated purchase masses as $\delta\to 0$. 

\begin{itemize}
    \item At price $\pstar=\pstar$, the buyer believes a discount happens with probability $\mcQ(\pstar)=\lowprob_1+\lowprob_2$ and price lowered by $\mcE(\pstar)=\frac{(\lowprob_1\low+\lowprob_2\cdot 2\low)}{\lowprob_1+\lowprob_2}=\frac{q_1 + 2q_2}{q_1 + q_2} \epsilon$. 
    \begin{align*}
         \lim_{\delta\to 0}\expect{\countsameday[0]}=~& \lim_{\delta\to 0}\expect{\frac{1}{T\lowprob_0}\sum_t\countsameday_t\ind{p_t=\pstar}}\\
        =~&  \lim_{\delta\to 0}\expect{\frac{1}{T}\sum_t D_t\left(\pstar+\frac{\gamma}{1-\gamma}(\lowprob_1+2\lowprob_2)\low\right)}\\
        =~& D\left(\pstar+\frac{\gamma}{1-\gamma}(\lowprob_1+2\lowprob_2)\low\right) 
    \end{align*}

    By \Cref{ass:bounded-der}, for some $\rho>0$, there exists $\epsilon_0>0$ such that for each $\epsilon<\epsilon_0$ and every $D$:
    \begin{align*}
        \bigg|\lim_{\delta\to 0}\expect{\countsameday[0]}-D\left(\pstar\right)-d(\pstar)\cdot\frac{\gamma}{1-\gamma}(\lowprob_1+2\lowprob_2)\low\bigg|<\rho\left(\frac{\gamma}{1-\gamma}(\lowprob_1+2\lowprob_2)\low\right)^2.
    \end{align*}

    \item At price $\pstar-\low$, the buyer believes a lower price $\pstar-2\low$ is posted with probability $\lowprob_2$. Thus following similar reasoning as in the first case, we can argue that:
    \begin{equation*}
    \lim_{\delta \to 0} \expect{\countsameday[1]} = \lim_{\delta\to 0} \expect{\frac{1}{T\lowprob_1}\sum_t\countsameday_t\ind{p_t=\pstar-\epsilon}} = D\left(\pstar+\left(-1+\frac{\gamma}{1-\gamma}\lowprob_2\right)\low\right)
    \end{equation*}

    Again by \Cref{ass:bounded-der}, for  $\rho>0$, there exists the same $\epsilon_0>0$ such that for each $\epsilon<\epsilon_0$ and every $D$:
\begin{align*}
    \bigg|\lim_{\delta \to 0} \expect{\countsameday[1]} - D(\pstar) - d(\pstar)\left(1+\frac{\gamma}{1-\gamma}\lowprob_2\right)\low\bigg|<\rho\left(1+\frac{\gamma}{1-\gamma}\lowprob_2\right)^2\epsilon^2.
\end{align*}

    \item At price $\pstar-2\low$, a buyer will always immediately buy. Hence, we can argue that:
    \begin{align*}
    \lim_{\delta\to 0} \expect{\countsameday[2]} = \expect{\frac{1}{T\lowprob_2}\sum_t\countsameday_t\ind{p_t=\pstar-2\epsilon}}= D\left(\pstar-2\epsilon\right).
    \end{align*}

    Similarly, for some $\rho>0$, there exists the same $\epsilon_0>0$ such that for each $\epsilon<\epsilon_0$:
    \begin{align*}
        \bigg|\lim_{\delta\to 0} \expect{\countsameday[2]} - D(\pstar) - 2d(\pstar)\epsilon\bigg|<4\rho\epsilon^2.
    \end{align*}

\end{itemize}

 By \Cref{ass:bounded-der}, we can write the first-order approximation of $D$: 
\begin{align*}
    \lim_{\delta\to 0} \expect{(\countsameday[1]-\countsameday[2])} =~& D\left(p_*-\epsilon + \frac{\gamma}{1-\gamma} q_2\epsilon\right) - D(p_*-2\epsilon) = d(p_*)\,\left(\epsilon + \frac{\gamma}{1-\gamma}q_2\epsilon\right) + O(\epsilon^2)\\
    \lim_{\delta\to 0} \expect{(\countsameday[0]-\countsameday[1])} =~& D\left(p_* + \frac{\gamma}{1-\gamma}(q_1 + 2q_2) \epsilon\right) - D\left(p_*-\epsilon + \frac{\gamma}{1-\gamma} q_2\epsilon\right)\\
    =~& d(p_*)\,\left(\epsilon + \frac{\gamma}{1-\gamma}(q_1 + q_2) \epsilon\right) + O(\epsilon^2)
\end{align*}

When divided by $\epsilon$, both of these estimands have a first order bias in terms of approximating the derivative $d(p_*)$. However, 
notice that the difference-in-differences in demands is a first order proxy for the bias of the first of the two estimands, when appropriately normalized by the known ratio of discount probability $q_2 / q_1$. Take $\epsilon_0$
\begin{align}
\lim_{\delta\to 0} \expect{(\countsameday[0]-\countsameday[1]) - (\countsameday[1]-\countsameday[2])}
=~& d(\pstar)\cdot\frac{\gamma}{1-\gamma}\lowprob_1\low+O(\low^2),
\end{align}
It then follows: for some $\rho>0$, there exists $\epsilon_0$: for each $\epsilon<\epsilon_0$ and for each $D$:
\begin{align*}
    &\bigg|\lim_{\delta\to 0} \expect{\eddunbias(\pstar)}-d(\pstar)\bigg| \\
    = &\bigg|\expect{\frac{1}{\epsilon}\left(\left(\countsameday[1]-\countsameday[2]\right) -\frac{q_2}{q_1}\left(\left(\countsameday[2]-\countsameday[1]\right)-\left(\countsameday[1]-\countsameday[0]\right)\right)\right)}-d(\pstar)\bigg|\\
   <~& d(\pstar) \left(1 + \frac{\gamma}{1-\gamma} q_2\right) - \frac{q_1}{q_1} d(\pstar) \frac{\gamma}{1-\gamma} q_1 - d(\pstar)\\
   &\quad + \rho\epsilon\left[\left(\frac{\gamma}{1-\gamma}(\lowprob_1+2\lowprob_2)\right)^2 + \left(1+\frac{\gamma}{1-\gamma}\lowprob_2\right)^2 + 4\right]\\
   =&\rho\epsilon\left[\left(\frac{\gamma}{1-\gamma}(\lowprob_1+2\lowprob_2)\right)^2 + \left(1+\frac{\gamma}{1-\gamma}\lowprob_2\right)^2 + 4\right],
\end{align*}
where the term $\rho\epsilon\left[\left(\frac{\gamma}{1-\gamma}(\lowprob_1+2\lowprob_2)\right)^2 + \left(1+\frac{\gamma}{1-\gamma}\lowprob_2\right)^2 + 4\right]$ is bounded by a constant $z$.

The estimator is asymptotically unbiased. To see this, for each $\rho'>0$, take $\epsilon_1 = \min\{\epsilon_0, \frac{\rho'}{\rho\cdot z}\}$. For each $\epsilon<\epsilon_1$, we have 
\begin{align*}
    \bigg|\lim_{\delta\to 0} \expect{\eddunbias(\pstar)}-d(\pstar)\bigg|<\rho.
\end{align*}

\end{proof}

Now we prove \Cref{thm:unbiased}. By \Cref{lem:homogenous-patience-unbias}, the estimator $\eddunbias$ is unbiased for the demand gradient conditional on the patience level $\delta$. As we will argue, is a linear combination of a set of empirical average statistics, it is also unbiased for the unconditional demand due to linearity of expectation and linearity of the estimator. 

\begin{proof}[Proof of \Cref{thm:unbiased}]
Recall that $\ngam_t(\gamma)$ is the density of buyers with patience level $\gamma$ at period $t$. 
Consider the sub-population demand gradient $d_{\patience}(\price)$ of patience $\patience$:
\begin{equation*}
    d_{\patience}(\pstar) = \lim_{T\to \infty}\frac{\sum_{t \in [T]} d_{t|\patience}(\pstar)\ngam_t(\gamma)}{T}
\end{equation*}

The demand gradient sums over sub-population demand gradients:

\begin{align*}
    d(\pstar) = &\lim_{T\to \infty}\frac{\sum_{t \in [T]}\int d_{t|\patience}(\pstar)\ngam_t(\gamma)\diff \gamma}{T}\\
    =&\lim_{T\to \infty}\int\frac{\sum_{t \in [T]} d_{t|\patience}(\pstar)\ngam_t(\gamma)}{T}\diff \gamma\tag{Fubini's Theorem}\\
    = & \int\lim_{T\to \infty}\frac{\sum_{t \in [T]} d_{t|\patience}(\pstar)\ngam_t(\gamma)}{T}\diff \gamma=\int d_{\gamma}(\pstar)\diff \gamma. \tag{Dominated convergence theorem}
\end{align*}

The last step applies the dominated convergence theorem, assuming that $\ngam_t$ is bounded by some constant and that $ d_{\gamma}(\pstar)$ exists as in \Cref{ass:conditional}.

Consider the hypothetical sub-population estimator:
\begin{align*}
\eddunbias_{\patience}(\pstar)=\frac{-\frac{\lowprob_2}{\lowprob_1}\countsameday[0]_{ \patience}+\frac{\lowprob_1+2\lowprob_2}{\lowprob_1}\countsameday[1]_{\patience}-\frac{\lowprob_1+\lowprob_2}{\lowprob_1}\countsameday[2]_{\patience}}{\epsilon},
\end{align*} 
where each $\countsameday[i]_{ \patience}$ is the aggregated sub-population mass of buyers with patience level $\patience$ that buy at price $\pstar$. Note that $\countsameday[i] = \int\countsameday[i]_{\patience}\diff\patience$ and $\countsameday[i]_{ \patience}<H$ with $H$ the upperbound of $\gamma_t(\patience)$. By \Cref{lem:homogenous-patience-unbias}, $\eddunbias_\patience(\pstar)$ is an asymptotically unbiased estimator of $d_\patience(\pstar)$. We note that to apply \Cref{lem:homogenous-patience-unbias}, the only requirement is that $D_{t}$ should be bounded by $1$. We  can  relaxed the condition to the case where the daily demand $D_{t, \gamma} (p) = D_{t|\gamma}(p)\ngam_t(\gamma)$ which is bounded by a constant since $\ngam_t(\gamma)$ is bounded (\Cref{ass:conditional}).  The same result as \Cref{lem:homogenous-patience-unbias} still holds: for some $\rho>0$, there exists $\epsilon_0<0$, such that for each $\epsilon<\epsilon_0$, for each  $\gamma$, and for each $D$: 
\begin{equation}
  \label{eq:conditional unbiased homogenous}
\bigg|\lim_{\delta\to 0}\expect{\eddunbias_{\patience}(\pstar)}-d_\patience(\pstar) \bigg| < \rho\epsilon.
\end{equation}

The actual estimator $\eddunbias$ can be decomposed into the sum of sub-population estimators:
\begin{equation}
\eddunbias(\pstar)=\int \eddunbias_{\gamma}(\pstar)\diff\gamma.
\end{equation}

We then conclude with 
\begin{align}
        &\bigg|\lim_{\delta\to 0}\expect{\eddunbias(\pstar)}-d(\pstar) \bigg|\nonumber\\
        = &\bigg|\lim_{\delta\to 0}\expect{\int \eddunbias_\patience(\pstar)\diff\gamma}\nonumber-d(\pstar)\bigg|\\
        =&\bigg|\lim_{\delta\to 0}\int \expect{\eddunbias_\patience(\pstar)}\diff\gamma-d(\pstar)\bigg|\tag{Fubini's Theorem}\\
=&\bigg|\int\left[\lim_{\delta\to 0}\expect{ \eddunbias_\patience(\pstar)}-d_\patience(\pstar)\right]\diff\gamma\bigg|\label{eq:heterogenous-DCT-1} <\rho\epsilon.
\end{align}
\Cref{eq:heterogenous-DCT-1} follows from the Dominated Convergence Theorem (DCT). To see this, fixing an $\epsilon>0$, for each $\delta>0$ small enough, $\expect{ \eddunbias_\patience(\pstar)}$ is bounded by an integrable function by definition:
\begin{align*}
    \left|\eddunbias_{\patience}(\pstar)\right|=&\left|\frac{-\frac{\lowprob_2}{\lowprob_1}\countsameday[0]_{ \patience}+\frac{\lowprob_1+2\lowprob_2}{\lowprob_1}\countsameday[1]_{\patience}-\frac{\lowprob_1+\lowprob_2}{\lowprob_1}\countsameday[2]_{\patience}}{\epsilon}\right|\\
    &\leq \frac{\frac{4q_2}{q_1} + 2 }{\epsilon}\cdot H,
\end{align*}
where $H$ is the constant bound on $\mu_t(\gamma)$ as in \Cref{ass:conditional}. Moreover, fixing an $\epsilon>0$, the limit $\lim_{\delta\to 0}\expect{ \eddunbias_\patience(\pstar)}$ exists (see the proof of \Cref{lem:homogenous-patience-unbias}). 

By the same logic, for each $\rho'>0$, we take $\epsilon_1 = \min\{\epsilon_0, \frac{\rho'}{\rho}\}$. Then we have for every $\epsilon<\epsilon_1$, 
\begin{align*}
    \bigg|\lim_{\delta\to 0}\expect{\eddunbias(\pstar)}-d(\pstar) \bigg|<\rho',
\end{align*}
which is the definition of asymptotical unbiasedness.

\end{proof}

\subsection{Proof of \Cref{prop:unbiased-no-arrival-time}}

To prove \Cref{prop:unbiased-no-arrival-time}, we first consider the case where all buyers have homogenous patience level $\patience$.

\begin{lemma}\label{lem:unbiased-no-arrival-time-homogenous-patience}
    When each buyer's patience level does not change over time, the estimators in \Cref{prop:unbiased-no-arrival-time} are asymptotically unbiased when all buyers have the same patience level $\patience$.
\end{lemma}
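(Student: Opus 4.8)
The plan is to reduce each of the three unbiasedness claims to a statement about long-run time averages of sales, and then combine the limit characterization of buyer behavior (\Cref{corollary: bias in the limit}) with the averaging-plus-dominated-convergence argument already used in the proof of \Cref{thm: multiplicative bias}; throughout I keep $\lim_{\delta\to 0}$ outside the expectation, justifying the interchange by the fact that the horizon $T$ is geometric with mean $1/\delta$ and every per-period mass is bounded by $1$. Fixing the homogeneous patience level $\patience$, \Cref{corollary: bias in the limit} says that as $\delta\to 0$ a buyer with value $v$ accepts price $p^{(i)}$ iff $v\ge\theta_i$, where $\theta_0=\pstar+\tfrac{\patience}{1-\patience}(q_1+2q_2)\low$, $\theta_1=\pstar-\low+\tfrac{\patience}{1-\patience}q_2\low$, and $\theta_2=\pstar-2\low=p^{(2)}$, and one checks $\theta_2<\theta_1<\theta_0$. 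I partition the arriving unit mass each period by value: type $A$ ($v\ge\theta_0$) buys on arrival at any price; type $B$ ($\theta_1\le v<\theta_0$) waits at $p^{(0)}$ and buys at the first subsequent period whose price is $p^{(1)}$ or $p^{(2)}$; type $C$ ($p^{(2)}\le v<\theta_1$) waits at $p^{(0)},p^{(1)}$ and buys at the first subsequent $p^{(2)}$-period; type $D$ ($v<p^{(2)}$) never buys. As in the proof of \Cref{thm: multiplicative bias}, the aggregate arrival masses of $\{v\ge\theta_0\}$, $\{v\ge\theta_1\}$, $\{v\ge p^{(2)}\}$ converge almost surely to $a:=D(\theta_0)$, $b:=D(\theta_1)$, $c:=D(p^{(2)})$, and from the proof of \Cref{lem:homogenous-patience-unbias} these are exactly $\lim_{\delta\to 0}\expect{N^{(0)}}=a$, $\lim_{\delta\to 0}\expect{N^{(1)}}=b$, $\lim_{\delta\to 0}\expect{N^{(2)}}=c$.

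The two easy estimators follow almost immediately from this type decomposition. For $\widehat{N^{(0)}}=C^{(0)}$: at a period with price $p^{(0)}$, types $B,C,D$ never purchase and type-$A$ buyers never linger past their arrival period, so $C_t\ind{p_t=p^{(0)}}=N_t\ind{p_t=p^{(0)}}$ holds deterministically; hence $\widehat{N^{(0)}}=N^{(0)}$, which is trivially unbiased. For $\widehat{N^{(2)}}=\tfrac1T\sum_t C_t$: as $\delta\to 0$, every buyer with $v\ge p^{(2)}$ (types $A,B,C$) almost surely purchases within the horizon, while the buyers still waiting when the experiment terminates occupy, in expectation, only $O(1)$ recent periods, which is negligible against $T\to\infty$; hence $\tfrac1T\sum_t C_t\to c=\lim_{\delta\to 0}\expect{N^{(2)}}$.

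The substantive step is computing $\lim_{\delta\to 0}\expect{C^{(1)}}$. I decompose the sales at $p^{(1)}$-periods into a type-$A$ part (which buys only same day, contributing $a$ per $p^{(1)}$-period in the limit) and a type-$B$ part (same-day arrivals together with buyers who arrived during the run of $p^{(0)}$-periods immediately preceding that $p^{(1)}$-period); types $C,D$ contribute nothing at $p^{(1)}$. For the type-$B$ part I use a flow-balance argument: a type-$B$ buyer who arrives on day $t$ purchases on the first day $\ge t$ whose price lies in $\{p^{(1)},p^{(2)}\}$, and since prices are i.i.d.\ and independent of the arrival process, that day carries price $p^{(1)}$ with probability $q_1/(q_1+q_2)$; combined with the a.s.\ convergence of the type-$B$ arrival rate to $b-a$, the long-run rate of type-$B$ purchases occurring at $p^{(1)}$-periods equals $(b-a)\,q_1/(q_1+q_2)$ per period, again up to negligible boundary terms. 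Dividing the resulting aggregate by $q_1$ gives $\lim_{\delta\to 0}\expect{C^{(1)}}=a+\tfrac{b-a}{q_1+q_2}=b+\tfrac{q_0}{q_1+q_2}(b-a)$. Since $\widehat{N^{(1)}}=q_0C^{(0)}+(q_1+q_2)C^{(1)}=C^{(1)}-q_0(C^{(1)}-C^{(0)})$ and $\lim_{\delta\to 0}\expect{C^{(0)}}=a$, substituting and using $1-q_0=q_1+q_2$ collapses everything to $\lim_{\delta\to 0}\expect{\widehat{N^{(1)}}}=b=\lim_{\delta\to 0}\expect{N^{(1)}}$, which completes the proof.

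The main obstacle is making the type-$B$ flow-balance step fully rigorous: one must cleanly separate same-day from accumulated sales at $p^{(1)}$-periods, express the accumulated type-$B$ mass as a sum over the (geometrically distributed) length of the preceding run of $p^{(0)}$-periods, control the start-up and shut-down boundary effects uniformly as $\delta\to 0$, and then interchange the almost-sure time averages with the expectation over the random horizon exactly as in \Cref{thm: multiplicative bias}. Everything else is bookkeeping with the thresholds $\theta_0,\theta_1,\theta_2$ and linearity of the estimator.
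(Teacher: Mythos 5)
Your proposal is correct and follows essentially the same route as the paper: it uses the limiting purchase thresholds from \Cref{corollary: bias in the limit}, notes that sales at $p^{(0)}$ are purely same-day and that total average sales converge to $D(\pstar-2\epsilon)$, and handles the medium price by splitting same-day purchases from waiting buyers who end up buying at $p^{(1)}$ with probability $q_1/(q_1+q_2)$ — your cohort/flow-balance phrasing yields $a+\tfrac{b-a}{q_1+q_2}$, which is algebraically identical to the paper's per-period accounting $b+\tfrac{q_0}{q_1+q_2}(b-a)$.
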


\begin{proof}[Proof of \Cref{lem:unbiased-no-arrival-time-homogenous-patience}]

    We first take the limit as $\delta\to 0$. For $\countsameday[0]$, again by \Cref{lem:waiting behavior},
    \begin{equation*}
        \lim_{\delta\to 0}\expect{\widehat{\countsameday[0]}} = \lim_{\delta\to 0} \expect{\frac{1}{T\lowprob_0}\sum_t C_t\ind{p_t=\pstar}}=D\left(\pstar+\frac{\patience}{1-\patience}(\lowprob_1+2\lowprob_2)\epsilon\right) = \expect{\countsameday[0]}.
    \end{equation*}
    
    Then, for $\widehat{\countsameday[2]}$, notice that all buyers with value higher than $\pstar-2\epsilon$ will eventually buy at some period as $\delta\to 0$. It follows that
    \begin{equation*}
        \lim_{\delta\to 0}\expect{\widehat{\countsameday[2]}}=\lim_{\delta\to 0}\expect{\frac{1}{T}\sum_t C_t}=\lim_{T\to \infty}\expect{\frac{1}{T}\sum_t D_t(\pstar-2\epsilon)}=D(\pstar-2\epsilon) = \expect{\countsameday[2]}.
    \end{equation*}

    For $\widehat{\countsameday[1]}$, the purchase mass $\frac{1}{T}\sum_t C_t\ind{p_t=\pstar-\epsilon}$ at price $\pstar-\epsilon$ consist of the following two sources.
    \begin{itemize}
        \item Buyers who purchase immediately, with value above $\pstar+\left(-1 + \frac{\patience}{1-\patience}\lowprob_2\right)\low$. The expected mass of such buyers is $q_1 D(\pstar+\left(-1 + \frac{\patience}{1-\patience}\lowprob_2\right)\low)$.
        \item Buyers who wait from a previous  price $\pstar$, with value 
        \begin{align*}
            v\in \left[\pstar+\left(-1 + \frac{\patience}{1-\patience}\lowprob_2\right)\low, \pstar+\frac{\patience}{1-\patience}(\lowprob_1+2\lowprob_2)\low\right).
        \end{align*}
        The expected mass of buyers who wait at $\pstar$ is 
        \begin{align*}
            q_0 \left[D\left(\pstar+\left(-1 + \frac{\patience}{1-\patience}\lowprob_2\right)\low\right) - D\left(\pstar+\frac{\patience}{1-\patience}(\lowprob_1+2\lowprob_2)\low\right)\right].
        \end{align*}
        Among the buyers who wait at $\pstar$, the expected mass of purchase at $\pstar-\epsilon$ is then a $\frac{q_1}{q_1+q_2}$ fraction of the mass above (in expectation, the other $\frac{q_2}{q_1+q_2}$ fraction purchase at price $\pstar-2\epsilon$).
    \end{itemize}
    Thus, the second part of $\hat{\countsameday[1]}$ as $\delta\to 0$ is

    \begin{align*}
        &\expect{\frac{\lowprob_1+\lowprob_2}{T\lowprob_1}\sum_t C_t\ind{p_t=\pstar-\epsilon}}=(\lowprob_1+\lowprob_2)D\left(\pstar-\epsilon+\frac{\patience}{1-\patience}\lowprob_2\low\right)\nonumber\\
&\quad+\frac{\lowprob_1+\lowprob_2}{\lowprob_1}\expect{\left[D\left(\pstar-\epsilon+\frac{\patience}{1-\patience}\lowprob_2\low\right)-D\left(\pstar+\frac{\patience}{1-\patience}(\lowprob_1+2\lowprob_2)\low)\right)\right]q_0\cdot\frac{q_1}{q_1+q_2}}\nonumber\\
&\qquad =(\lowprob_1+\lowprob_2)D\left(\pstar-\epsilon+\frac{\patience}{1-\patience}\lowprob_2\low\right)+\lowprob_0\left[D\left(\pstar-\epsilon+\frac{\patience}{1-\patience}\lowprob_2\low\right)-D\left(\pstar+\frac{\patience}{1-\patience}(\lowprob_1+2\lowprob_2)\low\right)\right]\nonumber\\
&\qquad=D\left(\pstar-\epsilon+\frac{\patience}{1-\patience}\lowprob_2\low\right)-\lowprob_0D\left(\pstar+\frac{\patience}{1-\patience}(\lowprob_1+2\lowprob_2)\low\right)
    \end{align*}
Thus we can conclude that $\widehat{\countsameday[1]}$ is unbiased:
    \begin{align*}
        &\quad\lim_{\delta\to 0}\expect{\widehat{\countsameday[1]}} \\
        &=\lim_{\delta\to 0}\expect{\frac{1}{T}\sum_t C_t\ind{p_t=\pstar} + \frac{\lowprob_1+\lowprob_2}{T\lowprob_1}\sum_t C_t\ind{p_t=\pstar-\epsilon}}\\
        &=\lowprob_0 D\left(\pstar+\frac{\patience}{1-\patience}(\lowprob_1+2\lowprob_2)\low\right) + D\left(\pstar-\epsilon+\frac{\patience}{1-\patience}\lowprob_2\low\right)-\lowprob_0D\left(\pstar+\frac{\patience}{1-\patience}(\lowprob_1+2\lowprob_2)\low\right)\nonumber\\
        &=D\left(\pstar-\epsilon+\frac{\patience}{1-\patience}\lowprob_2\low\right)=\expect{\countsameday[1]}.
    \end{align*}
\end{proof}

With the same logic as in Proof of \Cref{thm:unbiased}, when buyers have heterogenous patience level, we can prove \Cref{prop:unbiased-no-arrival-time}. 
\begin{proof}[Proof of \Cref{prop:unbiased-no-arrival-time}]
The expected same-day purchase can be decomposed into sub-population purchases:

\begin{equation}
    \expect{\countsameday[i]} = \int_\patience\expect{\countsameday[i]_\patience}\diff\patience, \tag{Fubini's Theorem}
\end{equation}
while
\begin{align*}
    \expect{\countsameday[0]_ \patience} = D_{\patience}\left(\pstar+\frac{\gamma}{1-\gamma}(\lowprob_1+2\lowprob_2)\low\right),\\
    \expect{\countsameday[1]_ \patience}=D_{\patience}\left(\pstar-\epsilon+\frac{\gamma}{1-\gamma}\lowprob_2\low\right),\\
    \expect{\countsameday[2]_\patience} = D_{\patience}\left(\pstar - 2\epsilon\right).
\end{align*}

Consider the sub-population purchase masses  $C_{t,\patience}$ that only consider purchases from buyers of patience level $\patience$. 
The estimator can be similaly decomposed into a sum of conditional estimators since $\expect{\sum_tC_t\ind{p_t=p_i}}=\int_\patience\expect{\sum_t C_{t,\patience}\ind{p_t=p_i}}\diff\patience$:
\begin{equation}
    \expect{\hat{\countsameday[i]}} = \int_\patience\expect{\hat{\countsameday[i]_\patience}}\diff\patience, \tag{Fubini's Theorem}
\end{equation}
where $\expect{\hat{\countsameday[i]_\patience}}$ are constructed in the same way as \Cref{prop:unbiased-no-arrival-time} from conditional purchase masses $C_{t|\patience}$. By \Cref{lem:unbiased-no-arrival-time-homogenous-patience}, $\expect{\hat{\countsameday[i]_\patience} }=\expect{\countsameday[i]_\patience}$. The conclusion then follows that the estimator is unbiased:
\begin{equation*}
\expect{\hat{\countsameday[i]}}=\expect{\countsameday[i]}, \forall i = 0, 1, 2.
\end{equation*}
    
\end{proof}

\subsection{Derivation of Simplified Version of Estimator with Unknown Arrival}\label{app:simple-derivation}
The simplified formula follows from the following algebraic manipulations:
\begin{align*}
    \hat{d}(\pstar) =~& \frac{q_2(C^{(1)} - C^{(2)}) - \frac{\lowprob_2}{\lowprob_1}((q_1 + q_2) (C^{(0)} - C^{(1)}) - q_2(C^{(1)} - C^{(2)}))}{\epsilon}\\
    =~& \frac{q_2((C^{(1)} - C^{(2)}) - (C^{(0)} - C^{(1)})) - \frac{\lowprob_2^2}{\lowprob_1}((C^{(0)} - C^{(1)}) - (C^{(1)} - C^{(2)}))}{\epsilon}\\
    =~& q_2 \left(1 + \frac{q_2}{q_1}\right) \frac{(C^{(1)} - C^{(2)}) - (C^{(0)} - C^{(1)})}{\epsilon}\\
    =~& q_2 \left(1 + \frac{q_2}{q_1}\right) \frac{2 C^{(1)} - C^{(0)} - C^{(2)}}{\epsilon}
\end{align*}

\section{Proofs of Results in \Cref{sec:static vs dynamic}}

\subsection{Proof of \Cref{prop: demand static vs dynamic}}

 \begin{proof}
In experiment $1$, as $\delta_1\to 0$, all buyers with value higher than $\price[K]$ will eventually purchase. Thus, $D_{s_1} = D(\price[K])$. Similarly for experiment $2$, $D_{s_2} = D(\price[K] + \alpha)$. Then, we can write:
\begin{align*}
    \lim_{\alpha\to 0} \frac{D_{s_2} - D_{s_1}}{\alpha} = \lim_{\alpha\to 0} \frac{D(\price[K] + \alpha) - D(\price[K])}{\alpha} = d(\price[K]).
\end{align*}
Subsequently, since $\price[K] = \pstar + \epsilon$:
\begin{align*}
    d_s(\price[K]) = \lim_{\epsilon\to 0}d(\pstar + \epsilon) = d(\pstar).
\end{align*}
 \end{proof}

\subsection{Proof of \Cref{prop:rev static vs dynamic}}

\begin{proof}
   To prove \Cref{prop:rev static vs dynamic}, it is important that the dynamic revenue gradient can be decomposed into static revenue gradients. Fixing a patience level $\patience$ of buyers, the conditional dynamic revenue gradient can be equivalently written as conditional static revenue gradient. 
   
   Consider all buyers with a fixed patience level $\patience$. 
   We write $D_{t|\patience}(\price)$ as the sub-population daily demand stemming from the sub-population of buyers with patience level $\patience$, i.e., it is the total mass of the buyers at period $t$ who have patience level $\patience$ and will purchase if the price is fixed to $\price$ over the horizon. Note that based on this definition $D_t(p) = \int D_{t\mid \patience}(p) d\patience$. We denote the sub-population sales count on day $t$ from buyers with patience $\patience$ by $C_{t|\patience}$. Note again that $C_t = \int C_{t\mid \patience} d\patience$.
   
    By \Cref{corollary: bias in the limit}, applied to experiment $1$
    \begin{align*}
        \lim_{\delta_1\to 0} C_{t|\patience} = D_{t|\patience}\left(p_t + \frac{\patience}{1-\patience}\mcQ_{s_1}(p_t)\mcE_{s_1}(p_t)\right).
    \end{align*}
    We can write the conditional revenue of experiment $1$ as $\delta_1\to 0$:
    \begin{equation*}
        \rev_{s_1|\patience} = \lim_{T\to \infty}\frac{1}{T}\sum_{t=1}^T \expect[\price_t\sim s_1]{D_{t|\patience}\left(p_t + \frac{\patience}{1-\patience}\mcQ_{s_1}(p_t)\mcE_{s_1}(p_t)\right)\cdot p_t},
    \end{equation*}
    where the expectation is taken over the randomness of the switchback prices. Similarly, we can define the corresponding quantities for $s_2$.

    One crucial observation is the because the local price perturbations in the two experiments are the same, it holds that:
    \begin{align}
        \mcQ_{s_1}(p^{(i)}) =~& \mcQ_{s_2}(p^{(i),2}) \triangleq \mcQ^{(i)} & 
        \mcE_{s_1}(p^{(i)}) =~& \mcE_{s_2}(p^{(i),2}) \triangleq \mcE^{(i)}
    \end{align}
    Moreover, $p^{(i),2}=p^{(i)} + \alpha$. Letting $\Lambda_{\gamma}^{(i)} = \frac{\gamma}{1-\gamma}\mcQ^{(i)} \cdot \mcE^{(i)}$, we can express the difference in conditional revenues at each period as:
    \begin{align}
    \E_{i\sim \mb{q}}\left[D_{t|\patience}\left(p^{(i)} + \alpha + \Lambda_{\gamma}^{(i)}\right)\cdot (p^{(i)} + \alpha) - D_{t|\patience}\left(p^{(i)} + \Lambda_{\gamma}^{(i)}\right)\cdot p^{(i)}\right]
    \end{align}

Thus, as $\alpha\to 0$, the revenue gradient under experiment is 
\begin{align*}
    &\lim_{\alpha\to 0}\lim_{T\to \infty}\frac{1}{T}\sum_{t=1}^T \int \expect[i\sim \mb{q}]{\frac{D_{t|\patience}\left(p^{(i)} + \alpha + \Lambda_{\gamma}^{(i)}\right)\cdot (p^{(i)} + \alpha) - D_{t|\patience}\left(p^{(i)} + \Lambda_{\gamma}^{(i)}\right)\cdot p^{(i)}}{\alpha}} d\patience\\
    = &\lim_{\alpha\to 0}\lim_{T\to \infty}\frac{1}{T}\sum_{t=1}^T \int \expect[i\sim \mb{q}]{\frac{D_{t|\patience}\left(p^{(i)} + \Lambda_{\gamma}^{(i)} + \alpha \right) - D_{t|\patience}\left(p^{(i)} + \Lambda_{\gamma}^{(i)}\right)}{\alpha} p^{(i)} + D_{t|\patience}\left(p^{(i)} + \Lambda_{\gamma}^{(i)} + \alpha\right)} d\patience\\
    = &\lim_{T\to \infty}\frac{1}{T}\sum_{t=1}^T \int \expect[i\sim \mb{q}]{d_{t|\patience}\left(p^{(i)} + \Lambda_{\gamma}^{(i)}\right) p^{(i)} + D_{t|\patience}\left(p^{(i)} + \Lambda_{\gamma}^{(i)}\right)} d\patience \tag{uniformly bounded derivatives}
\end{align*}
Subsequently, as $\epsilon\to 0$, we also have that $\price[i] \to \pstar$ and $\Lambda_{\gamma}^{(i)}\to 0$ uniformly over $i$. Moreover, since the per-period demand has uniformly bounded second order derivatives, we can conclude that:
\begin{align*}
    \drev_s(\pstar) =~& \lim_{\epsilon\to 0}\lim_{\alpha\to 0}\frac{\rev_{s_2} - \rev_{s_1}}{\alpha}\\
    =~& \lim_{\epsilon\to 0}\lim_{T\to \infty}\frac{1}{T}\sum_{t=1}^T \int \expect[i\sim \mb{q}]{d_{t|\patience}\left(p^{(i)} + \Lambda_{\gamma}^{(i)}\right) p^{(i)} + D_{t|\patience}\left(p^{(i)} + \Lambda_{\gamma}^{(i)}\right)} d\patience\\
    =~& \lim_{T\to \infty}\frac{1}{T}\sum_{t=1}^T \int d_{t|\patience}\left(\pstar\right) \pstar + D_{t|\patience}\left(\pstar\right) d\patience\\
    =~& \lim_{T\to \infty}\frac{1}{T}\sum_{t=1}^T d_{t}\left(\pstar\right) \pstar + D_{t}\left(\pstar\right) = \pstar d(\pstar) + D(\pstar) = r(\pstar)
\end{align*}

\end{proof}

\end{document}